\documentclass[letterpaper, twocolumn]{IEEEtran}

\usepackage[style=ieee, giveninits=true, maxnames=99]{biblatex}
\addbibresource{IEEEabrv.bib}
\addbibresource{refs.bib}

\usepackage{amsmath}
\usepackage{physics}
\usepackage{xargs}
\usepackage{xifthen}
\usepackage{dsfont}
\usepackage{mathtools}

\usepackage[amsmath,standard,thmmarks,hyperref]{ntheorem}

\theoremclass{theorem}
\theoremseparator{.}
\renewtheorem{theorem}{Theorem}
\renewtheorem{lemma}[theorem]{Lemma}
\renewtheorem{corollary}[theorem]{Corollary}
\renewtheorem{proposition}[theorem]{Proposition}

\theoremclass{definition}
\theoremseparator{.}
\renewtheorem{definition}[theorem]{Definition}

\newtheorem{assumption}[theorem]{Assumption}

\theoremclass{example}
\theoremseparator{.}
\renewtheorem{example}[theorem]{Example}

\theoremclass{proof}
\theoremseparator{:}
\renewtheorem{proof}{Proof}

\usepackage{graphicx}
\usepackage{microtype}
\graphicspath{{./images/}}
\usepackage[caption=false,font=footnotesize]{subfig}

\usepackage[text=cicero]{lipsum}

\usepackage[dvipsnames]{xcolor}
\usepackage[unicode,bookmarks,colorlinks,breaklinks]{hyperref}
\hypersetup{
    linkcolor=NavyBlue,
    citecolor=OliveGreen,
    filecolor=Black,
    urlcolor=NavyBlue
}
\usepackage[nameinlink,capitalize]{cleveref}
\crefname{assumption}{Assumption}{Assumptions}
\Crefname{assumption}{Assumption}{Assumptions}

\usepackage[shortlabels]{enumitem}

\usepackage[most]{tcolorbox}


\usepackage{tikz}
\usetikzlibrary{calc}
\usetikzlibrary{positioning}
\usetikzlibrary{arrows.meta}
\usetikzlibrary{fit}

\usepackage{silence}
\WarningsOff[tcolorbox]



\DeclarePairedDelimiter\aparen{\lparen}{\rparen}
\DeclarePairedDelimiter\abrack{\lbrack}{\rbrack}
\DeclarePairedDelimiter\abrace{\lbrace}{\rbrace}

\DeclarePairedDelimiter\afloor{\lfloor}{\rfloor}

\renewcommand{\pqty}[1]{\aparen*{#1}}
\renewcommand{\bqty}[1]{\abrack*{#1}}
\renewcommand{\Bqty}[1]{\abrace*{#1}}

\newcommand{\floor}[1]{\afloor*{#1}}

\newcommand{\up}{\mathrm}
\newcommand{\ca}{\mathcal}

\newcommand{\bb}{\mathbb}

\newcommand{\R}{\bb{R}}

\newcommand{\N}{\bb{N}}
\newcommand{\Z}{\bb{Z}}

\newcommand{\upe}{\up{e}}

\newcommand{\gib}{\,|\,}

\DeclareMathOperator*{\Prb}{\bb{P}}
\renewcommand{\Pr}[1]{\Prb\pqty{#1}}

\DeclareMathOperator*{\E}{\bb{E}}
\newcommand{\Exp}[2][]{\ifthenelse{\isempty{#1}}{\E\bqty{#2}}
    {\E_{#1}\bqty{#2}}}

\newcommand{\dilog}[1]{\operatorname{Li_2}\pqty{#1}}

\DeclareMathOperator*{\argmax}{\arg\max}

\newcommand{\imgwidth}{0.4\linewidth}
\newcommand{\wT}{w_{\tau}}
\newcommand{\wH}{w_{\up{H}}}
\newcommand{\GT}{G_{\tau}}
\newcommand{\GH}{G_{\up{H}}}
\newcommand{\bsc}{\textsf{BSC}}

\title{The Linear Reliability Channel}

\author{Alexander Mariona, Ken R.~Duffy, Muriel M\'{e}dard}
\author{
\IEEEauthorblockN{%
Alexander Mariona\IEEEauthorrefmark{1}, %
Ken R.~Duffy\IEEEauthorrefmark{2}, and %
Muriel M\'{e}dard\IEEEauthorrefmark{1}}

\IEEEauthorblockA{\IEEEauthorrefmark{1}%
Research Laboratory of Electronics, %
Massachusetts Institute of Technology, %
Cambridge, MA, USA}

\IEEEauthorblockA{\IEEEauthorrefmark{2}%
Dept.~of Mathematics and Dept.~of ECE, %
Northeastern University, %
Boston, MA, USA\\
E-mail: amariona@mit.edu, k.duffy@northeastern.edu, medard@mit.edu}
}


\begin{document}

\maketitle

\begin{abstract}
We introduce and analyze a discrete soft-decision channel called the linear
reliability channel (LRC) in which the soft information is the rank ordering of
the received symbol reliabilities. We prove that the LRC is an appropriate
approximation to a general class of discrete modulation, continuous noise
channels when the noise variance is high. The central feature of the LRC is
that its combinatorial nature allows for an extensive mathematical analysis of
the channel and its corresponding hard- and soft-decision maximum likelihood
(ML) decoders. In particular, we establish explicit error exponents for ML
decoding in the LRC when using random codes under both hard- and soft-decision
decoding. This analysis allows for a direct, quantitative evaluation of the
relative advantage of soft-decision decoding. The discrete geometry of the LRC
is distinct from that of the BSC, which is characterized by the Hamming weight,
offering a new perspective on code construction for soft-decision settings.
\end{abstract}

\begin{IEEEkeywords}
maximum likelihood decoding, error exponents, soft-decision decoding, channel
coding
\end{IEEEkeywords}

\section{Introduction}

Error correction decoding algorithms are broadly divisible into hard-decision
and soft-decision decoders \cite{Gal08}. Hard-decision decoders are algorithms
that take as input only bits, whereas soft-decision decoders also make use of
side information, referred to as soft information, quantifying the likelihood
that each bit is correct. The standard form of soft information per bit is the
log-likelihood ratio (LLR) of the hypotheses that the transmitted bit is 0 or 1
given the channel output.

Ordered Reliability Bits Guessing Random Additive Noise Decoding (ORBGRAND) is
a code-agnostic, soft-decision decoding algorithm \cite{DWM22} that has
recently been shown to be almost capacity-achieving for the real-valued
additive white Gaussian noise channel \cite{Liu+22} and to be practically
feasible via efficient hardware implementation, both in synthesis
\cite{Abb+22,Con22,Ji+25,Xia+23} and silicon \cite{Ria+25}. Subsequent
theoretical work has explored algorithmic modifications to approach the
performance of ML soft-decision decoding while maintaining efficiency
\cite{WYZ25,WZ24} and has studied the achievable rate of ORBGRAND in more
general settings \cite{LZ24}.

Motivated by these developments, this work formalizes the fundamental
algorithmic insight of ORBGRAND, the approximation of the sorted magnitudes of
the received LLRs by a linear function, into a channel model for which this
linear behavior is exact. For this channel, which we call the linear
reliability channel (LRC), ORBGRAND is a true maximum-likelihood (ML)
soft-decision decoder. A key feature of the LRC is that it is a discrete
soft-decision channel in which the soft information is combinatorial and
sufficiently structured to allow for a complete mathematical analysis of the
maximum-likelihood decoding, both hard- and soft-decision. The behavior of the
LRC is aligned with a general family of continuous-noise channels at low
signal-to-noise ratios, where decoding performance is most relevant. In the
LRC, the received bit reliabilities, i.e., the magnitudes of the LLRs of the
received bits, are linearly increasing when subject to a random permutation.
The soft information is, therefore, the permutation for a given channel use,
and the knowledge of that permutation suffices for an exact ML decoding.
Intrinsically connected with the LRC and its ML decoder is a statistic called
the logistic weight, which is analogous to the Hamming weight in the context of
the BSC. The noise level in the LRC is parameterized by the slope of the linear
increase in reliabilities, and this slope plays an analogous role to the
bit-flip probability in a BSC. When the slope is large, most bit are
transmitted reliability, whereas a significant portion are unreliable when the
slope is small.

We derive closed-form, computable error exponents for both hard- and
soft-decision ML decoding in the LRC. In order to do so, we leverage the
mathematical framework of large deviations and guesswork, as introduced in
\cite{Ari00,MS04,PS04,HS10,CD12}. At a high level, we show that the
guesswork process for the noise in the LRC satisfies a large deviation
principle (LDP) in both the hard- and soft-decision settings. Having
established these LDPs, we utilize the formulation of the channel coding
theorem presented in \cite{DLM19}, which results in explicit expressions for
the error and success exponents, under the assumption that the code book is
chosen uniformly at random. These exponents show that, in the large block
length limit, soft-decision decoding strictly outperforms hard-decision
decoding in the LRC. This analysis allows for a quantitative evaluation of the
performance difference between hard- and soft-decision decoding at any code
rate and any noise level.


\section{Overview of Results}

We present here an outline of the sequel, summarizing the main results and
offering intuitive interpretations of the more technical statements.

\Cref{sec:model} defines the LRC and presents its key properties. We show in
\cref{sec:approx} how the LRC can be viewed as an approximation to binary-input
channels with independent additive noise described by a symmetric, strictly
log-concave, and sufficiently smooth ``location-scale'' distribution at lower
signal-to-noise ratios (\cref{thm:lrc-approx}). Examples of such distributions
include the normal, logistic, and Laplace distributions. This approximation
justifies the linear reliability phenomenon as being a suitable foundation of a
general framework for soft-decision decoding. A key consequence of linear
reliabilities is that the logistic weight (\cref{def:lw}) of a binary sequence
is the characteristic statistic for soft-decision decoding, in the same way
that the Hamming way is characteristic for hard-decision decoding.
\Cref{sec:weight} catalogues some basic properties of the logistic weight and
presents a recent number theoretic result due to Bridges \cite{Bri20} that
allows us to determine an accurate approximation to the number of sequences of
length $n$ and logistic weight $w$ (\cref{thm:bridges}). This approximation is
used in a manner akin to Stirling's approximation to the binomial coefficient.

\Cref{sec:ml} introduces the soft-decision (\cref{thm:ml-soft}) and
hard-decision ML decoders (\cref{thm:ml-hard}) for the LRC. We analyze these
algorithms by interpreting them as executions of Guessing Random Additive Noise
Decoding (GRAND), a family of code-agnostic channel decoding algorithms
\cite{DLM19,DWM22} based on the information theoretic concept of guesswork
\cite{Mas94,Ari96,MS04}. This viewpoint allows us to leverage a unified
mathematical framework, the theory of large deviations, for the analysis of the
probability of a decoding error. A secondary benefit is that this perspective
clearly highlights the role of the logistic weight in soft-decision decoding
and that of the Hamming weight in hard-decision decoding for the LRC.

\Cref{sec:ldp} establishes large deviation principles (LDPs) for the exponent
of the number of guesses made by the soft-decision
(\cref{thm:scgf-soft,thm:rate}) and hard-decision ML decoders
(\cref{thm:scgf-hard,thm:rate}) in the LRC. An LDP is the key analytical tool
in large deviations theory \cite{Tou09,DZ10,Var84,DS01}, and the techniques we
employ to establish these LDPs developments from work on the large deviations
of guesswork \cite{Ari00,HS10,CD12}. For our purposes, an LDP can be
intuitively understood as quantitatively describing the exponential decay of
the probability that a sequence of random variables has a realization which is
a ``large deviation'' from its typical value. This decay rate is the asymptotic
limiting rate as the parameter value tends towards infinity. The number of
guesses a decoder makes is closely related to the probability that the decoding
is correct when using a random code. Thus, the LDPs proven in this section
describe the key properties of the ML decoders for understanding their decoding
behavior. In the process of proving the guesswork LDPs, we establish a key
property of the noise distribution in the LRC, namely, that the R\'enyi entropy
of the noise is always lower after conditioning on the soft information
(\cref{lm:renyi-order}). This is shown to imply that the capacity of the LRC is
strictly higher under soft-decision decoding compared to hard-decision
decoding.

\Cref{sec:error} leverages those LDPs to establish error exponents, for the
probability of incorrectly decoding below capacity, and success exponents, for
the probability of correctly decoding above capacity, for both the soft- and
hard-decision ML decoders. These follow from the large-deviations channel
coding theorem for random codes as formulated in \cite{DLM19}, in contrast to
the direct techniques for discrete memoryless channels dating back to Shannon,
Gallager, and Berlekamp \cite{Gal68,SGB67,Ber02}. In addition to providing both
error and success exponents together, the large-deviations approach leads to a
natural interpretation of the critical rate, the point at which the error
exponent transitions from being linear to strictly convex. First observed by
Gallager \cite{Gal65}, this phenomenon lacked an intuitive interpretation in
terms of actual decoder behavior. In the context of the LRC, we show that the
critical rate always occurs earlier for hard-decision decoding than for
soft-decision decoding (\cref{pr:rcr}). Combined with the ordering of R\'{e}nyi
entropies, this suffices to prove that the error and success exponents are
always better under soft-decision decoding in the LRC (\cref{pr:soft-win}). By
comparing the error exponents for the LRC to those for the BSC, we also give a
heuristic interpretation of how ``noisy'' the LRC is at particular noise
parameter values (\cref{fig:bsc-lrc}). Roughly speaking, when the slope of the
reliabilities in the LRC is on the order of $10^x$, the hard-decision error
exponent is comparable to the BSC error exponent for $p=10^{-(x+1)}$.

We provide concluding thoughts in \cref{sec:conclusion}, followed by two
appendices that contain proofs that are deferred due to their length.
\Cref{apx:scgf-hard} details the proof of \cref{thm:scgf-hard}, relating to the
LDP for hard-decision guesswork. \Cref{apx:d-order} states and proves
\cref{lm:scgf-d-order}, which shows that the critical rate is lower under
hard-decision decoding.




\section{The Channel Model}\label{sec:model}

Throughout, we use the following notational conventions. The natural logarithm
is denoted by $\ln$ and the base-2 logarithm is denoted by $\log_2$. The set of
integers from 1 to $n$ is denoted by $[n]$. The set of all permutations of
$[n]$ is denoted by $S_n$. The Hamming weight of a binary sequence $x$ is
denoted by $\wH(x)$. Probability mass functions (PMFs) of discrete random
variables are denoted by lowercase $p$ and probability density functions (PDFs)
of continuous random variables are denoted by lowercase $f$.

\begin{figure}
\centering
\begin{tikzpicture}[node distance=2cm anchor=center, auto]
    \tikzset{myn/.style={draw, rectangle, align=center, minimum width=2.25cm}}
    \tikzset{big arrow/.style={-{Latex[scale=1]}}}
    \newcommand{\vs}{-1.1}
    \newcommand{\bscstep}{1}

    \node (X1)    at (0, 0*\vs) {$X_1$};
    \node (X2)    at (0, 1*\vs) {$X_2$};
    \node[fit={(X1.west) (X1.east)}, inner sep=0pt]
        (dots1) at (0, 2*\vs) {$\vdots$};
    \node (Xn)    at (0, 3*\vs) {$X_n$};

    \node (tauX1) at (1.5, 0*\vs) {$T(X)_1$};
    \node (tauX2) at (1.5, 1*\vs) {$T(X)_2$};
    \node[fit={(tauX1.west) (tauX1.east)}, inner sep=0pt]
        (dots2) at (1.5, 2*\vs) {$\vdots$};
    \node (tauXn) at (1.5, 3*\vs) {$T(X)_n$};

    \node[myn] (BSC1) at (4, 0*\vs) {\bsc$(q_1)$ \\ $\up{Rel}=\beta/n$};
    \node[myn] (BSC2) at (4, 1*\vs) {\bsc$(q_2)$ \\ $\up{Rel}=2\beta/n$};
    \node[myn] (BSCn) at (4, 3*\vs) {\bsc$(q_n)$ \\ $\up{Rel}=\beta$};
    \node     (dots3) at (4, 2*\vs) {$\vdots$};

    \node (tauY1) at (6.5, 0*\vs) {$T(Y)_1$};
    \node (tauY2) at (6.5, 1*\vs) {$T(Y)_2$};
    \node[fit={(tauY1.west) (tauY1.east)}, inner sep=0pt]
        (dots4) at (6.5, 2*\vs) {$\vdots$};
    \node (tauYn) at (6.5, 3*\vs) {$T(Y)_n$};

    \node (Y1)    at (8, 0*\vs) {$Y_1$};
    \node (Y2)    at (8, 1*\vs) {$Y_2$};
    \node[fit={(Y1.west) (Y1.east)}, inner sep=0pt]
        (dots5) at (8, 2*\vs) {$\vdots$};
    \node (Yn)    at (8, 3*\vs) {$Y_n$};

    \draw[big arrow] (X1.east)    -- (tauX2.west);
    \draw[big arrow] (X2.east)    -- (tauXn.west);
    \draw[big arrow] (Xn.east)    -- (tauX1.west);

    \draw[big arrow] (tauX1) -- (BSC1);
    \draw[big arrow] (tauX2) -- (BSC2);
    \draw[big arrow] (tauXn) -- (BSCn);

    \draw[big arrow] (BSC1) -- (tauY1);
    \draw[big arrow] (BSC2) -- (tauY2);
    \draw[big arrow] (BSCn) -- (tauYn);

    \draw[big arrow] (tauY1.east) -- (Yn.west);
    \draw[big arrow] (tauY2.east) -- (Y1.west);
    \draw[big arrow] (tauYn.east) -- (Y2.west);

    \node at ($ (X1.east)!0.5!(tauX1.west) + (0, 0.5)$) {$T$};
    \node at ($ (tauY1.east)!0.5!(Y1.west) + (0, 0.5)$) {$T^{-1}$};
\end{tikzpicture}

\caption{The linear reliability channel. The reliability ordering permutation
$T\in S_n$ is chosen uniformly at random with each channel use. The input bit
$X_i$ is received through a BSC with bit-flip probability $q_{T(i)}$
(\cref{eq:lrc-q}), resulting in a reliability of $T(i)\beta/n$.}
\label{fig:lrc}
\end{figure}
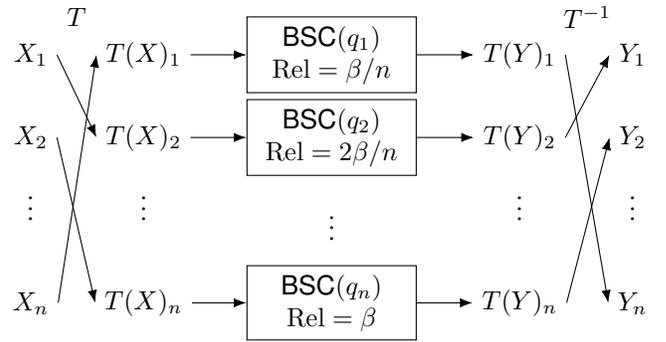

The fundamental and defining property of the LRC is the fact that the
magnitudes of the log-likelihood ratios, which we refer to as the
\emph{reliabilities}, of the received symbols are linearly increasing under
some permutation. Formally, the LRC with noise parameter $\beta\in(0,\infty)$
takes as input $X^n\in\Bqty{0,1}^n$ and outputs $Y^n\in\Bqty{0,1}^n$ according
to the bitwise distribution
\begin{equation}\label{eq:lrc-p}
    p_{Y^n_i \gib X^n_i, T}(y\gib x,\tau) = \begin{cases}
        q_{\tau(i)} & y \neq x, \\
        1-q_{\tau(i)} & y = x,
    \end{cases}
\end{equation}
where $T\in S_n$ is the \emph{reliability ordering permutation} and the
associated bit-flip probabilities are, for $i\in[n]$,
\begin{equation}\label{eq:lrc-q}
    q_i = \frac{\upe^{-\beta i/n}}{1+\upe^{-\beta i/n}} \in (0,1/2).
\end{equation}
The permutation $T$ is sampled uniformly at random with each channel use. It is
straightforward to verify that the reliability of $Y^n_i$ is $\beta T(i)/n$.
In the context of the LRC, the difference between hard- and soft-decision
decoding amounts to whether or not the decoder is aware of the reliability
ordering permutation $T$. Given $T$, the decoder knows the (magnitudes) of the
LLR for each symbol. Without knowledge of $T$, the decoder only knows that $T$
is uniformly distributed.

Another way of distinguishing between the soft- and hard-decision settings is
to compare the effective distributions of the noise effect, i.e., the binary
sequence $X^n+Y^n$. Since each received symbol is equally unreliable to the
hard-decision decoder, all noise effects of the same Hamming weight are
equiprobable. Alternatively, because the soft-decision decoder knows the
reliability of each symbol, the probability of a given noise effect depends on
where the bit flips occur. In particular, it depends on a statistic called the
logistic weight \cite{DWM22}.

\begin{definition}\label{def:lw}
The \emph{logistic weight} of a sequence $x\in\Bqty{0,1}^n$ with respect to a
permutation $\tau\in S_n$ is
\begin{equation*}
    \wT(z) = \sum_{i:\,\tau(z)_i=1} i.
\end{equation*}
\end{definition}

We denote the soft-decision noise effect by $N^n$, i.e., the binary sequence
distributed according to posterior distribution of $X^n+Y^n$ given $T$, and the
hard-decision noise effect by $Z^n$, i.e., the sequence distributed according
to the corresponding prior distribution, assuming $T$ is uniformly distributed.
The following pair of propositions give the PMFs for these two distributions.
The PMF of $N^n$ is a function of the logistic weight with respect to $T$.

\begin{lemma}\label{lm:soft-pmf} In the LRC with parameter $\beta$, the
soft-decision noise effect $N^n$ has PMF \begin{equation*} p_{N^n}(x) = \frac{
\upe^{-\beta \wT(x) / n} } {\prod_{i=1}^n \pqty{1+\upe^{-\beta i/n}}},
\end{equation*}
where $\tau\in S_n$ is the realization of the reliability ordering permutation
for the given channel use.
\end{lemma}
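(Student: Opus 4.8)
The plan is to read the PMF of $N^n$ off directly from the bitwise channel law. Fix the realization $\tau\in S_n$ of the reliability ordering permutation $T$. Conditioned on $T=\tau$, the coordinates of the noise effect $N^n = X^n + Y^n$ are mutually independent, and by \eqref{eq:lrc-p} the $i$th coordinate equals $1$ (i.e., $Y^n_i\neq X^n_i$) with probability $q_{\tau(i)}$ and $0$ with probability $1-q_{\tau(i)}$. Hence, for any $x\in\Bqty{0,1}^n$,
\begin{equation*}
    p_{N^n}(x) = \prod_{i=1}^n q_{\tau(i)}^{\,x_i}\pqty{1-q_{\tau(i)}}^{1-x_i}.
\end{equation*}

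Next I would substitute the explicit bit-flip probabilities \eqref{eq:lrc-q}. Using $q_j = \upe^{-\beta j/n}/\pqty{1+\upe^{-\beta j/n}}$ and $1-q_j = 1/\pqty{1+\upe^{-\beta j/n}}$, each coordinate contributes the common denominator $1+\upe^{-\beta\tau(i)/n}$, and those coordinates with $x_i=1$ additionally contribute the numerator factor $\upe^{-\beta\tau(i)/n}$. This yields
\begin{equation*}
    p_{N^n}(x) = \frac{\upe^{-(\beta/n)\sum_{i:\,x_i=1}\tau(i)}}{\prod_{i=1}^n\pqty{1+\upe^{-\beta\tau(i)/n}}}.
\end{equation*}
Because $\tau$ is a bijection of $[n]$, the denominator equals $\prod_{j=1}^n\pqty{1+\upe^{-\beta j/n}}$, which is independent of $\tau$ and matches the claimed normalizing constant.

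The last step is to recognize the exponent $\sum_{i:\,x_i=1}\tau(i)$ as the logistic weight $\wT(x)$ of \cref{def:lw}. Since the permutation acts on a binary sequence by moving the bit in position $j$ to position $\tau(j)$, the support of $\tau(x)$ is $\Bqty{\tau(j):x_j=1}$, so $\wT(x) = \sum_{i:\,\tau(x)_i=1} i = \sum_{j:\,x_j=1}\tau(j)$; substituting gives the statement. There is no substantive obstacle here---the result is essentially immediate from the channel definition---so the only point requiring care is the bookkeeping for how $T$ permutes sequences, ensuring the exponent comes out as $\wT(x)$ (consistent with the factor $q_{\tau(i)}$ appearing at input coordinate $i$) rather than as $w_{\tau^{-1}}(x)$.
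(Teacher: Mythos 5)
Your proof is correct and follows essentially the same route as the paper's: write the conditional PMF of $N^n$ as a product of independent per-bit factors, substitute the explicit form of $q_j$, observe the normalizer is permutation-invariant, and identify the exponent $\sum_{i:x_i=1}\tau(i)$ with $\wT(x)$. The one small difference is that you keep the product indexed by the original coordinates and re-index only at the end (explicitly verifying $\wT(x)=\sum_{j:x_j=1}\tau(j)$, the point you rightly flag as the only subtlety), whereas the paper re-indexes to $\prod_{i:\tau(x)_i=1}q_i\,\prod_{i:\tau(x)_i=0}(1-q_i)$ at the outset and then reads off $\wT(x)$ directly from its definition after taking logarithms; these are notational variations of the same argument.
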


\begin{proof}
By \cref{eq:lrc-p,eq:lrc-q},
\begin{equation*}
    p_{N^n}(x) = \prod_{i:\,\tau(x)_i = 1} q_i
        \prod_{i:\,\tau(x)_i = 0} (1-q_i).
\end{equation*}
Taking the logarithm,
\begin{align*}
    \ln p_{N^n}(x) &= \sum_{i:\,\tau(x)_i = 1} \ln q_i
        + \sum_{i:\,\tau(x)_i = 0} \ln(1-q_i) \\
    &= -\frac{\beta \wT(x)}{n} - \sum_{i=1}^n \ln(1+\upe^{-\beta i/n}).
\end{align*}
Exponentiating yields the desired expression.
\end{proof}

The PMF of $Z^n$ is given by averaging over all possible realizations of $T$.

\begin{lemma}\label{lm:hard-pmf}
In the LRC with parameter $\beta$, the hard-decision noise effect $Z^n$ has PMF
\begin{equation*}
    p_{Z^n}(x) = \frac{a^n_k(\beta)}{\binom{n}{k}
    \prod_{i=1}^n \pqty{1+\upe^{-\beta i/n}}},
\end{equation*}
where $k=\wH(x)$ and $a^n_k(\beta)$ denote the degree-$k$ elementary symmetric
polynomial in the $n$ variables $\upe^{-\beta i/n}$ for $i\in[n]$,
\begin{equation*}
    a^n_k(\beta) = \sum_{1\leq i_1<\dots<i_k\leq n}
    \upe^{-\beta(i_1+\dots+i_k)/n}.
\end{equation*}
\end{lemma}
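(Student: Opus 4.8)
The plan is to compute the PMF of $Z^n$ by averaging the soft-decision PMF from \cref{lm:soft-pmf} uniformly over all permutations $\tau \in S_n$, since $Z^n$ is by definition the prior distribution obtained when $T$ is uniform. For a fixed $x$ with $\wH(x) = k$, we have
\begin{equation*}
    p_{Z^n}(x) = \frac{1}{n!} \sum_{\tau \in S_n}
    \frac{\upe^{-\beta \wT(x)/n}}{\prod_{i=1}^n (1+\upe^{-\beta i/n})},
\end{equation*}
so everything reduces to evaluating $\sum_{\tau \in S_n} \upe^{-\beta \wT(x)/n}$.

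The key step is to recognize that $\wT(x) = \sum_{i:\,\tau(x)_i = 1} i$ depends on $\tau$ only through which $k$ positions the $1$s of $x$ are mapped to. As $\tau$ ranges over $S_n$, the image set $\{j : \tau(x)_j = 1\}$ ranges over all $k$-subsets of $[n]$, and each $k$-subset $\{i_1 < \dots < i_k\}$ is hit exactly $k!(n-k)!$ times (the $1$s can be permuted among their target positions in $k!$ ways, and the $0$s among theirs in $(n-k)!$ ways). Therefore
\begin{equation*}
    \sum_{\tau \in S_n} \upe^{-\beta \wT(x)/n}
    = k!(n-k)! \sum_{1 \leq i_1 < \dots < i_k \leq n}
    \upe^{-\beta(i_1 + \dots + i_k)/n}
    = k!(n-k)!\, a^n_k(\beta),
\end{equation*}
which is exactly the definition of the degree-$k$ elementary symmetric polynomial in the $\upe^{-\beta i/n}$.

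Dividing by $n!$ gives the factor $k!(n-k)!/n! = 1/\binom{n}{k}$, and combining with the denominator from \cref{lm:soft-pmf} yields the claimed expression. There is no real obstacle here; the only point requiring care is the counting multiplicity $k!(n-k)!$, i.e., verifying that the map $\tau \mapsto \{j : \tau(x)_j = 1\}$ is a uniformly $(k!(n-k)!)$-to-one surjection onto the $k$-subsets of $[n]$, which is a standard double-counting argument. One should also note explicitly that the result depends on $x$ only through $k = \wH(x)$, confirming that $Z^n$ assigns equal probability to all sequences of the same Hamming weight, as anticipated in the discussion preceding the lemma.
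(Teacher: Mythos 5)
Your proposal is correct and follows essentially the same route as the paper: average the soft-decision PMF from \cref{lm:soft-pmf} uniformly over $\tau \in S_n$, observe that the map $\tau \mapsto \tau(x)$ (equivalently, $\tau \mapsto \{j : \tau(x)_j = 1\}$) is a uniform $k!(n-k)!$-to-one surjection onto Hamming-weight-$k$ sequences (equivalently, $k$-subsets of $[n]$), and identify the resulting sum as $k!(n-k)!\,a^n_k(\beta)$. The only cosmetic difference is that the paper phrases the counting in terms of weight-$k$ sequences while you phrase it in terms of $k$-subsets, which is an identical bookkeeping.
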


\begin{proof}
Let $k=\wH(x)$. Since $T$ is uniformly distributed,
\begin{align}
\nonumber
    p_{Z^n}(x) &= \frac{1}{n!} \sum_{\tau\in S_n} p_{N^n\gib T}(x\gib \tau) \\
\label{eq:pmf-ank-sub}
    &= \frac{\sum_{\tau\in S_n} \upe^{-\beta \wT(x) /n}}
    {n! \prod_{i=1}^n \pqty{1+\upe^{-\beta i/n}}}.
\end{align}
As $\tau$ ranges over $S_n$, the sequence $\tau(x)$ takes on the value of each
Hamming weight $k$ sequence exactly $k!(n-k)!$ times. Letting $W_k$ denote the
set of Hamming weight $k$ sequences,
\begin{align}
\nonumber
    \sum_{\tau\in S_n} \upe^{-\beta \wT(x) /n} &= k!(n-k)! \sum_{w\in W_k}
    \upe^{-(\beta/n)\sum_{i=1}^n i w_i} \\
\label{eq:pmf-ank}
    &= k!(n-k)! a^n_k(\beta).
\end{align}
Noting that $n!=\binom{n}{k}k!(n-k)!$, substituting \cref{eq:pmf-ank} into
\cref{eq:pmf-ank-sub} yields the desired expression.
\end{proof}

\subsection{The LRC as an Approximation}\label{sec:approx}

\begin{figure*}
\centering
{\hfill%
\subfloat[normal distribution]
    {\includegraphics[width=\imgwidth]{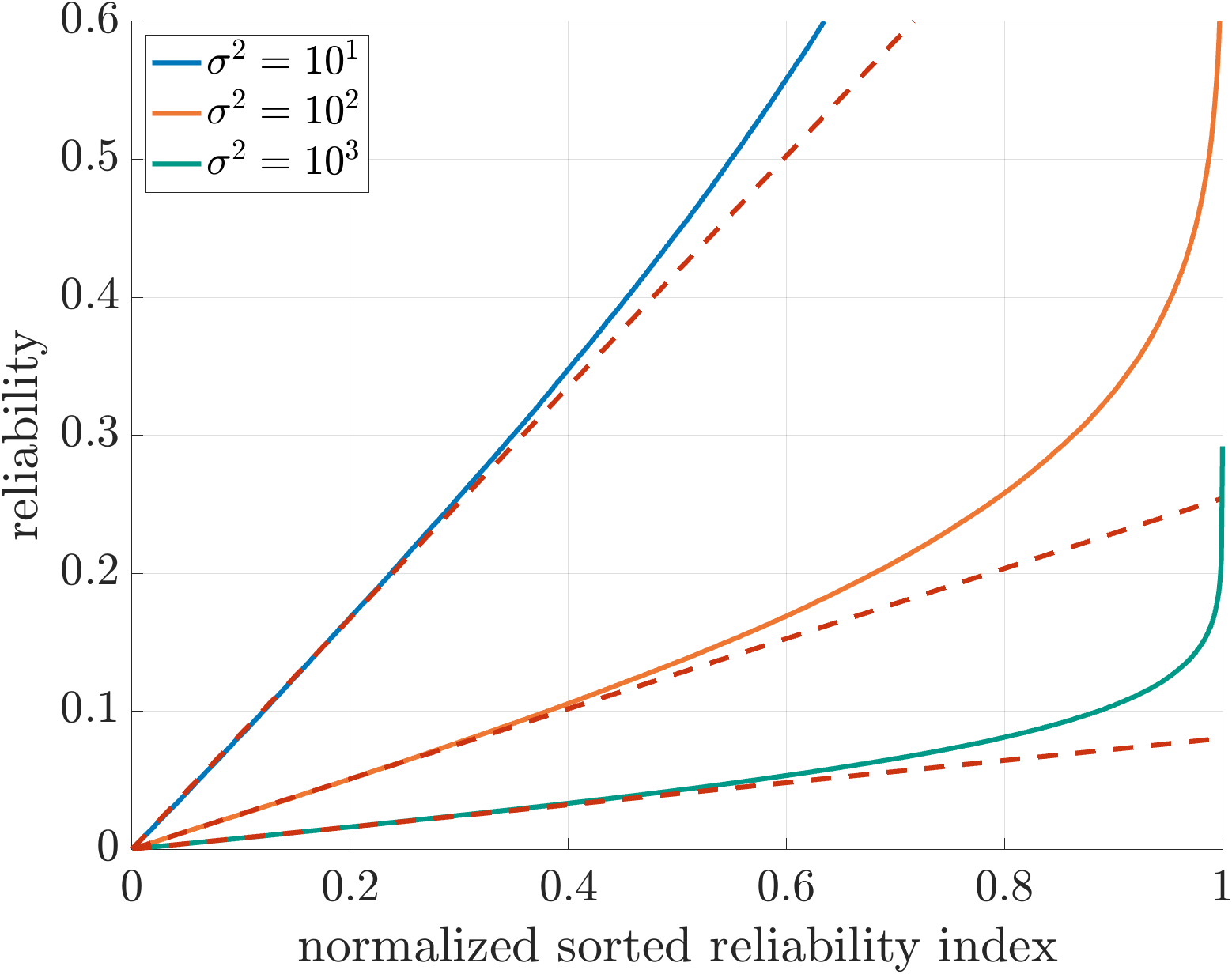}}%
\hfill%
\subfloat[logistic distribution]
    {\includegraphics[width=\imgwidth]{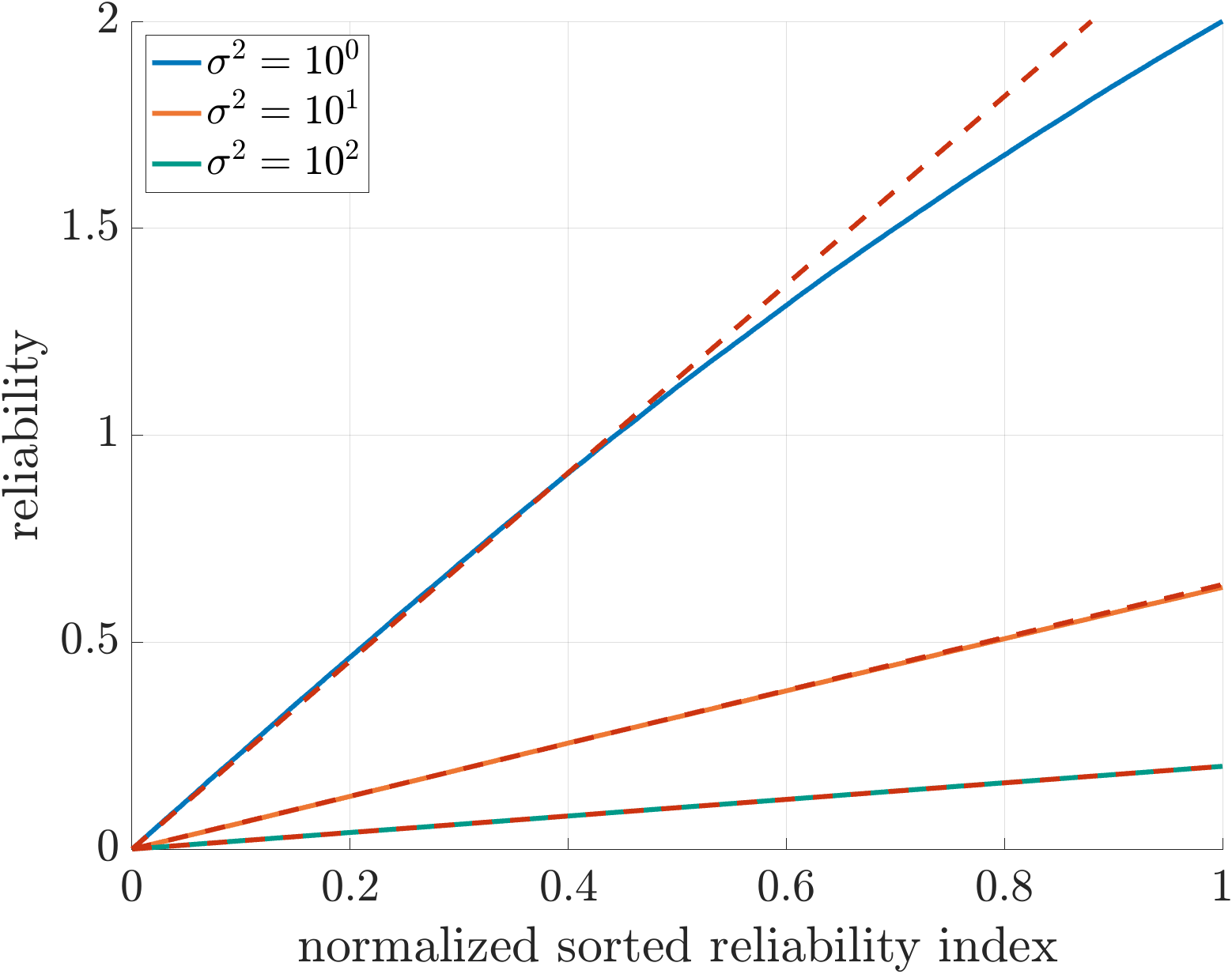}}%
\hfill}
\caption{Empirical sample of $n=2^{18}$ reliabilities, sorted in increasing
order, for noise following the normal and logistic distributions with mean 0
and variance $\sigma^2$. The horizontal axis is normalized by $n$. The red
dotted lines are hand-picked linear approximations showing that the sorted
reliabilities are initially approximately linear increasing. For both of these
noise distributions, the linear approximation is better over a wider range as
$\sigma^2$ increases.}
\label{fig:exp-llr-a}
\end{figure*}

\begin{figure*}
\centering
{\hfill%
\subfloat[laplace distribution (initial range)]
    {\includegraphics[width=\imgwidth]{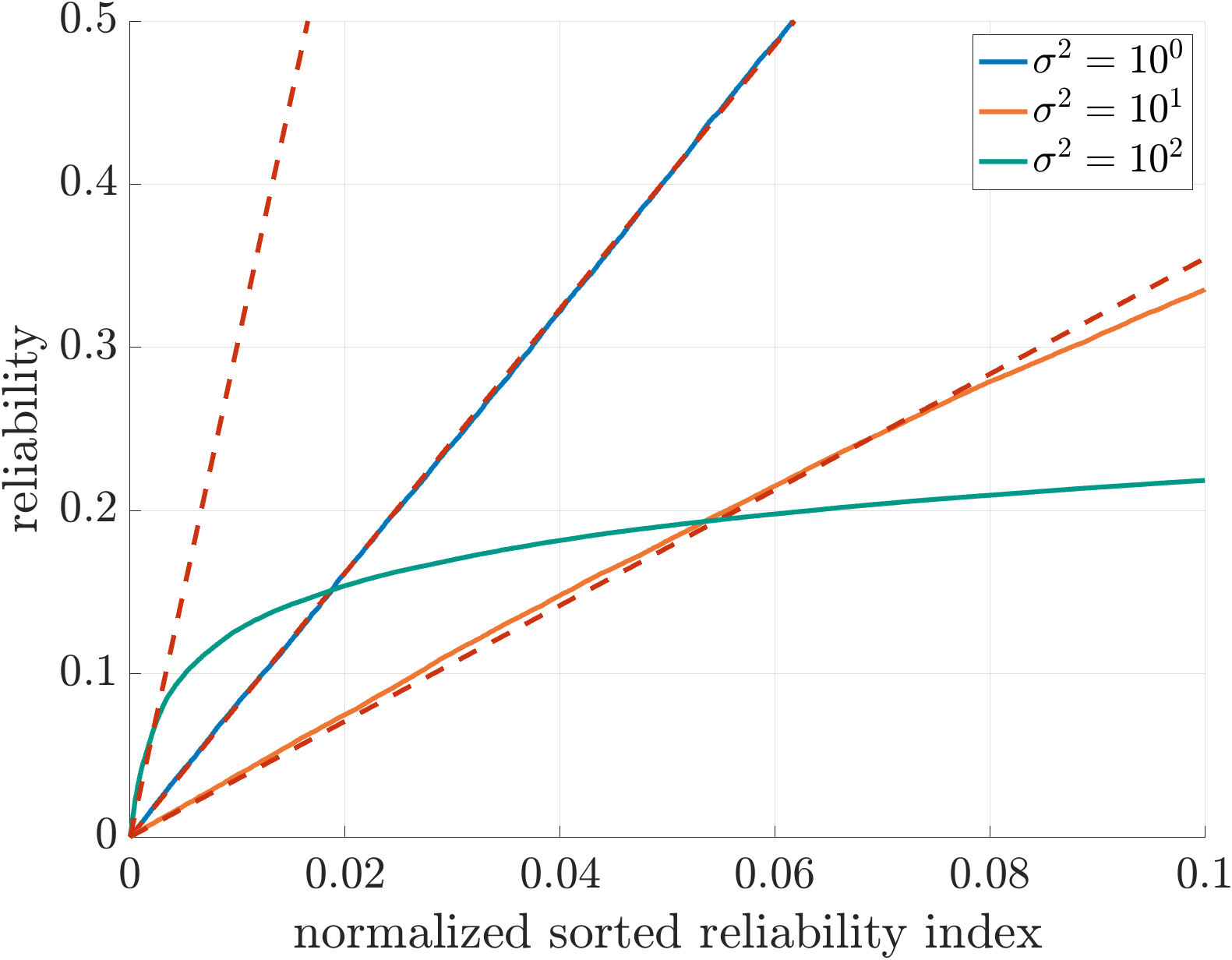}}%
\hfill%
\subfloat[Laplace distribution (full range)]
    {\includegraphics[width=\imgwidth]{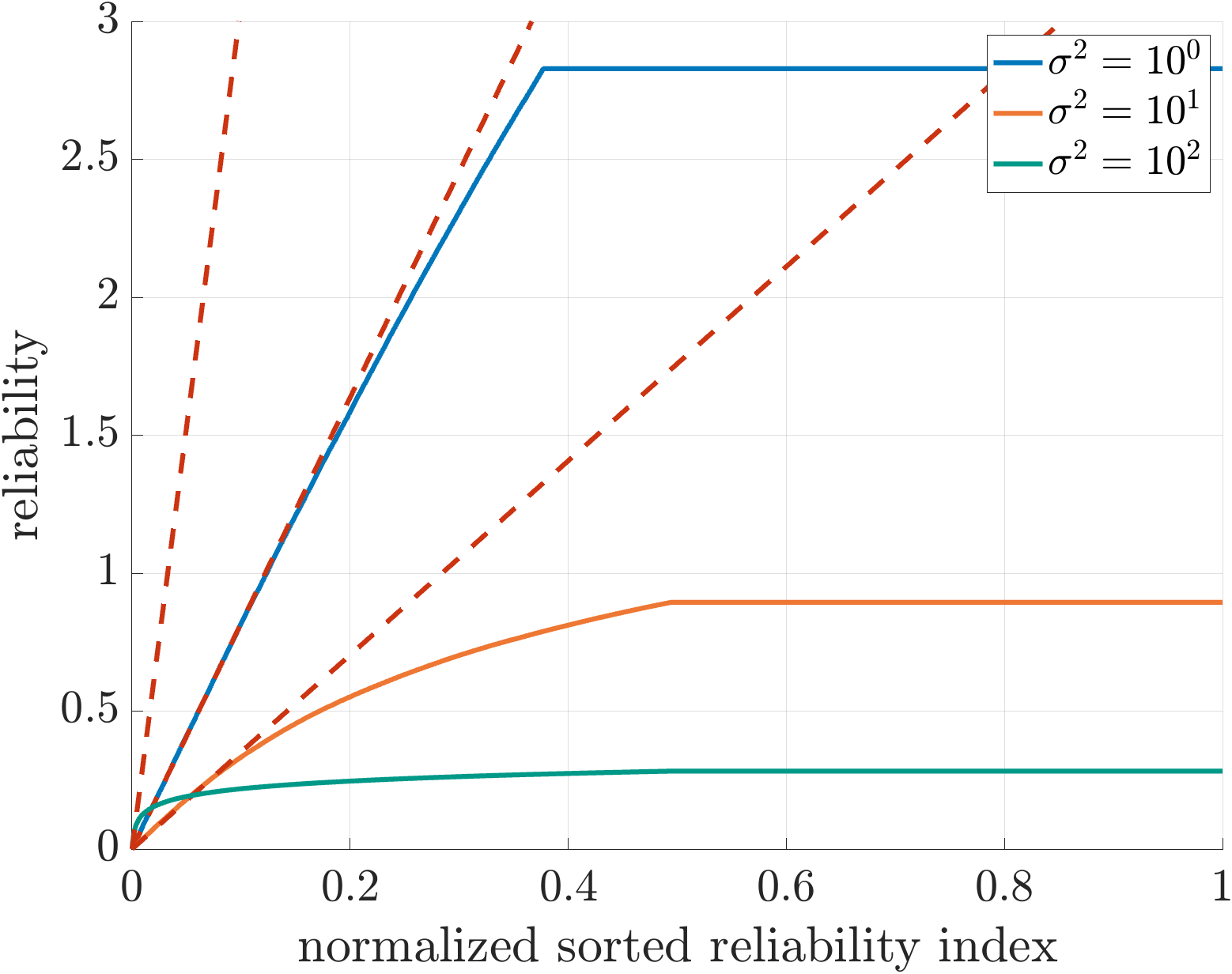}}%
\hfill}
\caption{Empirical sample of $n=2^{18}$ reliabilities, sorted in increasing
order, for noise following the Laplace distributions with mean 0 and variance
$\sigma^2$. The horizontal axis is normalized by $n$. The red dotted lines are
hand-picked linear approximations showing that the sorted reliabilities are
initially approximately linear increasing. The Laplace distribution is notable
for inducing reliabilities which are constant beyond given index after sorting.
Over an initial range prior to this transition, the reliabilities are
nonetheless approximately linearly increasing, although this range does shrink
as $\sigma^2$ grows.}
\label{fig:exp-llr-b}
\end{figure*}

The LRC captures the behavior of a wide range of channel noises and is usually
a better approximation as the noise variance increases. Informally, for a
general class of noise distributions, the reliabilities from a given block
transmission (of any length) are, over some initial range, roughly linearly
increasing when sorted in increasing order. This phenomenon is readily observed
in practice with multiple common noise distributions
(\cref{fig:exp-llr-a,fig:exp-llr-b}). The behavior of the most reliable symbols
depends more specifically on the particular noise distribution.

In this subsection we prove that, for a general class of ``location-scale''
noise distributions, the LLR PDF is asymptotically, as the noise variance
grows, linearly increasing in a neighborhood around zero. This implies that the
reliability CDF is also asymptotically linearly increasing over a (one-sided)
neighborhood around zero. It follows from a standard result on order statistics
that the sorted reliabilities must then be asymptotically, now as the block
length grows, given by the inverse reliability CDF \cite{DN04}. Thus, the
sorted reliabilities are asymptotically, in both block length and noise
variance, initially linearly increasing.


We consider an additive noise channel with binary input $X\in\Bqty{-1,+1}$,
continuous noise $N\in\R$, and continuous output $Y\in\R$ given by $Y=X+N$,
We assume that $X$ is uniformly distributed, such that
\begin{equation*}
    f_Y(y) = \frac{1}{2}\bqty{ f_N(y-1) + f_N(y+1) },
\end{equation*}
and that the noise satisfies the following assumptions.
\begin{assumption}\label{asm:noise}
The noise $N$ has a PDF of the form
\begin{equation*}
    f_N(x) = \frac{1}{\sigma}f_0\pqty{\frac{x}{\sigma}},
\end{equation*}
where $\sigma>0$ is the standard deviation of $N$ and $f_0$ is an even,
strictly log-concave, and $\ca{C}^4$ density.
\end{assumption}
Notable examples of distributions satisfying \cref{asm:noise} include the
normal distribution, the Laplace distribution, and the uniform distribution.
Note that evenness of $f_0$ implies that $N$ has mean zero.

Let $L\in\R$ be the LLR of the channel output $Y$ and let $\phi:\R\to\R$ be the
defined such that $L=\phi(Y)$, i.e., $\phi(y)$ is the LLR corresponding to the
output $y$. Under \cref{asm:noise},
\begin{equation*}
    \phi(y) = \ln f_0\pqty{\frac{y-1}{\sigma}} -
    \ln f_0\pqty{\frac{y+1}{\sigma}}.
\end{equation*}
Since $f_0$ is strictly log-concave, it follows that $\phi$ is monotonically
increasing and thus that the inverse $\phi^{-1}$ exists. Let $h:\R\to\R$ be
defined such that $f_L(l) = h(\phi^{-1}(l))$. In particular, $h(y) = f_Y(y) /
\phi'(y)$.

The following theorem formalizes the heuristic that the sorted reliabilities
are initially approximately linearly increasing under \cref{asm:noise}.

\begin{theorem}\label{thm:lrc-approx}
If $N$ satisfies \cref{asm:noise}, then, as $\sigma\to\infty$ and
for $\epsilon=\ca{O}(\sigma^{-3/2})$,
\begin{equation*}
    \sup_{\abs{l}\leq \epsilon} \abs{ f_L(l) - f_L(0) } = \ca{O}(1).
\end{equation*}
\end{theorem}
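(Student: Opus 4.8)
The plan is to change variables from $l$ to $y = \phi^{-1}(l)$. Since $f_L(l) = h(\phi^{-1}(l))$ with $h = f_Y/\phi'$ and $\phi(0) = 0$ (by evenness of $f_0$), we have $\sup_{|l|\le\epsilon} |f_L(l) - f_L(0)| = \sup_{|l|\le\epsilon}|h(\phi^{-1}(l)) - h(0)|$, so it suffices to understand $h$ near $y = 0$. Write $g := \ln f_0$, which near the origin is $\ca{C}^4$, even, and strictly concave (by \cref{asm:noise}); hence $g'(0) = 0$ and $-g''$ is continuous, strictly positive, and bounded away from $0$ and $\infty$ on a fixed neighbourhood of the origin. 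Set $\delta := 1/\sigma$, the small parameter. The first step is to pin down the preimage of $[-\epsilon,\epsilon]$: expanding $\phi(y) = g(\delta y - \delta) - g(\delta y + \delta) = -2\delta\,g'(\delta y) + \ca{O}(\delta^3)$ and using $g'(0) = 0$ together with the two-sided bound on $-g''$, one gets $c_1\delta^2|y| \le |\phi(y)| \le c_2\delta^2|y|$ for $|\delta y|$ below a fixed constant, with $0 < c_1 \le c_2 < \infty$. Since $\phi$ is odd and strictly increasing, $|l| \le \epsilon = \ca{O}(\sigma^{-3/2})$ therefore forces $|\phi^{-1}(l)| \le K\sqrt\sigma$ for a constant $K$ depending only on $f_0$ and on the implied constant in $\epsilon = \ca{O}(\sigma^{-3/2})$, once $\sigma$ is large. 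It thus suffices to bound $\sup_{|y|\le K\sqrt\sigma} |h(y) - h(0)|$.

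On that window $u := \delta y$ ranges over $|u| \le K\sigma^{-1/2} \to 0$, so all derivatives of $f_0$ and of $g$ up to order four are uniformly bounded there; this is the only use of the $\ca{C}^4$ hypothesis. The second step expands $h$. From $f_Y(y) = \tfrac{1}{2\sigma}[f_0(u-\delta) + f_0(u+\delta)]$, a fourth-order Taylor expansion in $\delta$ kills the odd-order terms and gives $f_Y(y) = \sigma^{-1} f_0(u)(1 + \ca{O}(\delta^2))$; similarly $\phi'(y) = \sigma^{-1}[g'(u-\delta) - g'(u+\delta)] = -2\sigma^{-2} g''(u)(1 + \ca{O}(\delta^2))$. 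Dividing, $h(y) = \tfrac{\sigma}{2}\,\psi(u)\,(1 + \ca{O}(\delta^2))$ with the relative error uniform over the window, where $\psi := f_0/(-g'')$ is an even $\ca{C}^2$ function, so $\psi'(0) = 0$ and $|\psi(u) - \psi(0)| = \ca{O}(u^2)$ near $0$. In particular $h(0) = \tfrac{\sigma}{2}\psi(0)(1 + \ca{O}(\delta^2))$ grows like $\sigma$, which makes an $\ca{O}(1)$ fluctuation the relevant (small) target.

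The third step subtracts: $h(y) - h(0) = \tfrac{\sigma}{2}(\psi(u) - \psi(0)) + \tfrac{\sigma}{2}\,\ca{O}(\delta^2)$, where the second term uses that $\psi(u)$ and $\psi(0)$ are bounded and the relative errors uniform. The second term is $\ca{O}(\sigma\delta^2) = \ca{O}(\sigma^{-1})$, and the first is $\ca{O}(\sigma u^2) = \ca{O}(\sigma\delta^2 y^2) = \ca{O}(y^2/\sigma) = \ca{O}(K^2) = \ca{O}(1)$ on $|y| \le K\sqrt\sigma$ — this is exactly where the scaling $\epsilon = \ca{O}(\sigma^{-3/2})$ is consumed, since it is precisely what makes $\sigma \cdot (\text{window width}/\sigma)^2$ bounded. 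Taking the supremum over the window and pulling back through $\phi^{-1}$ yields $\sup_{|l|\le\epsilon} |f_L(l) - f_L(0)| = \ca{O}(1)$. I expect the main obstacle to be the bookkeeping that makes this rigorous: keeping every Taylor remainder uniform over the $\sigma$-dependent window $|u| \le K\sigma^{-1/2}$, and tracking which error terms survive multiplication by the diverging prefactor $\sigma/2$. The structural facts used — $\phi$ odd and strictly increasing, $\phi(0) = 0$, and $-g''$ bounded away from $0$ near the origin — all come directly from evenness and strict log-concavity of $f_0$ in \cref{asm:noise}.
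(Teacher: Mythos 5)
Your proof is correct, but it takes a genuinely different route from the paper's. The paper works in $l$-space throughout: it uses evenness of $f_L$ to kill the linear Taylor term, bounds $\sup_{|l|\le\epsilon}|f_L(l)-f_L(0)|$ by $\tfrac12\sup_{|l|\le\epsilon}|f''_L(l)|\,\epsilon^2$, reduces to $|f''_L(0)|$ by continuity of $f''_L$ and smallness of $\epsilon$, and then proves a separate lemma (\cref{lm:o3}) showing $f''_L(0)=\ca{O}(\sigma^3)$ by implicitly differentiating $f_L = h\circ\phi^{-1}$ twice and Taylor-expanding $f_0$ and its derivatives at $\sigma^{-1}$; the conclusion is $\ca{O}(\sigma^3\epsilon^2)=\ca{O}(1)$. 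You instead push the entire computation to $y$-space via $y=\phi^{-1}(l)$, bound the preimage window to $|y|\le K\sqrt\sigma$ using a two-sided estimate on $\phi$, and expand $h(y)=\tfrac\sigma2\psi(\delta y)(1+\ca{O}(\delta^2))$ with $\psi=f_0/(-g'')$ even $\ca{C}^2$; the quadratic decay of $\psi$ near the origin combines with the $\sqrt\sigma$ window to give $\ca{O}(1)$.

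Each approach has trade-offs. The paper's route yields the explicit leading coefficient for $f''_L(0)$ in terms of $f_0(0)$ and $f''_0(0)$, which would matter if one wanted a sharp constant rather than $\ca{O}(1)$; but its step replacing $\sup_{|l|\le\epsilon}|f''_L(l)|$ by $2|f''_L(0)|$ ``by continuity'' brushes past the fact that both $f''_L$ and the interval $[-\epsilon,\epsilon]$ vary with $\sigma$, so what is really being used is a uniform second-order Taylor remainder estimate on a shrinking window. Your $y$-space version makes the window explicit and phrases every estimate as a uniform bound over $|u|=|\delta y|\le K\sigma^{-1/2}$, which puts that uniformity concern front and center; you correctly identify it as the place where bookkeeping is needed, and it is the same bookkeeping the paper implicitly requires. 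Your derivation of the $|y|\le K\sqrt\sigma$ window from $c_1\delta^2|y|\le|\phi(y)|$, which you get by applying the mean value theorem twice (once in $y$, once over the $\pm\delta$ offsets) to pin $\phi'(\eta)=-2\delta^2 g''(\xi)$, is sound given that $-g''$ is continuous and strictly positive, hence bounded away from $0$ and $\infty$ on a fixed compact neighborhood of $0$. The expansion $h(y)=\tfrac\sigma2\psi(u)(1+\ca{O}(\delta^2))$ and the subtraction $h(y)-h(0)=\tfrac\sigma2(\psi(u)-\psi(0))+\ca{O}(\sigma\delta^2)$ are correct, as is the identification of $\epsilon=\ca{O}(\sigma^{-3/2})$ as precisely the scaling that balances $\sigma\cdot(\text{window}/\sigma)^2$. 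In short: correct, with a cleaner separation of the two scales than the paper's proof, at the cost of not producing an explicit constant.
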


The fact that \cref{thm:lrc-approx} requires that $\epsilon$ goes to 0 does not
imply that the sorted reliabilities are only linearly increasing over a range
which is negligible for large $\sigma$. Intuitively, the typical reliability is
also decreasing as $\sigma$ grows, as is readily observable in
\cref{fig:exp-llr-a,fig:exp-llr-b}. This implies that the range over which the
reliability CDF is meaningfully less than 1 is also decreasing, and it is
precisely this regime which is treated by \cref{thm:lrc-approx}. However, the
CDF also becomes more linear over that regime as it shrinks, and the LRC
generally becomes a better approximation as a result. The qualitative
difference in behavior for the Laplace distribution, illustrated in
\cref{fig:exp-llr-b}, is due to the fact that the reliabilities induced by that
noise distribution are constant after a given point. Since the transition point
is decreasing with $\sigma$, this is one example of a noise distribution for
which the LRC is a worse approximation as as $\sigma$ increases.

The key ingredient in the proof of \cref{thm:lrc-approx} is the following
lemma describing the concavity of $f_L$ at $0$.

\begin{lemma}\label{lm:o3}
If $N$ satisfies \cref{asm:noise}, then $f''_L(0) = \ca{O}\pqty{\sigma^{3}}$ as
$\sigma\to\infty$.
\end{lemma}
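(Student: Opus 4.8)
The plan is to compute $f''_L(0)$ by localizing everything at the single point $y=0$: since $f_L = h\circ\phi^{-1}$ and $h = f_Y/\phi'$, the value $f''_L(0)$ depends only on $f_Y$, $\phi$ and their first three derivatives at $0$, and each of those has a clean $\sigma\to\infty$ expansion obtained by Taylor-expanding $f_0$ and $g := \ln f_0$ about the origin. The evenness of $f_0$ (hence of $g$) forces a number of cancellations that are precisely what makes the bound come out to $\mathcal{O}(\sigma^3)$ rather than something larger.

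First I would fix the base point. Evenness of $g$ gives $\phi(0) = g(-1/\sigma) - g(1/\sigma) = 0$, so $\phi^{-1}(0)=0$. Writing $\psi := \phi^{-1}$, differentiating $\phi(\psi(l)) = l$ twice gives $\psi'(0) = 1/\phi'(0)$ and $\psi''(0) = -\phi''(0)/\phi'(0)^3$; since $g''$ is even we have $\phi''(0) = \sigma^{-2}[g''(-1/\sigma) - g''(1/\sigma)] = 0$, hence $\psi''(0) = 0$. The chain rule applied to $f_L = h\circ\psi$ then collapses to $f''_L(0) = h''(0)/\phi'(0)^2$. Next, differentiating $h = f_Y\cdot(\phi')^{-1}$ twice and using $\phi''(0)=0$ together with $f'_Y(0) = \tfrac{1}{2}\sigma^{-2}[f_0'(-1/\sigma) + f_0'(1/\sigma)] = 0$ (oddness of $f_0'$) leaves only two surviving terms, $h''(0) = f''_Y(0)/\phi'(0) - f_Y(0)\,\phi'''(0)/\phi'(0)^2$.

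The remaining work is asymptotic bookkeeping using $f_Y(y) = \tfrac{1}{2\sigma}[f_0(\tfrac{y-1}{\sigma}) + f_0(\tfrac{y+1}{\sigma})]$ and $\phi(y) = g(\tfrac{y-1}{\sigma}) - g(\tfrac{y+1}{\sigma})$. Evaluating at $0$ and Taylor-expanding $f_0,g$ about $0$ (with $f_0'(0) = g'(0) = 0$ and $f_0'''(0) = g'''(0) = 0$ by oddness) yields $f_Y(0) = \mathcal{O}(\sigma^{-1})$, $f''_Y(0) = \mathcal{O}(\sigma^{-3})$, $\phi'''(0) = \mathcal{O}(\sigma^{-4})$, and $\phi'(0) = -2\sigma^{-2}g''(0) + o(\sigma^{-2})$, the last being $\asymp \sigma^{-2}$ because $g''(0) < 0$ (a consequence of strict log-concavity, the same property that makes $\phi$ monotone). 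Substituting, both terms of $h''(0)$ are $\mathcal{O}(\sigma^{-1})$ while $\phi'(0)^{-2} = \Theta(\sigma^{4})$, so $f''_L(0) = \mathcal{O}(\sigma^{-1})\cdot\mathcal{O}(\sigma^4) = \mathcal{O}(\sigma^3)$.

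The main obstacle is really just the second step: one has to organize the quotient- and chain-rule expansions so that the parity cancellations ($\phi''(0)=0$, $f'_Y(0)=0$) are applied before estimating magnitudes, because a naive term-by-term bound produces several contributions that are individually larger than $\mathcal{O}(\sigma^3)$ after division by $\phi'(0)^2$ and only vanish thanks to evenness. Along the way one should also confirm the relevant derivatives exist: $f_0\in\mathcal{C}^4$ and $f_0>0$ give $g\in\mathcal{C}^4$, hence $\phi,f_Y\in\mathcal{C}^4$; $\phi'>0$ then gives $h\in\mathcal{C}^3$ and $\psi\in\mathcal{C}^4$, so $f_L\in\mathcal{C}^3$ in a neighborhood of $0$ and $f''_L(0)$ is well defined.
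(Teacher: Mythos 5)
Your proof follows essentially the same route as the paper's: fix $y(0)=\phi^{-1}(0)=0$, reduce $f''_L(0)$ via the chain and quotient rules to $h''(0)/\phi'(0)^2$ and then to $f_Y,\phi$ and their first few derivatives at $0$, exploit parity to kill the cross terms, and Taylor-expand about the origin to read off the orders $f_Y(0)=\mathcal{O}(\sigma^{-1})$, $f''_Y(0)=\mathcal{O}(\sigma^{-3})$, $\phi'(0)=\Theta(\sigma^{-2})$, $\phi'''(0)=\mathcal{O}(\sigma^{-4})$. The only differences are cosmetic — you kill the second chain-rule term via $\psi''(0)=0$ where the paper uses $h'(0)=0$, and you bookkeep in terms of $g=\ln f_0$ rather than $f_0$ — and your explicit remark that one must have $g''(0)<0$ (equivalently $f_0''(0)\neq0$) to lower-bound $\phi'(0)$, and hence the denominator, is a point the paper passes over silently; note, though, that strict log-concavity of a $\mathcal{C}^4$ density by itself does not force $g''(0)<0$ (e.g.\ $g$ could vanish to fourth order at the origin), so this is really an additional mild regularity requirement rather than a consequence of \cref{asm:noise}.
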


\begin{proof}
For conciseness, we abuse notation and write $y(l) = \phi^{-1}(l)$. We have
that $y'(l) = 1/\phi'(y(l))$, and the first and second derivatives of $f_L(l) =
h(y(l))$ are
\begin{align*}
    f'_L(l) &= \frac{ h'(y(l)) }{ \phi'(y(l)) }, \\
    f''_L(l) &= \frac{ h''(y(l)) }{ \bqty{\phi'(y(l))}^2 } -
        \frac{ h'(y(l)) \phi''(y(l))}{ \bqty{\phi'(y(l))}^3 }.
\end{align*}

Since $\phi$, and hence $y$, are odd, $y(0)=0$. Since $h$ is even,
$h'(y(0))=h'(0)=0$. Since $\phi$ is monotonically increasing, $\phi'(0)>0$.
Together, these imply that
\begin{equation}\label{eq:f2l-raw}
    f''_L(0) = \frac{ h''(0) }{ \bqty{\phi'(0)}^2 }
\end{equation}
The second derivative of $h(y)=f_Y(y)/\phi'(y)$ is
\begin{equation*}
\begin{split}
    h''(y) &= \frac{ f''_Y(y) }{ \phi'(y) }
        - \frac{ 2 f'_Y(y) \phi''(y) }{ \bqty{\phi'(y)}^2 } \\
        &\quad - \frac{ f_Y(y)\phi'''(y) }{ \bqty{\phi'(y)}^2 }
        + \frac{2 f_Y(y)\bqty{\phi''(y)}^2 }{ \bqty{\phi'(y)}^3 }.
\end{split}
\end{equation*}
Since $f_Y$ is even and $\phi$ is odd,
\begin{equation*}
    h''(0) = \frac{ f''_Y(0) }{ \phi'(0) }
        - \frac{ f_Y(0) \phi'''(0) }{ \bqty{\phi'(0)}^2 }.
\end{equation*}
Substituting into \cref{eq:f2l-raw}
\begin{equation}\label{eq:f2l-clean}
    f''_L(0) = \frac{ f''_Y(0)\phi'(0) - f_Y(0)\phi'''(0) }
        { \bqty{\phi'(0)}^4 }.
\end{equation}
Noting that $f_0$ is even, the quantities appearing in \cref{eq:f2l-clean} are
\begin{align*}
    f_Y(0) &= \sigma^{-1}f_0\pqty{\sigma^{-1}}, \\
    f''_Y(0) &= \sigma^{-3}f''_0\pqty{\sigma^{-1}}, \\
    \phi'(0) &= -2\sigma^{-1} \bqty{
        \frac{ f'_0\pqty{\sigma^{-1}} } { f_0\pqty{\sigma^{-1}} }
    }, \\
\begin{split}
    \phi'''(0) &= -2\sigma^{-3}\left[
        \frac{ f'''_0\pqty{\sigma^{-1}} }{ f_0\pqty{\sigma^{-1}} }
        - \frac{ 3 f''_0\pqty{\sigma^{-1}} f'_0\pqty{\sigma^{-1}} }
            { \bqty{f_0\pqty{\sigma^{-1}}}^2 } \right. \\
        &\qquad\qquad\qquad \left.+ \frac{ 2\bqty{f'_0\pqty{\sigma^{-1}}}^3 }
            { \bqty{f_0\pqty{\sigma^{-1}}}^3 } \right].
\end{split}
\end{align*}
The Taylor expansions of $f_0$ and its derivatives at $y=0$, evaluated at
$\sigma^{-1}$, are
\begin{align*}
    f_0\pqty{\sigma^{-1}} &= f_0(0) + \frac{1}{2}\sigma^{-2}f''_0(0) +
        \ca{O}\pqty{\sigma^{-4}}, \\
    f'_0\pqty{\sigma^{-1}} &= \sigma^{-1}f''_0(0) +
        \ca{O}\pqty{\sigma^{-3}}, \\
    f''_0\pqty{\sigma^{-1}} &= f''_0(0) + \ca{O}\pqty{\sigma^{-2}}, \\
    f'''_0\pqty{\sigma^{-1}} &= \sigma^{-1}f^{(4)}(0) +
        \ca{O}\pqty{\sigma^{-3}}.
\end{align*}
Correspondingly,
\begin{align*}
    f_Y(0) &= \sigma^{-1}f_0(0) + \ca{O}\pqty{\sigma^{-3}}, \\
    f''_Y(0) &= \sigma^{-3}f''_0(0) + \ca{O}\pqty{\sigma^{-5}}, \\
    \phi'(0) &= -2\sigma^{-2}\bqty{\frac{f''_0(0)}{f_0(0)}} +
        \ca{O}\pqty{\sigma^{-4}}, \\
    \phi'''(0) &= \ca{O}\pqty{\sigma^{-4}}.
\end{align*}
Substituting into \cref{eq:f2l-clean}, the numerator is
$\ca{O}\pqty{\sigma^{-5}}$ and the denominator is $\ca{O}\pqty{\sigma^{-8}}$.
Thus, $f''_L(0) = \ca{O}\pqty{\sigma^{3}}$.
\end{proof}

\begin{proof}[of \cref{thm:lrc-approx}]
The evenness of $f_L$ implies that the second-order Taylor expansion around
$l=0$ is simply $f_L(l) = f_L(0) + \ca{O}(l^2)$, with the approximation error
over the interval $[-\epsilon,\epsilon]$ bounded by
\begin{equation*}
    \sup_{\abs{l}\leq\epsilon} \abs{f_L(l) - f_L(0)} \leq \frac{1}{2}
    \sup_{\abs{l}\leq \epsilon} \abs{f''_L(l)} \epsilon^2.
\end{equation*}
Since $f''_L(l)$ is continuous at $l=0$ and $\epsilon\to 0$ as
$\sigma\to\infty$, we have, for $\sigma$ sufficiently large,
\begin{equation*}
    \sup_{\abs{l}\leq \epsilon} \abs{f''_L(l)} \leq 2\abs{f''_L(0)}.
\end{equation*}
By \cref{lm:o3}, the magnitude of $f''_L(0)$
is $\ca{O}\pqty{\sigma^{3}}$ as $\sigma\to\infty$, and hence
\begin{equation*}
    \sup_{\abs{l}\leq\epsilon} \abs{f_L(l) - f_L(0)} \leq
    \ca{O}\pqty{\sigma^3\epsilon^2} = \ca{O}(1).
\end{equation*}
\end{proof}

\subsection{The Logistic Weight}\label{sec:weight}

\begin{figure}
\centering
\subfloat[The logistic coefficient
$a(n,w)$.]{\includegraphics[width=\linewidth]{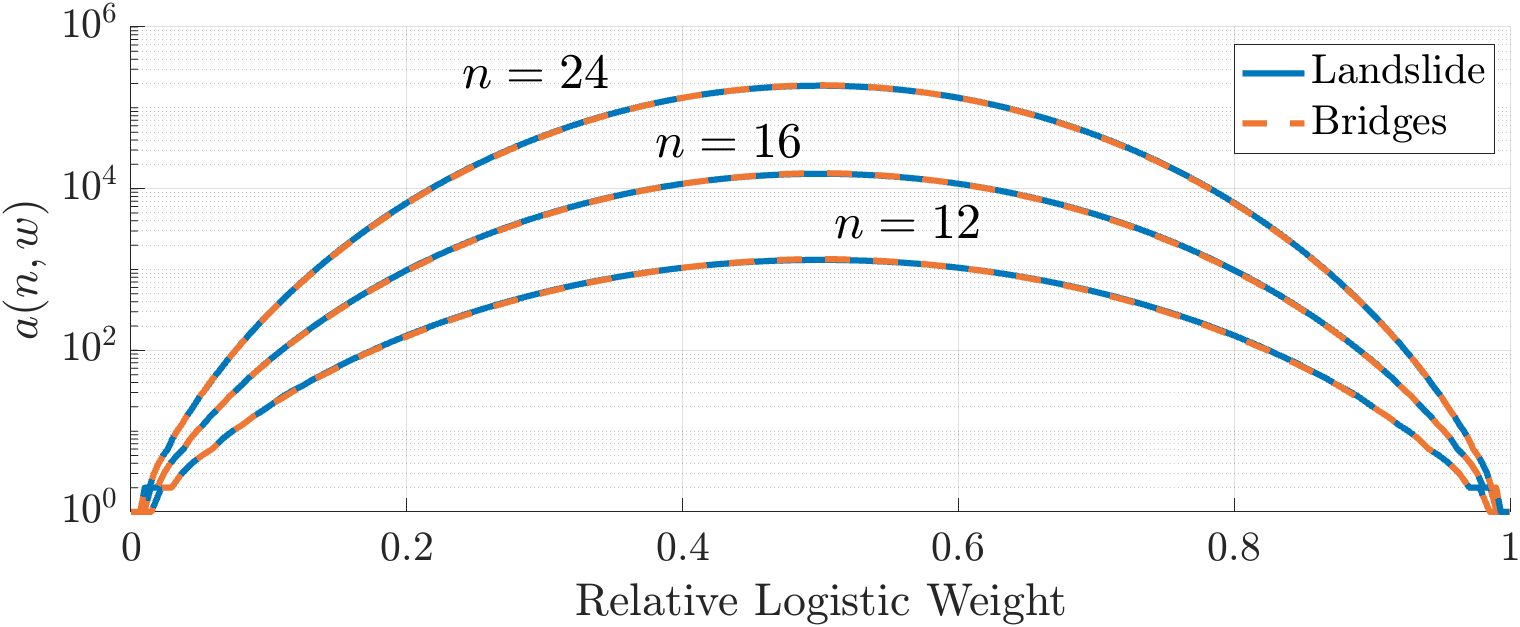}}\\
\subfloat[Bridges' approximation
error]{\includegraphics[width=\linewidth]{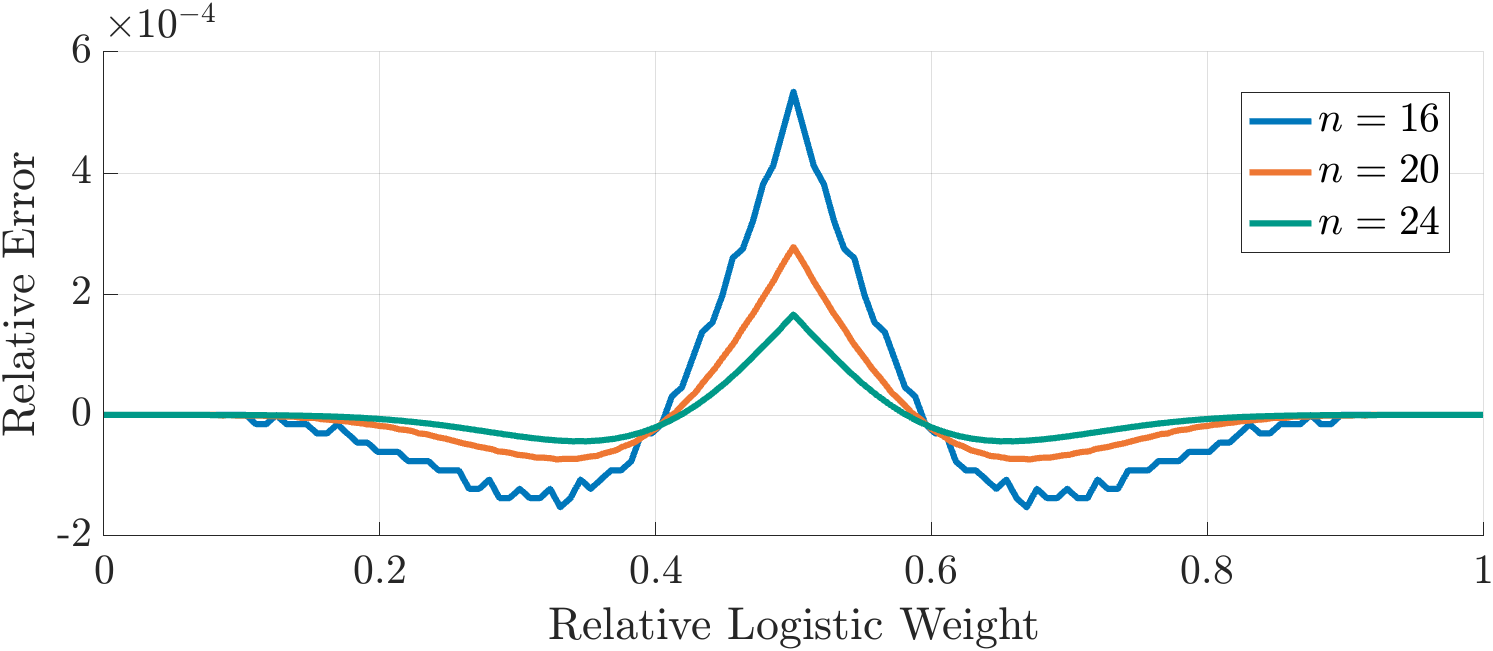}}\\
\caption{Comparison between Bridges' approximation (\cref{thm:bridges}) and the
exact value of $a(n,w)$, computed with the Landslide algorithm, for small $n$.
The horizontal axes are normalized by $n(n+1)/2$, the maximum weight. The
vertical axis of (b) is normalized by $2^n$, the total number of sequences.}
\label{fig:bridges}
\end{figure}

The logistic weight is intimately connected to the LRC in much the same way
that the Hamming weight is connected to the BSC. The Hamming weight of a length
$n$ sequence defines its type in the context of the BSC, and a core feature for
decoding is that there are $n+1$ such types. In contrast, the logistic weight
of a sequence, which defines its type in the context of soft-decision decoding
in the LRC, has $n(n+1)/2$ such types. This finer partition of the space of
sequences directly corresponds to the greater resolution provided by the soft
information. We establish here some essential properties of the logistic weight
and the enumeration of sequences of each type, emphasizing the parallels to
Hamming weight and the BSC throughout. In the context of the LRC, the logistic
weight is usually taken with respect to the reliability ordering permutation.
In some cases, however, the specific permutation does not matter, e.g., when
counting the number of sequences of a given length and logistic weight.

Because noise effects with the same logistic weight are equiprobable, the
logistic weight plays a similar role in the LRC as the Hamming weight does in
the BSC. A consequence of \cref{lm:soft-pmf} is the following identity: for all
$\beta>0$,
\begin{equation}\label{eq:logistic-coeff}
    \sum_{w=0}^{n(n+1)/2} \frac{ a(n,w) \upe^{-\beta \wT(z) / n} }
        {\prod_{i=1}^n \pqty{1+\upe^{-\beta i/n}}} = 1,
\end{equation}
where $a(n,w)$ is the number of length $n$ sequences with logistic weight $w$.
We refer to $a(n,w)$ as the logistic coefficient.
\Cref{eq:logistic-coeff} parallels the familiar identity which arises from the
BSC and the binomial distribution: for all $p\in[0,1]$,
\begin{equation*}
    \sum_{w=0}^n \binom{n}{w}p^w(1-p)^{n-w}=1.
\end{equation*}

In addition to their probabilistic interpretations as normalizing constants for
specific distributions, both the binomial and logistic coefficients have
combinatorial interpretations. The binomial coefficient $\binom{n}{w}$ counts
the subsets of $w$ elements of a set of $n$ elements. The logistic
coefficient is related to integer partitions: $a(n,w)$ is equal to the number
of partitions of $w$ with distinct parts and largest part at most $n$.

The logistic coefficient, like the binomial coefficient, is symmetric in $w$.
\begin{proposition}
For all $n\geq 1$ and $0\leq w\leq n(n+1)/2$,
\begin{equation*}
    a(n,w) = a\pqty{n, \frac{n(n+1)}{2}-w}.
\end{equation*}
\end{proposition}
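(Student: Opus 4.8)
The plan is to prove the symmetry by exhibiting an explicit weight-complementing involution on $\Bqty{0,1}^n$. As noted just before the statement, the count $a(n,w)$ does not depend on which permutation is used to define the logistic weight, so I would fix $\tau$ to be the identity; then $a(n,w)$ is simply the number of $x\in\Bqty{0,1}^n$ with $\sum_{i:\,x_i=1} i = w$, equivalently the number of subsets $S\subseteq[n]$ with $\sum_{i\in S} i = w$.

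The key step is to consider the bitwise-complement map $x\mapsto\bar x$, where $\bar x_i = 1-x_i$ for each $i\in[n]$. This is visibly an involution on $\Bqty{0,1}^n$. Since each index $i\in[n]$ is selected by exactly one of $x$ and $\bar x$, their logistic weights (with respect to the identity) satisfy
\begin{equation*}
    \sum_{i:\,x_i=1} i \;+\; \sum_{i:\,\bar x_i=1} i \;=\; \sum_{i=1}^n i \;=\; \frac{n(n+1)}{2}.
\end{equation*}
Hence complementation restricts to a bijection between the set of length-$n$ sequences of logistic weight $w$ and the set of those of logistic weight $n(n+1)/2-w$, and comparing cardinalities yields $a(n,w) = a\pqty{n,\frac{n(n+1)}{2}-w}$.

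As an alternative derivation, one can instead argue from the generating function: the identity underlying \cref{eq:logistic-coeff} shows $\prod_{i=1}^n(1+q^i) = \sum_{w=0}^{n(n+1)/2} a(n,w)\,q^w$, a polynomial of degree $n(n+1)/2$, and it is self-reciprocal because $q^{n(n+1)/2}\prod_{i=1}^n(1+q^{-i}) = \prod_{i=1}^n(q^i+1)$; equating the coefficients of $q^w$ on both sides gives the claim. I do not anticipate any real obstacle here — the only point that warrants an explicit sentence is the remark that $a(n,w)$ is independent of the chosen permutation, which legitimizes working with the identity permutation (or, in the generating-function version, with the partition/subset interpretation) throughout.
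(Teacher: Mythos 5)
Your main argument is exactly the paper's: the paper defines $\mathcal{Z}_r(w) = \{z : \sum_{i:\,z_i=r} i = w\}$ for $r\in\{0,1\}$ and appeals ``by symmetry'' to $|\mathcal{Z}_1(w)|=|\mathcal{Z}_0(w)|$, which is precisely the bitwise-complement bijection you spell out explicitly. The generating-function alternative you sketch is a valid second route, but the primary proof coincides with the paper's.
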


\begin{proof}
For $r\in\Bqty{0,1}$ and $0\leq w \leq n(n+1)/2$, the set
\begin{align*}
    \ca{Z}_r(w) &= \Bqty{z\in\Bqty{0,1}^n: \sum_{i:\,z_i=r} i=w},
\end{align*}
i.e., with respect to the identity permutation, $\ca{Z}_1(w)$ is set of length
$n$ sequences with logistic weight $w$, while $\ca{Z}_0(w)$ is the set of
those with logistic weight $n(n+1)/2 - w$. By symmetry, $\abs{\ca{Z}_1(w)}
= \abs{\ca{Z}_0(w)}$ for all $w$.
\end{proof}

Although the logistic coefficient may not be expressible algebraically, the
combinatorial interpretation yields methods for both computing and
approximating $a(n,w)$. The landslide algorithm \cite{DWM22} enumerates all
length $n$ sequences of logistic weight $w$. For values of $n$ and $w$ for
which this algorithm is impractical, the following asymptotic approximation,
which is a reparameterization of a result due to Bridges \cite{Bri20}, is
extremely accurate (\cref{fig:bridges}).

\begin{theorem}[\cite{Bri20}]\label{thm:bridges}
Define $\beta: \pqty{\sqrt{2},\infty} \to \pqty{-\infty,\frac{\pi}{2\sqrt{3}}}$
as an implicit function of $t$ such that
\begin{equation*}
    1 = \int_0^t \frac{u\upe^{-\beta u}}{1+\upe^{-\beta u}} \dd{u}.
\end{equation*}
Let
\begin{align*}
    A(t) &= \frac{ \upe^{\frac{\beta t}{2}} + \upe^{\frac{-\beta t}{2}} }
    { 2 } \sqrt{\frac{\beta'(t)}{\pi t}}, \\
    B(t) &= 2\beta + t\ln\pqty{1+\upe^{-\beta t}}.
\end{align*}
Then,
\begin{equation}\label{eq:bridges-est}
    a(n,w) \sim \frac{ A \pqty{\frac{n}{\sqrt{w}}} } {w^{3/4}}
        \upe^{ B\pqty{ \frac{n}{\sqrt{w}} } \sqrt{w} }.
\end{equation}
\end{theorem}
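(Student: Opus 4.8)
The plan is to establish the estimate by the saddle-point (local central limit) method applied to the generating function of $a(n,w)$, which is the route taken in \cite{Bri20}; since the theorem is quoted from there, what is really needed here is the reparameterization in terms of $t=n/\sqrt w$. The starting point is the identity implied by the combinatorial description of $a(n,w)$ as the number of partitions of $w$ into distinct parts of size at most $n$,
\[
  \sum_{w\geq 0} a(n,w)\,q^w = \prod_{i=1}^n \pqty{1+q^i}.
\]
By Cauchy's formula, $a(n,w) = \frac{1}{2\pi\upi}\oint \prod_{i=1}^n(1+q^i)\,q^{-w-1}\dd{q}$ over a small circle $\abs{q}=\rho$; equivalently, letting $X_1,\dots,X_n$ be independent with $\Pr{X_i=1}=\rho^i/(1+\rho^i)$ and $W=\sum_{i=1}^n iX_i$, one has $a(n,w)=\rho^{-w}\prod_{i=1}^n(1+\rho^i)\cdot\Pr{W=w}$.

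First I would pin down the saddle point $\rho=\upe^{-r}$ through the condition $\Exp{W}=w$, i.e.\ $\sum_{i=1}^n i\upe^{-ri}/(1+\upe^{-ri})=w$. Setting $r=t\beta/n$ with $t=n/\sqrt w$ and approximating the sum by an integral ($x=i/n$, then $u=tx$) turns this into Bridges' defining relation $1=\int_0^t u\upe^{-\beta u}/(1+\upe^{-\beta u})\dd{u}$ for $\beta=\beta(t)$, and the admissible range $t\in(\sqrt2,\infty)$ corresponds exactly to $0\leq w<n(n+1)/2$, the full span of logistic weights. A local central limit theorem for $W$ (equivalently, the saddle-point expansion of the contour integral) then yields
\[
  a(n,w)\sim\frac{\rho^{-w}\prod_{i=1}^n(1+\rho^i)}{\sqrt{2\pi\,\Var{W}}},\qquad \Var{W}=\sum_{i=1}^n\frac{i^2\upe^{-ri}}{(1+\upe^{-ri})^2}.
\]

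Everything else is Euler--Maclaurin bookkeeping. For the exponential factor, $\ln\bqty{\rho^{-w}\prod_i(1+\rho^i)}=wr+\sum_i\ln(1+\upe^{-ri})$; replacing the sum by $n\int_0^1\ln(1+\upe^{-t\beta x})\dd{x}=\sqrt w\int_0^t\ln(1+\upe^{-\beta u})\dd{u}$ plus the Euler--Maclaurin boundary term $\tfrac12\ln\tfrac{1+\upe^{-\beta t}}{2}$, and using integration by parts together with the defining relation to get $\int_0^t\ln(1+\upe^{-\beta u})\dd{u}=\beta+t\ln(1+\upe^{-\beta t})$, shows this factor is $\upe^{B(t)\sqrt w}\sqrt{(1+\upe^{-\beta t})/2}$ up to a $1+o(1)$. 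For the variance, differentiating the defining relation in $t$ gives $\int_0^t u^2\upe^{-\beta u}/(1+\upe^{-\beta u})^2\dd{u}=\tfrac{1}{\beta'(t)}\cdot\tfrac{t\upe^{-\beta t}}{1+\upe^{-\beta t}}$, so $\Var{W}\sim w^{3/2}\,t\upe^{-\beta t}/\bqty{\beta'(t)(1+\upe^{-\beta t})}$. Feeding both into the display and using $(1+\upe^{-\beta t})\upe^{\beta t/2}=\upe^{\beta t/2}+\upe^{-\beta t/2}$, the prefactors collapse to $A(t)/w^{3/4}$ with $A(t)$ as stated, while $\upe^{B(t)\sqrt w}$ survives unchanged.

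The main obstacle, and the one genuinely handled in \cite{Bri20}, is making the saddle-point step rigorous: the ``minor-arc'' bound showing that the contour integral away from $\rho$ is exponentially negligible, and the control of the Euler--Maclaurin remainders uniformly as $n,w\to\infty$ with $t$ confined to compact subsets of $(\sqrt2,\infty)$. Since we are only restating that theorem, I would cite \cite{Bri20} for the local CLT and confine the written argument to the substitution $t=n/\sqrt w$ and the two integral identities above, which are what put the constants into the closed form $A(t),B(t)$.
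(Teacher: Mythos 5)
The paper does not prove \cref{thm:bridges} at all: it is quoted from \cite{Bri20} and described only as ``a reparameterization of a result due to Bridges,'' with no derivation. Your sketch therefore supplies more than the paper does, and, having checked the algebra, it is sound. The generating-function identity, the probabilistic reformulation $a(n,w)=\rho^{-w}\prod_i(1+\rho^i)\Pr{W=w}$, and the saddle-point condition $\Exp{W}=w$ are exactly the apparatus Bridges uses. The substitution $r=t\beta/n$ with $t=n/\sqrt w$ correctly turns the saddle equation into $\int_0^t u\upe^{-\beta u}/(1+\upe^{-\beta u})\dd u=1$, and your two auxiliary identities are both right: integration by parts together with the defining relation gives $\int_0^t\ln(1+\upe^{-\beta u})\dd u=\beta+t\ln(1+\upe^{-\beta t})$, which yields $B(t)\sqrt w$ as the exponent, and differentiating the defining relation in $t$ gives $\int_0^t u^2\upe^{-\beta u}/(1+\upe^{-\beta u})^2\dd u=t\upe^{-\beta t}/\bqty{\beta'(t)(1+\upe^{-\beta t})}$, so that $\Var{W}\sim w^{3/2}t\upe^{-\beta t}/\bqty{\beta'(t)(1+\upe^{-\beta t})}$. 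Combining this with the Euler--Maclaurin half-term $\tfrac12\ln\tfrac{1+\upe^{-\beta t}}{2}$ and $\sqrt{2\pi\Var{W}}$ does collapse the prefactor to $A(t)/w^{3/4}$, since $(1+\upe^{-\beta t})\upe^{\beta t/2}=\upe^{\beta t/2}+\upe^{-\beta t/2}$.

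One small imprecision worth noting: the range $t\in(\sqrt2,\infty)$ does not correspond \emph{exactly} to $0\le w<n(n+1)/2$ for finite $n$, only asymptotically; indeed at $w=n(n+1)/2$ one has $t=\sqrt{2n/(n+1)}<\sqrt2$. The paper flags and handles this via the symmetry $a(n,w)=a(n,n(n+1)/2-w)$, which lets one restrict to $w\le n(n+1)/4$ where $t>\sqrt2$ does hold. You may want to incorporate that observation rather than assert an exact correspondence. Otherwise the sketch is correct and follows the same route as the cited source, with the genuinely hard step (the local CLT / minor-arc bound) appropriately deferred to \cite{Bri20}.
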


When $w$ is near $n(n+1)/2$, it is possible that $\frac{n}{\sqrt{w}} <
\sqrt{2}$, for which Bridges' $\beta$ function is not defined. Nonetheless,
since $a(n,w)$ is symmetric in $w$ and, for all $n\geq 1$,
\begin{equation*}
    \frac{n}{\sqrt{\frac{n(n+1)}{4}}} > \sqrt{2},
\end{equation*}
Bridges' approximation can be used for all values of $w$.

\section{Maximum Likelihood Decoders}\label{sec:ml}

There exist explicit hard- and soft-decision ML decoding algorithms for the
LRC. These decoders are readily described in the framework of GRAND
\cite{DLM19}, a family of code-agnostic channel decoding algorithms. We give
here a brief overview of the principles which are sufficient for a complete
formal description of both the soft-decision (\cref{thm:ml-soft}) and
hard-decision ML decoders (\cref{thm:ml-hard}) for the LRC, as well as for a
detailed analysis of the probability of a decoding error in the sequel.

In any additive noise channel, identifying the code word which maximizes the
likelihood of the received transmission is equivalent to identifying the noise
effect which maximizes that same likelihood. Formally, denoting the code by
$\ca{C}\subset\Bqty{0,1}^n$,
\begin{align*}
    c^{\ast} &= \argmax \Bqty{p_{Y^n\gib X^n}(y^n\gib c) : c\in\ca{C}} \\
    &= \argmax \Bqty{p_{N^n}(Y^n - c) : c\in\ca{C}}.
\end{align*}
Given a statistical model for the noise (which, for the purposes of specifying
an algorithm, need not correspond to the true channel noise distribution), all
possible noise effects can be rank ordered by probability. The first noise
effect in this order which yields a code word when subtracted from the received
sequence is the most likely noise effect under the given model, and the
corresponding code word is the most likely decoding.

The invertible map $G:\Bqty{0,1}^n\to[2^n]$ which rank orders noise effects is
referred to as a guessing function, and the behavior of GRAND can be analyzed
in the information theoretic context of guesswork \cite{Mas94,Ari96,MS04}. When
the guessing function is optimal, i.e., the statistical model does correspond
to the true channel noise distribution and noise effects are guessed in
non-increasing order of probability, then it is an ML decoder. Thus, an ML
decoder for a particular channel can be completely specified by an optimal
guessing function for its noise effect distribution. For hard-decision
decoding, the guessing function must be optimal with respect to the prior noise
effect distribution. For soft-decision decoding, it must be optimal with
respect to the posterior distribution given the received transmission and the
corresponding soft information.

The soft-decision ML decoder for the LRC guesses noise effects in order of
increasing logistic weight with respect to the reliability ordering
permutation. This algorithm is ORBGRAND \cite{DWM22}, originally proposed as an
approximate soft-decision ML decoder and later shown to be almost
capacity-achieving for the real-valued AWGN channel \cite{Liu+22}.

\begin{theorem}\label{thm:ml-soft}
For any $\tau\in S_n$, let $\GT:\Bqty{0,1}^n\to[2^n]$ be a guessing
function such that for all $x_1,x_2\in\Bqty{0,1}^n$,
\begin{align*}
    \GT(x_1) < \GT(x_2) &\implies \wT(x_1) \leq \wT(x_2), \\
    \wT(x_1) < \wT(x_2) &\implies \GT(x_1) < \GT(x_2).
\end{align*}
Then, the GRAND algorithm using $\GT$ as a guessing function is an
soft-decision ML decoder for the LRC given that the reliability ordering
permutation is $\tau$.
\end{theorem}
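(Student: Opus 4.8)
The plan is to argue directly from the GRAND-is-ML principle together with \cref{lm:soft-pmf}. Recall from the preceding discussion that the GRAND algorithm with guessing function $G$ is a soft-decision ML decoder precisely when $G$ enumerates noise effects in non-increasing order of posterior probability; writing $p_{N^n}$ for the soft-decision noise effect PMF given reliability ordering permutation $\tau$, this means $G$ must satisfy $G(x_1) < G(x_2) \implies p_{N^n}(x_1) \geq p_{N^n}(x_2)$ for all $x_1,x_2$, with ties broken arbitrarily. The task thus reduces to translating this probability ordering into an ordering by logistic weight.

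First I would invoke \cref{lm:soft-pmf}, which gives $p_{N^n}(x) = \upe^{-\beta \wT(x)/n}/Z$ with $Z = \prod_{i=1}^n(1+\upe^{-\beta i/n}) > 0$ independent of $x$. Since $\beta > 0$ and $n \geq 1$, the map $t \mapsto \upe^{-\beta t/n}$ is strictly decreasing on $\R$, so for all $x_1,x_2 \in \Bqty{0,1}^n$ one has $p_{N^n}(x_1) \geq p_{N^n}(x_2) \iff \wT(x_1) \leq \wT(x_2)$ and $p_{N^n}(x_1) > p_{N^n}(x_2) \iff \wT(x_1) < \wT(x_2)$; in particular, two noise effects are equiprobable exactly when their logistic weights coincide. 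Substituting these equivalences into the ML characterization shows that $G$ is a soft-decision ML decoder if and only if $G(x_1) < G(x_2) \implies \wT(x_1) \leq \wT(x_2)$ and $\wT(x_1) < \wT(x_2) \implies G(x_1) < G(x_2)$, which is exactly the defining property of $\GT$ in the statement. I would also remark that at least one such $\GT$ exists --- e.g., any enumeration of $\Bqty{0,1}^n$ sorted by $\wT$ with arbitrary tie-breaking --- and that the two displayed implications are in fact equivalent to one another once one uses that $G$ is a bijection onto $[2^n]$, so requiring both is only for emphasis. Concluding, GRAND run with any such $\GT$ guesses noise effects in non-increasing order of posterior probability and hence is a soft-decision ML decoder for the LRC when the reliability ordering permutation is $\tau$.

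There is no substantial obstacle here: the statement is essentially a restatement of \cref{lm:soft-pmf} in light of the GRAND framework, using that the posterior probability is a strictly decreasing function of the logistic weight. The only point needing care is the bookkeeping around ties --- verifying that the mixture of ``$\leq$'' and ``$<$'' in the two implications correctly encodes ``linear extension of the probability order with arbitrary tie-breaking'' --- together with the (trivial) observation that a guessing function satisfying both conditions actually exists.
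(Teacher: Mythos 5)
Your proof is correct and follows essentially the same approach as the paper's: both invoke \cref{lm:soft-pmf} to observe that $p_{N^n}$ is a strictly decreasing function of $\wT$, then conclude that $\GT$ orders noise effects by non-increasing posterior probability. Your extra remarks — that such a $\GT$ exists and that the two defining implications are contrapositives of one another once injectivity of the guessing function is used — are both correct, if unnecessary for the argument.
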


\begin{proof}
We show that $\GT$ is an optimal guessing function for $N^n$.
\Cref{lm:soft-pmf} implies that
\begin{equation*}
    p_{N^n}(x) \propto \upe^{-\beta \wT(x) / n}.
\end{equation*}
Since this function is strictly decreasing in $\wT(x)$,
\begin{align*}
    p_{N^n}(x_1) > p_{N^n}(x_2) &\implies \wT(x_1) < \wT(x_2) \\
    &\implies \GT(x_1) < \GT(x_2).
\end{align*}
Similarly,
\begin{align*}
    \GT(x_1) < \GT(x_2) &\implies \wT(x_1) \leq \wT(x_2) \\
    &\implies p_{N^n}(x_1) \geq p_{N^n}(x_2).
\end{align*}
\end{proof}

The hard-decision ML decoder for the LRC guesses noise effects in order of
increasing Hamming weight. This corresponds to the original version of GRAND
\cite{DLM19}, first proposed as a general hard-decision ML decoder which, for
the noise distribution of the BSC, guesses by Hamming weight. Note, however,
that the hard-decision LRC is not equivalent to a BSC: although the marginal
distribution of the noise effect is identical for each bit, the bits are not
independent. Nonetheless, because the two channels do have the same optimal
guessing function, this does imply that any hard-decision ML decoder for the
BSC is also a hard-decision ML decoder for the LRC. By considering the
operation of GRAND algorithms specifically, however, both hard- and
soft-decision decoding in the LRC can be tackled with a common set of
techniques.

\begin{theorem}\label{thm:ml-hard}
Let $\GH:\Bqty{0,1}^n\to[2^n]$ be a guessing function such that, for all
$x_1,x_2\in\Bqty{0,1}^n$,
\begin{align*}
    \GH(x_1) < \GH(x_2) &\implies \wH(x_1) \leq \wH(x_2), \\
    \wH(x_1) < \wH(x_2) &\implies \GH(x_1) < \GH(x_2).
\end{align*}
Then, the GRAND algorithm using $\GH$ as a guessing function is a hard-decision
ML decoder for the LRC.
\end{theorem}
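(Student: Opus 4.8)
The plan is to mirror the proof of \cref{thm:ml-soft}: it suffices to show that $\GH$ is an optimal guessing function for the hard-decision noise effect $Z^n$, since in the GRAND framework a guessing function that enumerates noise effects in non-increasing order of their prior probability specifies a hard-decision ML decoder. By \cref{lm:hard-pmf}, $p_{Z^n}(x)$ depends on $x$ only through $k = \wH(x)$, and equals $a^n_k(\beta)/\binom{n}{k}$ times a strictly positive constant independent of $x$. Writing $\rho_k = a^n_k(\beta)/\binom{n}{k}$, the whole argument therefore reduces to the claim that $\rho_k$ is strictly decreasing in $k$ for $0 \le k \le n$ --- equivalently, that a noise effect with more bit flips is strictly less likely, notwithstanding that every individual bit is marginally equiprobable to flip.

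To prove that claim, note that $\rho_k$ is precisely the average of $\upe^{-\beta(i_1 + \dots + i_k)/n}$ over all $\binom{n}{k}$ subsets $\{i_1 < \dots < i_k\} \subseteq [n]$, and that each variable $\upe^{-\beta i/n}$ lies strictly in $(0,1)$ because $\beta > 0$ (cf.\ \cref{eq:lrc-q}). I would compare $\rho_k$ with $\rho_{k+1}$ by a double-counting argument over pairs $(S,T)$ with $|S| = k$, $|T| = k+1$, and $S \subset T$. Summing $\prod_{i \in S} \upe^{-\beta i/n}$ over all such pairs and grouping by $S$ gives $(n-k)\,a^n_k(\beta)$; the same sum strictly exceeds $\sum_{(S,T)} \prod_{i \in T} \upe^{-\beta i/n}$, since for each pair $\prod_{i\in T}$ carries exactly one additional factor from $(0,1)$ relative to $\prod_{i\in S}$; and grouping this latter sum by $T$ gives $(k+1)\,a^n_{k+1}(\beta)$. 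Hence $(n-k)\,a^n_k(\beta) > (k+1)\,a^n_{k+1}(\beta)$, and dividing through by $\binom{n}{k}(n-k) = \binom{n}{k+1}(k+1)$ yields $\rho_k > \rho_{k+1}$. Equivalently: a uniformly random $(k+1)$-subset has the law of a uniformly random $k$-subset with one further uniformly chosen index adjoined, and adjoining an index scales the associated weight by a factor in $(0,1)$.

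With strict monotonicity of $p_{Z^n}$ in the Hamming weight in hand, the conclusion follows exactly as in \cref{thm:ml-soft}: if $p_{Z^n}(x_1) > p_{Z^n}(x_2)$ then $\wH(x_1) < \wH(x_2)$, hence $\GH(x_1) < \GH(x_2)$ by hypothesis on $\GH$; and conversely $\GH(x_1) < \GH(x_2)$ forces $\wH(x_1) \le \wH(x_2)$, hence $p_{Z^n}(x_1) \ge p_{Z^n}(x_2)$. Thus $\GH$ enumerates noise effects in non-increasing order of prior probability, i.e., it is optimal for $Z^n$, so GRAND with $\GH$ is a hard-decision ML decoder for the LRC.

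The reduction to $Z^n$ and the final chain of implications are routine transcriptions of the soft-decision argument. The one substantive step, and the place I expect the real work, is the monotonicity $\rho_k > \rho_{k+1}$: it is a Maclaurin/Newton-flavored inequality for normalized elementary symmetric polynomials, it is exactly where the hypothesis $\beta > 0$ (equivalently $\upe^{-\beta i/n} < 1$, equivalently $q_i < 1/2$) is genuinely used, and the averaging over the reliability ordering permutation is what obscures the otherwise-intuitive statement; the double-counting argument above is what resolves it.
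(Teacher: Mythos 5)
Your proof is correct and reaches the same reduction as the paper's (show $p_{Z^n}$ is strictly decreasing in Hamming weight, then transcribe the soft-decision argument), but the key monotonicity step $E_k > E_{k+1}$ is handled by a genuinely different route. The paper cites Maclaurin's inequality to get $E_k > E_{k+1}^{k/(k+1)}$ and then uses $E_{k+1} \in (0,1)$ to drop the exponent, which gives $E_k > E_{k+1}$. You instead give a self-contained double-counting argument: sum $\prod_{i\in S}\upe^{-\beta i/n}$ over the pairs $(S,T)$ with $|S|=k$, $|T|=k+1$, $S\subset T$, group once by $S$ to get $(n-k)\,a^n_k(\beta)$, once by $T$ to get $(k+1)\,a^n_{k+1}(\beta)$, and note that each summand for $T$ carries one extra factor in $(0,1)$. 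The identity $(n-k)\binom{n}{k}=(k+1)\binom{n}{k+1}$ then gives $\rho_k>\rho_{k+1}$. Both arguments invoke $\beta>0$ (equivalently, all $\upe^{-\beta i/n}\in(0,1)$) at the decisive moment, so neither is logically stronger; the tradeoff is that the paper's approach leans on a named classical inequality while yours is elementary and avoids the citation entirely. Your argument is sound, and the final chain of implications matches the paper's verbatim.
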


\begin{proof}
We show that $\GH$ is an optimal guessing function for $Z^n$. By
\cref{lm:hard-pmf},
\begin{equation*}
    p_{Z^n}(x) \propto \frac{a^n_k(\beta)}{\binom{n}k} = E_k,
\end{equation*}
where $k=\wH(x)$. To show that $p_{Z^n}(x)$ is strictly decreasing in $\wH(x)$,
it suffices to show that $E_k$ is strictly decreasing in $k$. Since the factors
in $a^n_k(\beta)$ are all distinct, Maclaurin's inequality \cite{HLP52} yields
\begin{equation*}
    E_k > \pqty{E_{k+1}}^{k/(k+1)}.
\end{equation*}
Since $E_k\in(0,1)$ for all $k$,
\begin{equation*}
    \pqty{E_{k+1}}^{k/(k+1)} > E_{k+1},
\end{equation*}
and hence $E_k > E_{k+1}$. The remainder of the proof follows the same logic as
that of \cref{thm:ml-soft}.
\end{proof}

In the sequel, we denote by $\GT$ the optimal guessing function for
soft-decision decoding, with the understanding the $\tau$ refers to the
realization of the reliability ordering permutation. We continue to denote by
$\GH$ the optimal guessing function for hard-decision decoding, and we simply
use $G$ to refer to any other generic guessing function.

Note that neither the soft- nor hard-decision ML decoder depends on
$\beta$. Their performance will depend on the noise level, as we will show, but
not their optimality.

\section{Large Deviation Principles for\texorpdfstring{\\}{ }Guesswork in the
LRC}\label{sec:ldp}

A key benefit of the fact that the ML decoders for the LRC are expressible as
GRAND algorithms with explicit guessing functions is that the error behavior is
describable in the mathematical language of large deviations. In this section,
we leverage both standard large deviations techniques and GRAND-specific
results to establish large deviation principles (LDP) for the number of guesses
made by the hard- and soft-decision ML decoders. In \cref{sec:error}, these
LDPs are used to derive both error exponents (for the probability of
incorrectly decoding below capacity) and success exponents (for the probability
of correctly decoding above capacity) for these decoders. Proofs of error
exponents have more traditionally been handled using techniques based on the
method of types and the notion of typical sets \cite{Csi98}. One notable
benefit of the alternative large deviations approach that we take here is that
error and success exponents are captured in a single coherent framework. We
begin by giving a brief, informal overview of the theory of large deviations,
with the goal of imparting an intuitive understanding of our results. For a
more thorough but still relatively informal introduction, see \cite{Tou09}, and
for a complete formal treatment, see \cite{DZ10,Var84,DS01}.

At a high level, the theory of large deviations considers the probability that
the realization of a random variable in a sequence is far from its expectation,
i.e., the probability of observing a large deviation. The perspective taken is
inherently asymptotic. We consider a infinite sequence of random variables
$A^n$, indexed by $n\in\N$. We refer to such a sequence as a \emph{process}.
For our purposes, we may simply let $A^n$ be real-valued. Informally, such a
sequence satisfies an LDP with rate function $I_A$ if, as $n\to\infty$,
\begin{equation*}
    \Pr{A^n \in (a,b)} \approx \exp(-n \inf_{x\in (a,b)} I_A(x)).
\end{equation*}
Loosely, the rate function quantifies the exponential rate at which the
probability of $A_n$ taking values over any interval is decaying asymptotically
with $n$. In general, there exists some point $x^\ast$ for which
$I_A(x^\ast)=0$, which implies that the probability that $A^n\approx x^\ast$ is
not decaying as $n$ grows. This asymptotic concentration is expressed by
classical results such as the central limit theorem. The theory of large
deviations generalizes such results by quantifying the decay rate of the
probability of any given atypical observation.

Let $N^n$ denote the noise effect\footnote{Whether we consider a soft-decision
or hard-decision noise effect is not relevant to this discussion. We use the
notation $N^n$, elsewhere used to denote a soft-decision noise effect,
arbitrarily.} and let $G$ be the optimal guessing function for $N^n$. We refer
to $G(N^n)\in[2^n]$, the position of the noise effect in the rank ordering
induced by $G$, as the \emph{guesswork} of $N^n$. Let $U^n$ denote the first
sequence guessed by $G$ corresponding to an \emph{incorrect} code word, i.e.
$Y^n+U^n$ is a code word but $U^n\neq N^n$. GRAND produces the correct decoding
if and only if $G(N^n) < G(U^n)$. Thus, the asymptotic probability of an ML
decoding error is determined by distribution of $G(N^n)$ and $G(U^n)$ in the
large block length limit.

The optimal guesswork process $\Bqty{n^{-1}\ln G(N^n)}$ has been shown to
satisfy an LDP for a general class of noise distributions \cite{CD12}. The fact
that we consider the exponent of the guesswork rather than the guesswork
directly is effectively due to the fact that the total number of sequences is
growing exponentially in $n$. To establish that $n^{-1}\ln G(N^n)$ satisfies an
LDP, it suffices to show that its scaled cumulant generating function (sCGF) is
expressible as a particular function of the R\'enyi entropy rate of $N^n$.

The sCGF $\Lambda_A$ of a general, real-valued random process $A^n$ is defined
to be \begin{equation*} \Lambda_A(\alpha) = \lim_{n\to\infty} \frac{1}{n}
\ln\Exp{\upe^{\alpha n
    A^n}}.
\end{equation*}
When the sCGF exists and satisfies some regularity conditions, $A^n$ satisfies
an LDP with a rate function $I_A$ given by the Legendre-Fenchel transform of
the sCGF,
\begin{equation*}
    I_A(x) = \sup_{\alpha\in\R}\Bqty{x\alpha - \Lambda_A(\alpha)}.
\end{equation*}

The choice of working with natural logarithms is largely conventional. We carry
out most of our analysis with natural logarithms for convenience, but
ultimately the error exponents and rate functions for guesswork processes are
more readily interpretable when expressed in bits. For the transformations
between nats and bits, see \cref{eq:b2-scgf,eq:b2-rate}.

For a random sequence $X^n$ of letters drawn from a finite alphabet, the
R\'enyi entropy of order $\alpha\in(0,1)\cup(1,\infty)$ is defined to be (in
nats)
\begin{equation*}
    H_{\alpha}(X^n) = \frac{1}{1-\alpha}\ln\pqty{\sum_x \Pr{X^n = x}^\alpha},
\end{equation*}
and the R\'enyi entropy rate of order $\alpha$ is given by the limit
\begin{equation*}
    H_{\alpha}(X) = \lim_{n\to\infty} \frac{1}{n} H_{\alpha}(X^n).
\end{equation*}
The R\'enyi entropy rate is generalization of, among other quantities, the
min-entropy rate $H_{\up{min}}(X)$ and the Shannon entropy rate $H_1(X)$. In
particular,
\begin{align*}
    H_{\up{min}}(X) &= \lim_{\alpha\to\infty} H_{\alpha}(X), \\
    H_1(X) &= \lim_{\alpha\to 1} H_{\alpha}(X).
\end{align*}
We also denote by $h:[0,1]\to[0,\ln 2]$ the usual binary entropy function (in
nats),
\begin{equation*}
    h(p) = -p \ln p - (1-p)\ln(1-p).
\end{equation*}
We slightly abuse notation by not distinguishing whether the various entropies
are in bits or in nats. The choice of logarithm will be clear from context.

Finally, the following integral appears repeatedly throughout our
analysis, and so we denote it by the following function of
$r\in[0,1]$ and $\gamma>0$,
\begin{equation*}
    J(r;\gamma) = \int_0^1 \ln(1+r\upe^{-\gamma x}) \dd{x}.
\end{equation*}

\begin{figure*}
\centering
{\hfill%
\subfloat[sCGFs]
    {\includegraphics[width=\imgwidth]{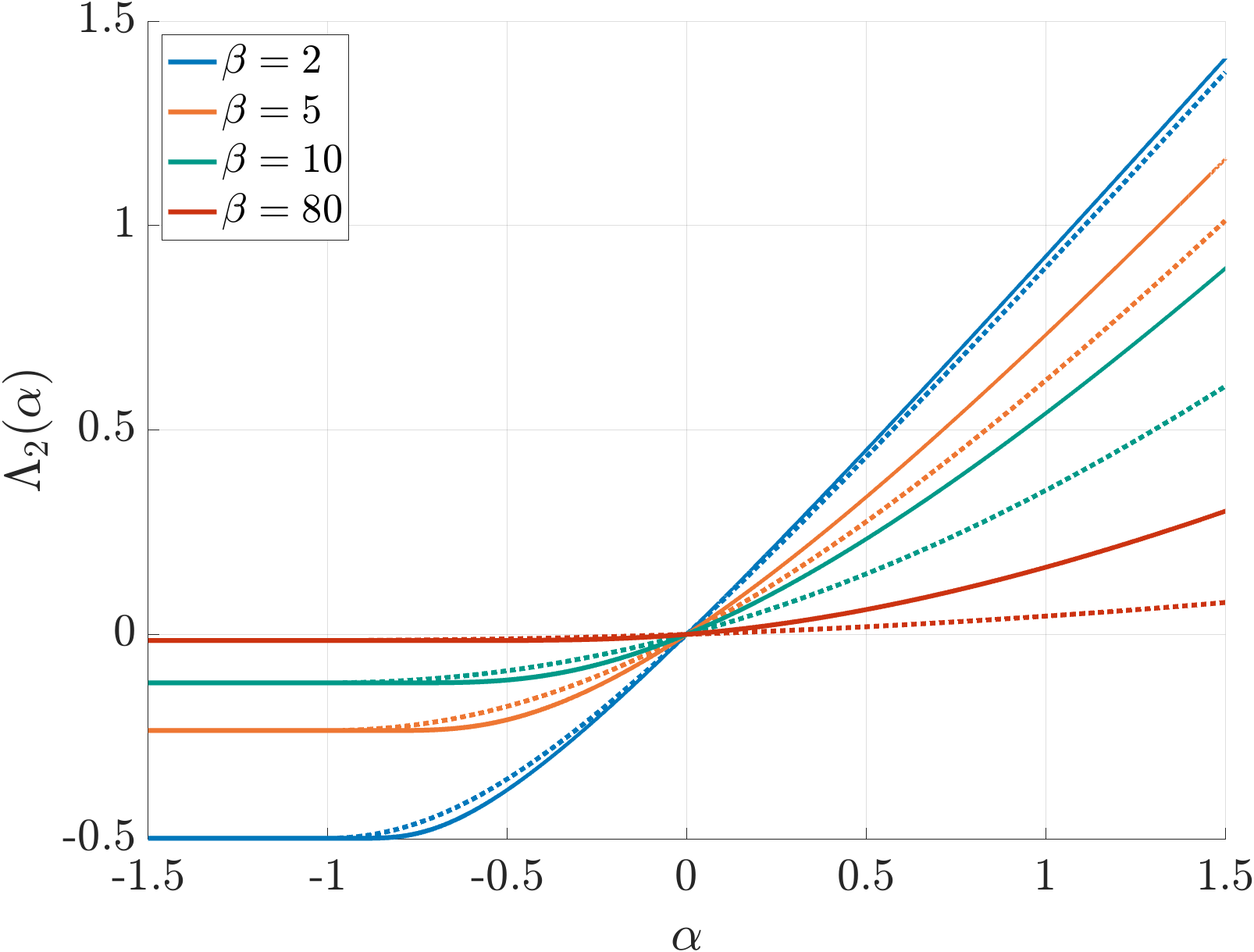}\label{fig:scgf}}%
\hfill%
\subfloat[rate functions]
    {\includegraphics[width=\imgwidth]{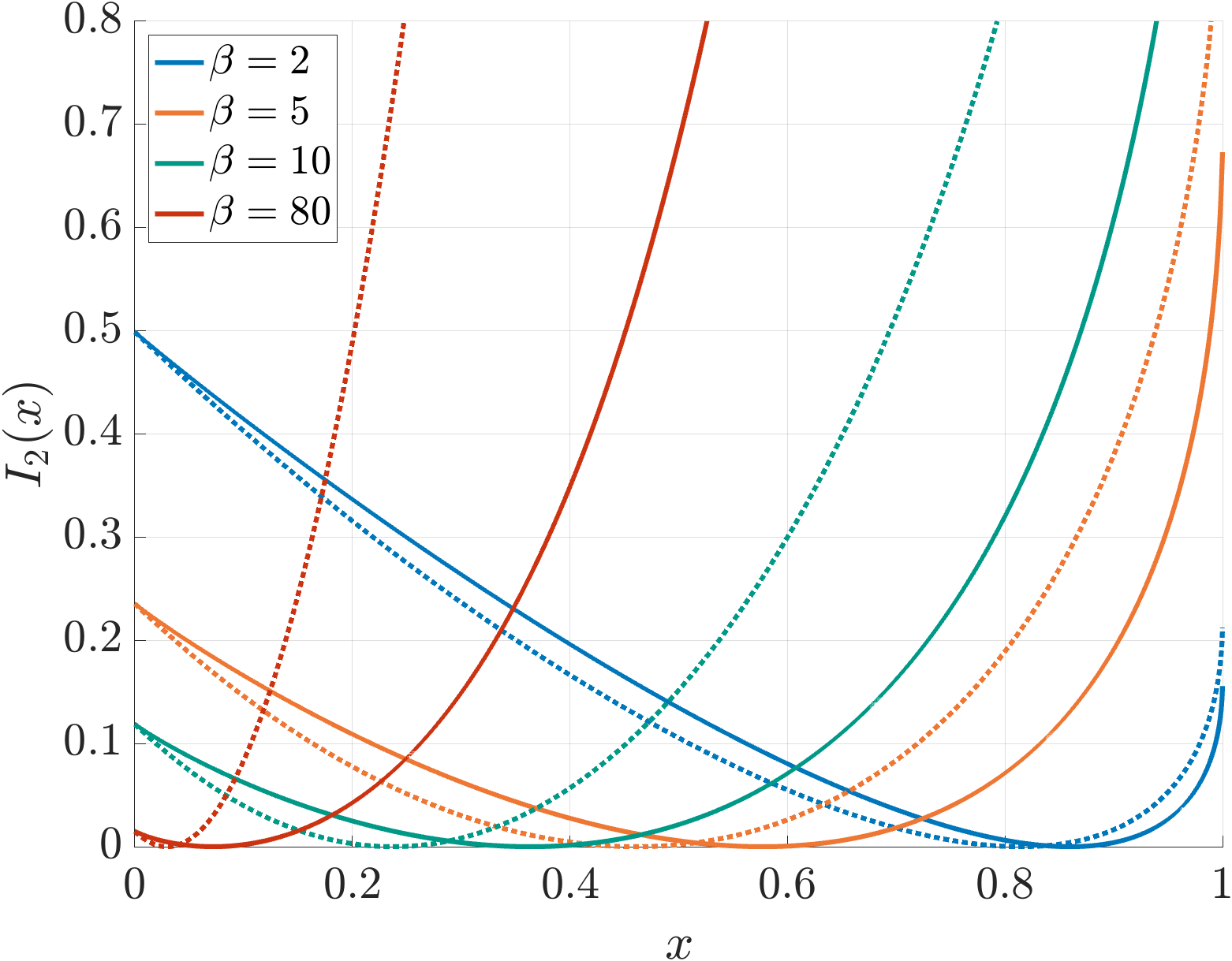}\label{fig:rate}}%
\hfill}
\caption{Functions related to the LDPs (in bits) for the hard-decision optimal
guesswork process $\Bqty{n^{-1}\log_2 \GH(Z^n)}$ (solid) and the soft-decision
optimal guesswork process $\Bqty{n^{-1}\log_2 \GT(N^n)}$ (dashed) in the LRC.}
\label{fig:scgf-rate}
\end{figure*}

\subsection{Scaled Cumulant Generating Functions}

The derivation of the soft-decision sCGF is straightforward and readily follows
from straightforward manipulations. For concision, we generally suppress the
dependence of quantities such as the sCGF on $\beta$, only making it
explicit in the underlying expressions.

\begin{theorem}\label{thm:scgf-soft}
Let $N^n$ be the soft-decision noise effect in the LRC with parameter $\beta$
The sCGF of the soft-decision guesswork process $\Bqty{n^{-1}\ln \GT(N^n)}$ is
\begin{equation*}
    \Lambda_N(\alpha) = \begin{cases}
        \alpha H_{\frac{1}{1+\alpha}}(N) & \alpha\in(-1,\infty), \\
        -H_{\up{min}}(N) & \alpha\leq -1,
    \end{cases}
\end{equation*}
where the R\'enyi entropy rate of $N^n$ is
\begin{equation*}
    \alpha H_{\frac{1}{1+\alpha}}(N) = (1+\alpha) J\pqty{1;
    \frac{\beta}{1+\alpha}} - J(1;\beta)
\end{equation*}
and the min-entropy rate of $N^n$ is
\begin{equation*}
    -H_{\up{min}}(N) = -J(1;\beta).
\end{equation*}
\end{theorem}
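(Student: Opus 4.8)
The plan is to compute the sCGF directly from its definition, using the explicit PMF of $N^n$ from \cref{lm:soft-pmf}, and then recognize the resulting expression as the stated function of the Rényi entropy rate. First I would observe that, because \cref{lm:soft-pmf} gives $p_{N^n}(x) \propto \upe^{-\beta \wT(x)/n}$, an optimal guessing function $\GT$ guesses sequences in non-decreasing order of logistic weight, so $\GT(N^n)$ is determined by $\wT(N^n)$. The key combinatorial input is that the number of sequences with logistic weight at most $w$, and hence the guesswork rank, can be controlled via the logistic coefficients $a(n,w)$. For the sCGF computation, though, it is cleaner to follow the standard guesswork argument (as in \cite{CD12}): the expectation $\Exp{\upe^{\alpha \ln \GT(N^n)}} = \Exp{\GT(N^n)^\alpha}$ can be bounded above and below, up to subexponential factors, by $\sum_x p_{N^n}(x)^{1/(1+\alpha)}$ raised to an appropriate power when $\alpha > -1$, yielding $\Lambda_N(\alpha) = \alpha H_{1/(1+\alpha)}(N)$; for $\alpha \le -1$ the dominant contribution is the single most likely noise effect, giving $\Lambda_N(\alpha) = -H_{\up{min}}(N)$. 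Since the LRC noise effect has the clean product-normalized form of \cref{lm:soft-pmf}, these bounds can be made tight and the limit extracted cleanly.

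Next I would compute the Rényi entropy rate explicitly. Write $p_{N^n}(x) = \upe^{-\beta \wT(x)/n} / Z_n$ where $Z_n = \prod_{i=1}^n (1 + \upe^{-\beta i/n})$. Then, since $\wT$ with respect to any fixed permutation $\tau$ ranges over exactly the same multiset of values as $\wH$-type sums over all of $\{0,1\}^n$, we have
\begin{equation*}
    \sum_x p_{N^n}(x)^s = Z_n^{-s} \sum_{x\in\{0,1\}^n} \upe^{-s\beta \wT(x)/n}
        = Z_n^{-s} \prod_{i=1}^n \pqty{1 + \upe^{-s\beta i/n}}.
\end{equation*}
Taking $\frac{1}{n}\ln$ and letting $n\to\infty$, the Riemann sum $\frac{1}{n}\sum_{i=1}^n \ln(1 + \upe^{-c\beta i/n}) \to \int_0^1 \ln(1 + \upe^{-c\beta x})\dd{x} = J(1; c\beta)$. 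With $s = 1/(1+\alpha)$, the Rényi entropy rate $H_{1/(1+\alpha)}(N) = \frac{1}{1-s}\pqty{s J(1;\beta) - J(1; s\beta)} \cdot$ (sign bookkeeping), and multiplying by $\alpha$ and simplifying $1-s = \alpha/(1+\alpha)$ gives precisely $\alpha H_{1/(1+\alpha)}(N) = (1+\alpha)J\pqty{1; \frac{\beta}{1+\alpha}} - J(1;\beta)$. The min-entropy rate follows by taking $\alpha\to\infty$ (equivalently, $s\to 0$): the most likely sequence is $x=0$ with $\wT(0)=0$, so $p_{N^n}(0) = Z_n^{-1}$ and $-H_{\up{min}}(N) = \lim_n \frac{1}{n}\ln Z_n^{-1} = -J(1;\beta)$.

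The main obstacle I anticipate is making the guesswork sCGF bounds rigorous at the level of the limit — specifically, showing that the subexponential corrections relating $\Exp{\GT(N^n)^\alpha}$ to the Rényi-entropy power sum genuinely vanish in the $\frac{1}{n}\ln$ limit, and handling the boundary case $\alpha = -1$ and the transition between the two regimes. This requires either invoking the general guesswork-LDP machinery of \cite{CD12} with a verification that the LRC noise satisfies its hypotheses, or reproving the sandwich bounds in this special case using the explicit form of $a(n,w)$ (and, if sharper control is needed, Bridges' approximation from \cref{thm:bridges}). The entropy-rate computation itself, by contrast, is a routine Riemann-sum limit and should present no difficulty.
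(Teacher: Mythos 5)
Your plan reproduces the paper's proof essentially step for step: Arikan's bounds give $\Lambda_N(\alpha)=\alpha H_{1/(1+\alpha)}(N)$ for $\alpha>-1$, the R\'enyi entropy rate is a Riemann-sum limit computed exactly as you describe (using that $x\mapsto\tau(x)$ is a bijection of $\{0,1\}^n$, so the product form of the PMF survives the power sum), and Lemma 1 of \cite{CD12} pins $\Lambda_N(\alpha)=-H_{\mathrm{min}}(N)=-J(1;\beta)$ for $\alpha\leq-1$ from the fact that the all-zero sequence is the unique most probable noise effect. The verification you flag as the remaining obstacle is, in the paper, just continuity of $\Lambda_N'$ on $(-1,\infty)$, obtained by differentiating the integrand of $(1+\alpha)J\bigl(1;\beta/(1+\alpha)\bigr)$ in $\alpha$, bounding the derivative uniformly over $x\in[0,1]$ for each fixed $\alpha>-1$, and passing the derivative through the integral via dominated convergence; and your ``sign bookkeeping'' resolves as $H_s(N)=\frac{1}{1-s}\bigl[J(1;s\beta)-sJ(1;\beta)\bigr]$, which indeed yields $\alpha H_{1/(1+\alpha)}(N)=(1+\alpha)J\bigl(1;\tfrac{\beta}{1+\alpha}\bigr)-J(1;\beta)$.
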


\begin{proof}
Theorem 1 and Proposition 4 in \cite{Ari96} imply that
\begin{equation*}
\begin{split}
    (1+\ln 2)^{-\alpha} \exp(\alpha H_{\frac{1}{1+\alpha}}(N^n))
    &\leq \Exp{\upe^{\alpha\ln \GT(N^n)}} \\
    &\leq \exp(\alpha H_{\frac{1}{1+\alpha}}(N^n)).
\end{split}
\end{equation*}
Taking the logarithm and passing to the scaled limit,
\begin{align}
\nonumber
    \lim_{n\to\infty} \frac{1}{n} \ln\Exp{\upe^{\alpha\ln \GT(N^n)}}
    &= \alpha \lim_{n\to\infty} \frac{1}{n} H_{\frac{1}{1+\alpha}}(N^n) \\
\label{eq:lambda-renyi}
    &= \alpha H_{\frac{1}{1+a}}(N).
\end{align}
Substituting the soft-decision PMF for the LRC (\cref{lm:soft-pmf}),
\begin{equation*}
\begin{split}
    \sum_{x}\Pr{N^n=x}^{\frac{1}{1+\alpha}} = \prod_{i=1}^n &\left[
    \pqty{
        \frac{\upe^{\beta i/n}} {1+\upe^{\beta i/n}}
    }^{\frac{1}{1+\alpha}} \right. \\
    &\quad\left. + \pqty{
        \frac{1}{1+\upe^{\beta i/n}}
    }^{\frac{1}{1+\alpha}}
    \right].
\end{split}
\end{equation*}
This implies that
\begin{equation*}
\begin{split}
    \alpha H_{\frac{1}{1+\alpha}}(N^n) &= \sum_{i=1}^n
    (1+\alpha)\ln(1+\upe^{\frac{\beta i}{(1+\alpha) n}}) \\
    &\quad- \sum_{i=1}^n\ln(1+\upe^{\beta i/n}).
\end{split}
\end{equation*}
Scaling by $1/n$ then yields a pair of Riemann sums, both of which converge to
the corresponding integrals, such that
\begin{equation} \label{eq:lambda}
    \alpha H_{\frac{1}{1+a}}(N) =
    (1+\alpha) J\pqty{1;\frac{\beta}{1+\alpha}} - J(1;\beta).
\end{equation}
For all $\beta>0$ and $\alpha>-1$, the integrands in \cref{eq:lambda} are
continuous and finite over $[0,1]$, and hence the integrals are also
well-defined and finite.

Lemma 1 in \cite{CD12} states that if $\Lambda_N(\alpha)$ takes the form of
\cref{eq:lambda-renyi} for all $\alpha>-1$ and it has a continuous derivative
over that range, then $\Lambda_N(\alpha)$ for all $\alpha\leq -1$ is given by
\begin{equation*}
    \Lambda_N(\alpha) = \lim_{n\to\infty} \frac{1}{n} \ln\Pr{\GT(N^n)=1} =
    -H_{\up{min}}(N).
\end{equation*}
We now show that $\Lambda_N'(a)$ exists and is indeed continuous for
$\alpha>-1$. Denoting the integrand of \cref{eq:lambda} by $f(\alpha,x)$,
\begin{equation}\label{eq:fdiv}
    \pdv{\alpha} f(\alpha,x) = \ln(1+\upe^{\frac{\beta x}{1+\alpha}}) -
    \frac{\beta x \upe^{\frac{\beta x}{1+\alpha}}}
    {(1+\alpha)\pqty{1+\upe^{\frac{\beta x}{1+\alpha}}}}.
\end{equation}
Each term in \cref{eq:fdiv} is composition of exponential and logarithmic
functions with positive arguments, so $\pdv{\alpha} f(\alpha,x)$ is continuous.
For any fixed $\alpha>-1$, each term is bounded by a constant over $x\in[0,1]$.
In particular,
\begin{align*}
    \abs{\ln(1+\upe^{\frac{\beta x}{1+\alpha}})} &\leq
    \ln(1+\upe^{\frac{\beta}{1+\alpha}}), \\
    \abs{\frac{\beta x \upe^{\frac{\beta x}{1+\alpha}}}
    {(1+\alpha)\pqty{1+\upe^{\frac{\beta x}{1+\alpha}}}}} &\leq
    \frac{\beta}{1+\alpha}.
\end{align*}
\Cref{eq:fdiv} is thus differentiable over $x\in[0,1]$ for $\alpha>-1$ fixed.
By the dominated convergence theorem, we then obtain
\begin{equation*}
    \Lambda_N'(\alpha) = \int_0^1\pdv{\alpha}f(\alpha,x)\dd{x},
\end{equation*}
which is necessarily continuous for all $\alpha>-1$.

All that remains is to show that
\begin{equation*}
    -H_{\up{min}}(N) = -\int_0^1\ln(1+\upe^{-\beta x})\dd{x}.
\end{equation*}
Since the single most probable noise effect is $0^n$,
\begin{align*}
    -H_{\up{min}}(N) &= \lim_{n\to\infty} \frac{1}{n} \ln(
        \prod_{i=1}^n\frac{1}{1+\upe^{-\beta i/n}} ) \\
    &= - J(1;\beta).
\end{align*}
Again, for $x\in[0,1]$, the integrand is continuous and bounded. Thus,
$-H_{\up{min}}(N)$ is finite and strictly negative.
\end{proof}

The derivation of the hard-decision sCGF is significantly more involved,
although the final expression is wieldy. We defer the proof of the following
theorem to \cref{apx:scgf-hard}.

\begin{theorem}\label{thm:scgf-hard}
Let $Z^n$ be the hard-decision noise effect in the LRC with parameter $\beta$
The sCGF of the hard-decision guesswork process $\Bqty{n^{-1}\ln \GH(Z^n)}$ is
\begin{equation*}
    \Lambda_Z(\alpha) = \begin{cases}
        \alpha H_{\frac{1}{1+\alpha}}(Z) & \alpha\in(-1,\infty), \\
        -H_{\up{min}}(Z) & \alpha\leq -1,
    \end{cases}
\end{equation*}
where the R\'enyi entropy rate of $Z^n$ is
\begin{align*}
\begin{split}
    \alpha H_{\frac{1}{1+\alpha}}(Z) &= \max_{t\in[0,1]} \Bqty{
        \alpha h(t) + J(r_{t,\beta};\beta) - t\ln r_{t,\beta}
    } \\
    &\qquad - J(1;\beta),
\end{split}\\
    r_{t,\beta} &= \frac{\upe^{\beta t} - 1}{1 - \upe^{\beta(t-1)}},
\end{align*}
and the min-entropy rate of $Z^n$ is
\begin{equation*}
    -H_{\up{min}}(Z) = -J(1;\beta).
\end{equation*}
\end{theorem}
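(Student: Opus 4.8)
The plan is to follow the template of the proof of \cref{thm:scgf-soft}. First I would invoke the two-sided bounds of Theorem~1 and Proposition~4 in \cite{Ari96} to identify $\Lambda_Z(\alpha)=\alpha H_{\frac{1}{1+\alpha}}(Z)$ for $\alpha\in(-1,\infty)$, and then apply Lemma~1 in \cite{CD12} to obtain $\Lambda_Z(\alpha)=-H_{\up{min}}(Z)$ for $\alpha\le-1$, once a continuous derivative on $(-1,\infty)$ has been established. The min-entropy rate is immediate and identical to the soft-decision case: by \cref{lm:hard-pmf} the unique most likely noise effect is $0^n$, with $p_{Z^n}(0^n)=1/\prod_{i=1}^n\pqty{1+\upe^{-\beta i/n}}$, so $-H_{\up{min}}(Z)=\lim_n\tfrac1n\ln p_{Z^n}(0^n)=-J(1;\beta)$, coinciding with $-H_{\up{min}}(N)$.

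The substance is the R\'enyi entropy rate, and here, unlike in the soft-decision case, the R\'enyi sum does not factor over coordinates. Writing $Q_n=\prod_{i=1}^n\pqty{1+\upe^{-\beta i/n}}$ and grouping sequences by $k=\wH(x)$, \cref{lm:hard-pmf} gives
\begin{equation*}
    \sum_x p_{Z^n}(x)^{\frac{1}{1+\alpha}} = Q_n^{-\frac{1}{1+\alpha}}
    \sum_{k=0}^n \binom{n}{k}^{\frac{\alpha}{1+\alpha}}
    a^n_k(\beta)^{\frac{1}{1+\alpha}},
\end{equation*}
so $\tfrac1n\,\alpha H_{\frac{1}{1+\alpha}}(Z^n)$ equals $\tfrac{1+\alpha}{n}\ln$ of the right-hand side. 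The factor $\tfrac{1+\alpha}{n}\ln Q_n^{-1/(1+\alpha)}=-\tfrac1n\sum_{i=1}^n\ln\pqty{1+\upe^{-\beta i/n}}$ is a Riemann sum converging to $-J(1;\beta)$, which is the trailing term in the claimed expression. The new ingredient required for the remaining sum over $k$ is the exponential growth rate of the elementary symmetric polynomial $a^n_k(\beta)$.

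I would obtain that rate by a Cram\'er-type argument. The generating identity $\sum_k a^n_k(\beta)r^k=\prod_{i=1}^n\pqty{1+r\upe^{-\beta i/n}}$ gives $a^n_k(\beta)\le r^{-k}\prod_{i=1}^n\pqty{1+r\upe^{-\beta i/n}}$ for every $r>0$, hence, with $k/n\to t$, $\limsup_n\tfrac1n\ln a^n_k(\beta)\le\inf_{r>0}\bqty{J(r;\beta)-t\ln r}$; the matching lower bound comes from the tilted product measure in which index $i$ is included independently with probability $r\upe^{-\beta i/n}/\pqty{1+r\upe^{-\beta i/n}}$, under which $a^n_k(\beta)=r^{-k}\,\Pr{\abs{S}=k}\prod_{i=1}^n\pqty{1+r\upe^{-\beta i/n}}$, and choosing $r$ so the expected cardinality equals $k$ makes $\Pr{\abs{S}=k}=\upe^{-o(n)}$ by a local central limit theorem (or mere concentration). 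Integrating the first-order condition $\int_0^1 r\upe^{-\beta x}/\pqty{1+r\upe^{-\beta x}}\dd{x}=t$ gives $\beta^{-1}\ln\bqty{(1+r)/(1+r\upe^{-\beta})}=t$, i.e.\ $r=(\upe^{\beta t}-1)/(1-\upe^{\beta(t-1)})=r_{t,\beta}$, which is the unique minimizer by strict convexity in $\ln r$; thus $\tfrac1n\ln a^n_k(\beta)\to J(r_{t,\beta};\beta)-t\ln r_{t,\beta}$. Since the sum over $k$ has only $n+1$ nonnegative terms, Laplace's method then gives $\tfrac1n\ln$ of it as $\max_{t\in[0,1]}\bqty{\tfrac{\alpha}{1+\alpha}h(t)+\tfrac{1}{1+\alpha}\pqty{J(r_{t,\beta};\beta)-t\ln r_{t,\beta}}}$, and multiplying through by $1+\alpha>0$ collapses the weights into $\max_{t\in[0,1]}\Bqty{\alpha h(t)+J(r_{t,\beta};\beta)-t\ln r_{t,\beta}}$, the claimed formula. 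Passing from pointwise convergence of the per-term exponent to convergence of its maximum needs a little care, but follows from uniformity in $t$, classical for $\tfrac1n\ln\binom nk$ and inherited from the continuity in $t$ of the Cram\'er bounds for $\tfrac1n\ln a^n_k(\beta)$.

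The hard part will be the regularity check: showing that $\alpha\mapsto\alpha H_{\frac{1}{1+\alpha}}(Z)$ has a continuous derivative on $(-1,\infty)$, the hypothesis of Lemma~1 in \cite{CD12}. With $F(\alpha,t)=\alpha h(t)+J(r_{t,\beta};\beta)-t\ln r_{t,\beta}$, which is affine in $\alpha$ for fixed $t$, an envelope (Danskin) argument shows $\alpha\mapsto\max_t F(\alpha,t)$ is differentiable with derivative $h\pqty{t^\ast(\alpha)}$ provided the maximizer $t^\ast(\alpha)$ is unique and varies continuously. Uniqueness is the crux: with $I(t)=J(r_{t,\beta};\beta)-t\ln r_{t,\beta}$ the envelope theorem gives $I'(t)=-\ln r_{t,\beta}$, and since $r_{t,\beta}$ is increasing in $t$ we get $I''(t)<0$, so $F(\alpha,\cdot)$ is strictly concave with a unique interior maximizer for every $\alpha\ge0$; for $-1<\alpha<0$ the term $\alpha h(t)$ is convex, and one must show either that the strict concavity of $I$ still dominates or, directly, that $\pdv{t}F(\alpha,t)=\alpha\ln\tfrac{1-t}{t}-\ln r_{t,\beta}$ is monotone, noting that it tends to $+\infty$ as $t\to0^+$ and to $-\infty$ as $t\to1^-$ (both because $1+\alpha>0$), so an interior critical point always exists. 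Once this is in hand, Lemma~1 in \cite{CD12} gives $\Lambda_Z(\alpha)=-H_{\up{min}}(Z)=-J(1;\beta)$ for $\alpha\le-1$ and the theorem follows. I expect the appendix to spend most of its length on the Cram\'er estimate for $a^n_k(\beta)$ and on this concavity/uniqueness analysis.
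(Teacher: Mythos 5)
Your outline matches the paper's proof in \cref{apx:scgf-hard} nearly step for step: the same reduction to the exponential growth rate of $a^n_k(\beta)$, the same saddle-point upper bound from the generating function $E_n(r)=\prod_i(1+r\upe^{-\beta i/n})$, the same tilted Poisson-binomial measure for the lower bound (your tilted index set $S$ is precisely the paper's random variable $K_n(r,\beta)$), the same derivation of $r_{t,\beta}$ from the first-order condition, the same Laplace-method maximization over $t$, and the same envelope-theorem argument for strict concavity of $\alpha h(t)+J(r_{t,\beta};\beta)-t\ln r_{t,\beta}$ and hence regularity of $\Lambda_Z$ at $\alpha=-1$. The one place where the mechanism differs is in closing the lower bound. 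You propose a local CLT or concentration to show $\Pr{\abs{S}=\floor{tn}}=\upe^{-o(n)}$; note that plain concentration gives interval bounds only, so you would need unimodality to convert it to a pointwise estimate. The paper takes a purely combinatorial route instead: Newton's inequalities establish strict log-concavity of $K_n$ and that $\abs{\ln a^n_{k+1}-\ln a^n_k}=\ca{O}(\ln n)$, a mode--mean bound for unimodal distributions puts the mode within $\ca{O}(\sqrt{n})$ of $\floor{tn}$, and the elementary bound $E_n(r)\leq(n+1)\max_k a^n_k r^k$ supplies the lower bound at the mode; this avoids invoking a triangular-array local CLT and makes the uniformity in $t$ (which you correctly flag as needed for Laplace's method) transparent. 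Your flagged worry about $-1<\alpha<0$ is resolved exactly as you anticipated: \cref{lm:hard-concave} shows $g''_*(t)\leq h''(t)=-1/\bqty{t(1-t)}$, so $f''_{\alpha,\beta}(t)\leq -(1+\alpha)/\bqty{t(1-t)}<0$ for all $\alpha>-1$, i.e.\ the strict concavity of $I(t)$ does dominate.
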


The additional complexity in evaluating the hard-decision sCGF is mainly due to
the fact that the R\'enyi entropy of $Z^n$ is not readily expressible as a
simple Riemann sum with a straightforward limit. Nonetheless, the resulting
sCGF has some notable structural similarities, and comparing the two functions
offers one perspective on the differences between hard- and soft-decision ML
decoding in the LRC.

First, note that $H_{\up{min}}(N) = H_{\up{min}}(Z) = -J(1;\beta)$, which
follows from fact that the single most probable noise effects is the same
regardless of whether the reliability ordering permutation is known. We will,
in subsequent sections, again see this property reflected in the fact that the
hard- and soft-decision rate functions agree at $x=0$ (\cref{fig:rate}) and the
success exponents agree at $R=1$ (\cref{fig:expts}). The appearance of a
$-J(1;\beta)$ term in both sCGFs over $\alpha>-1$ is due to the fact that the
two noise effect PMFs can be written with the same normalizer, and is in line
with our expectation that the sCGF be continuous at $\alpha=-1$. As discussed
in \cite{CD12,DLM19}, a discontinuity at $\alpha=-1$ would capture any
exponential growth of the set of most probable noise effects, which does not
grow in the LRC.

The simplicity of the first term in the soft-decision sCGF over $\alpha>-1$
compared to the hard-decision sCGF is primarily due to the fact that when the
the received bits are independent given the reliability ordering permutation.
This allows the R\'enyi entropy to be expressed as an average bit-entropy,
which in the limit is given by an integral. On the other hand, the received
bits are not independent in the hard-decision case.

From the hard-decision perspective, all noise effects with the same Hamming
weight are equiprobable. The R\'enyi entropy is thus given by a sum over the
Hamming weight of possible noise effects, with each term composed of two
factors. The first is a binomial coefficient, which counts sequences of a given
Hamming weight, and second is an elementary symmetric polynomial in terms of
the LRC bit-flip probabilities. This polynomial captures all possible
underlying probabilities for sequences of a given Hamming weight by considering
each possible reliability ordering permutation. In the limit, the binomial
coefficient gives rise to the binary entropy term in the sCGF, while the
elementary symmetric polynomial gives rise to the terms involving
$r_{t,\beta}$. Intuitively, the parameter $r_{t,\beta}$ quantifies the
asymptotically dominant term in the elementary symmetric polynomial for
sequences of Hamming weight $\floor{tn}$. In other words, it is a
parameterization of the most probable sequence of a given Hamming weight, and
this is given by what is essentially a saddle-point optimization. The outer
maximization over $t\in[0,1]$ is then given by a second saddle-point
optimization (specifically, an application of Laplace's method) which picks out
the asymptotically dominant Hamming weight $\floor{tn}$ in the overall sum for
the R\'enyi entropy. See \cref{apx:scgf-hard} for more detail.


As $\beta$ goes to 0, note that both base-2 sCGFs $\Lambda_{N,2}$ and
$\Lambda_{Z,2}$ tend toward (\cref{fig:scgf})
\begin{equation*}
    \Lambda_2(\alpha) = \begin{cases}
        \alpha & \alpha > -\ln 2, \\
        -\ln 2 & \alpha \leq -1,
    \end{cases}
\end{equation*}
which is the sCGF for the guesswork process in a BSC with bit-flip probability
$p=1/2$ \cite{DLM19}. The noise effect distribution of that BSC is also the
limit of both the hard- and soft-decision noise effect distributions in the LRC
as $\beta\to0$.

The following lemma shows that the R\'enyi entropy rate of the soft-decision
noise effect is strictly smaller than that of the hard-decision effect,
except at $\alpha=0$, where they are equal.

\begin{lemma}\label{lm:renyi-order}
Let $N^n$ and $Z^n$ be the soft- and hard-decision noise effects in the LRC
with parameter $\beta$. Then, for all $\alpha>0$, $H_{\alpha}(N) <
H_{\alpha}(Z)$.
\end{lemma}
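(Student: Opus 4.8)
One might first try a finite-length majorization argument: writing $p_{Z^n}=\frac{1}{n!}\sum_{\pi\in S_n}p_{N^n\gib T=\pi}$ exhibits $p_{Z^n}$ as a doubly stochastic mixture of coordinate-permutations of the single base law $p_0:=p_{N^n\gib T=\mathrm{id}}$, so $p_{Z^n}$ is majorized by $p_0$; since the R\'enyi entropy is strictly Schur-concave for $\alpha>0$ and $p_0$ is non-uniform once $\beta>0$, this gives $H_\alpha(Z^n)>H_\alpha(N^n)$ at every finite $n\geq2$ — but it does not obviously imply a strict inequality between the \emph{rates}. The plan is instead to argue from the explicit R\'enyi-rate formulas. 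Reparametrizing \cref{thm:scgf-soft,thm:scgf-hard} from the sCGF variable to the R\'enyi order, for $\alpha>0$ with $\alpha\neq1$ one has $(1-\alpha)H_\alpha(N)=J(1;\alpha\beta)-\alpha J(1;\beta)$ and $(1-\alpha)H_\alpha(Z)=\max_{t\in[0,1]}\Bqty{(1-\alpha)h(t)+\alpha\,\tilde g(t;\beta)}-\alpha J(1;\beta)$, where $\tilde g(t;s):=\lim_n n^{-1}\ln a^n_{\lfloor tn\rfloor}(s)$ is the exponential growth rate of the elementary symmetric polynomial from \cref{lm:hard-pmf} (for $s=\beta$ it is the quantity $J(r_{t,\beta};\beta)-t\ln r_{t,\beta}$ appearing in \cref{thm:scgf-hard}). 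Since $\prod_{i=1}^n(1+\upe^{-si/n})=\sum_{k=0}^n a^n_k(s)$, Laplace's method also gives $J(1;s)=\max_{t\in[0,1]}\tilde g(t;s)$, attained at $t_s:=\int_0^1\upe^{-sx}/(1+\upe^{-sx})\dd{x}\in(0,\tfrac12]$. Subtracting, the claim for $\alpha\neq1$ reduces to the sign of $\max_t\Bqty{(1-\alpha)h(t)+\alpha\,\tilde g(t;\beta)}-\max_t\tilde g(t;\alpha\beta)$.

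The key analytic input is that for each fixed $t\in(0,1)$ the map $s\mapsto\tilde g(t;s)$ is \emph{strictly} convex on $[0,\infty)$, with $\tilde g(t;0)=h(t)$ since $a^n_k(0)=\binom nk$. Indeed, writing $a^n_k(s)=\sum_{\abs{S}=k}\upe^{-s\langle S\rangle/n}$ with $\langle S\rangle:=\sum_{i\in S}i$, the map $s\mapsto\ln a^n_k(s)$ is a cumulant generating function, hence convex, and $n^{-1}\ln a^n_k(s)$ has second derivative $n^{-3}$ times the variance of $\langle S\rangle$ under the corresponding exponentially tilted measure on the weight-$k$ shell; for fixed $t\in(0,1)$ that variance grows like $n^3$ (the values of $\langle S\rangle$ on a weight-$\lfloor tn\rfloor$ shell fill an interval of length of order $n^2$), so the limit $\tilde g(t;\cdot)$ inherits strictly positive curvature. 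Granting this, take $\alpha\in(0,1)$: convexity of $s\mapsto\tilde g(t;s)$ applied to the convex combination $\alpha\beta=\alpha\cdot\beta+(1-\alpha)\cdot0$ gives $\tilde g(t;\alpha\beta)\leq(1-\alpha)h(t)+\alpha\,\tilde g(t;\beta)$ for all $t$, strictly for $t\in(0,1)$; evaluating at the maximizer $t_{\alpha\beta}\in(0,1)$ of $\tilde g(\cdot;\alpha\beta)$ gives $\max_t\Bqty{(1-\alpha)h(t)+\alpha\,\tilde g(t;\beta)}>\tilde g(t_{\alpha\beta};\alpha\beta)=\max_t\tilde g(t;\alpha\beta)$, so $(1-\alpha)\bqty{H_\alpha(Z)-H_\alpha(N)}>0$ and hence $H_\alpha(Z)>H_\alpha(N)$. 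For $\alpha>1$ the same identity is a convex \emph{extrapolation}, reversing the per-$t$ inequality; evaluating at the maximizer of $(1-\alpha)h(\cdot)+\alpha\,\tilde g(\cdot;\beta)$ gives $\max_t\tilde g(t;\alpha\beta)>\max_t\Bqty{(1-\alpha)h(t)+\alpha\,\tilde g(t;\beta)}$, i.e. $(1-\alpha)\bqty{H_\alpha(Z)-H_\alpha(N)}<0$, and dividing by $1-\alpha<0$ again gives $H_\alpha(Z)>H_\alpha(N)$.

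The boundary case $\alpha=1$ is handled directly. Conditioned on the reliability ordering permutation, $N^n$ is a product of independent $\mathrm{Bernoulli}(q_i)$ bits, so $H(N^n)=\sum_{i=1}^n h(q_i)$ and $H_1(N)=\int_0^1 h(q(x))\dd{x}$ with $q(x):=\upe^{-\beta x}/(1+\upe^{-\beta x})$; by \cref{lm:hard-pmf} the Hamming weight of $Z^n$ has the log-concave law $\Pr{\wH(Z^n)=k}\propto a^n_k(\beta)$, which concentrates on the shell of normalized weight $\int_0^1 q(x)\dd{x}$, so $H_1(Z)=h\pqty{\int_0^1 q(x)\dd{x}}$; strict concavity of the binary entropy together with non-constancy of $q$ (valid since $\beta>0$) then gives $H_1(N)<H_1(Z)$. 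I expect the strictness to be the main obstacle: plain convexity of $s\mapsto\tilde g(t;s)$ is immediate and already gives the non-strict bound $H_\alpha(N)\leq H_\alpha(Z)$, but upgrading to a \emph{strict} inequality of the rates rests on the quantitative estimate that the variance of $\langle S\rangle$ under the tilted shell measure is genuinely of order $n^3$ — equivalently, that $s\mapsto\tilde g(t;s)$ has nonvanishing second derivative for interior $t$ — which leans on the same partition-counting asymptotics underlying Bridges' estimate (\cref{thm:bridges}); making that curvature bound precise, and checking that the maximizer of $(1-\alpha)h(\cdot)+\alpha\,\tilde g(\cdot;\beta)$ used in the $\alpha>1$ case lies in $(0,1)$, is where the real work lies.
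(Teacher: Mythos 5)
Your first sentence --- the finite-$n$ majorization route you consider and then set aside --- is exactly the paper's own proof: it writes $p_{Z^n}=p_{N^n\gib\tau_0}M$ with $M=\tfrac1{n!}\sum_{\tau}M_\tau$ doubly stochastic, invokes the Hardy--Littlewood--P\'olya criterion, checks strictness by comparing the two largest point masses, and cites Schur-concavity of the R\'enyi entropy. Your objection --- that $H_\alpha(N^n)<H_\alpha(Z^n)$ at every finite $n$ does not by itself force a strict inequality of the \emph{rates}, since the per-$n$ gap could a priori be $o(n)$ --- is a fair one, and the paper's proof does not address it explicitly.

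The route you actually take is genuinely different and the skeleton is sound: the reparametrization $(1-\alpha)H_\alpha(N)=J(1;\alpha\beta)-\alpha J(1;\beta)$ and its hard-decision analogue are correct consequences of \cref{thm:scgf-soft,thm:scgf-hard}; $\tilde g(t;0)=h(t)$ (from $a^n_k(0)=\binom nk$) and $J(1;s)=\max_t\tilde g(t;s)$ (from $\prod_i(1+\upe^{-si/n})=\sum_k a^n_k(s)$ plus Laplace) are both right; and the convex interpolation for $\alpha\in(0,1)$, the extrapolation for $\alpha>1$, and the Jensen argument at $\alpha=1$ would finish the proof once $s\mapsto\tilde g(t;s)$ is known to be \emph{strictly} convex for interior $t$. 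But you have located the gap yourself and accurately: mere convexity of the CGFs $s\mapsto n^{-1}\ln a^n_k(s)$ survives the limit and gives only $H_\alpha(N)\le H_\alpha(Z)$, while the strict version rests on the claimed $\Theta(n^3)$ lower bound on the tilted-shell variance of the index-sum statistic (equivalently, non-vanishing $\partial_s^2\tilde g(t;s)$), plus the interior-maximizer check for $\alpha>1$ --- none of which is proved here. Head-to-head: the paper's argument is shorter, avoids any curvature computation, and uses only the general majorization principle, though it is silent on the finite-$n$-to-rate step; yours, once the curvature bound is supplied, makes the origin of the strict gap explicit and dispenses with Schur-concavity and HLP, at the cost of pulling in the full machinery of \cref{thm:scgf-hard} and its appendix, so it is substantially less self-contained.
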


\begin{proof}
We show that the PMF of $Z^n$ is strictly majorized by the PMF of $N^n$. Since
both the R\'enyi entropy and the Shannon entropy are Schur-concave, this
implies that $H_{\alpha}(N) < H_{\alpha}(Z)$ for all $\alpha>0$.

Let $p_Z,p_{N\gib \tau}\in\R^n$ be non-increasing vectors corresponding to
the PMFs of $Z^n$ and $N^n$ given $T=\tau$. (Note that the $i$th element
of $p_Z$ and $p_{N\gib \tau}$ need not correspond to the same binary sequence,
i.e., they are sorted independently.) Without loss of generality, assume that
$T=\tau_0$. For all $\tau\in S_n$, there exists some permutation matrix
$M_{\tau}$ such that $p_{N\gib\tau} = p_{N\gib \tau_0} M_{\tau}$. Thus,
\begin{equation*}
    p_{Z} = \frac{1}{n!}\sum_{\tau\in S_n} p_{N\gib \tau_0} M_{\tau}
    = p_{N\gib \tau} M,
\end{equation*}
where
\begin{equation*}
    M = \frac{1}{n!}\sum_{\tau\in S_n} M_{\tau}
\end{equation*}
is a doubly stochastic matrix. A result of Hardy, Littlewood, and P\'{o}lya
\cite{HLP29} establishes that this is necessary and sufficient for $p_{Z}$ to
be majorized by $p_{N\gib \tau_0}$.

Letting $p_Z(i)$ and $p_{N\gib \tau_0}(i)$ denote the $i$th element of those
vectors, assume (again without loss of generality) that $p_Z(2)$ and $p_{N\gib
\tau_0}(2)$ are the probabilities for the sequence $x\in\Bqty{0,1}^n$ for which
$x_1=1$ and $x_i=0$ otherwise. This sequence is more probable under
$p_{N\gib\tau_0}$. Since $p_Z(1)$ and $p_{N\gib \tau_0}(1)$ both correspond to
the all-zero sequence,
\begin{equation*}
    \sum_{i=1}^2 p_{N\gib \tau_0}(i) > \sum_{i=1}^2 p_{Z}(i).
\end{equation*}
and the majorization of $p_Z$ by $p_{N\gib \tau_0}$ is strict.
\end{proof}

An immediate corollary is a strict ordering on the sCGFs.

\begin{corollary}\label{pr:scgf-order}
Let $\Lambda_N$ and $\Lambda_Z$ be the sCGFs for soft- and hard-decision
guesswork in the LRC. Then,
\begin{align*}
    \Lambda_{Z}(\alpha) &> \Lambda_{N}(\alpha), \qquad
        \forall\alpha\in(0,\infty), \\
    \Lambda_{Z}(\alpha) &< \Lambda_{N}(\alpha), \qquad \forall\alpha\in(-1,0).
\end{align*}
\end{corollary}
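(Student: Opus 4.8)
The plan is to transfer the strict ordering of Rényi entropy rates from Lemma \ref{lm:renyi-order} through the explicit formulas for the sCGFs given in Theorems \ref{thm:scgf-soft} and \ref{thm:scgf-hard}. Recall that for $\alpha \in (-1,\infty)$, both sCGFs take the form $\Lambda_N(\alpha) = \alpha H_{1/(1+\alpha)}(N)$ and $\Lambda_Z(\alpha) = \alpha H_{1/(1+\alpha)}(Z)$, and that for $\alpha \le -1$ both equal $-H_{\up{min}}(N) = -H_{\up{min}}(Z) = -J(1;\beta)$, hence coincide there. So the entire content is in the range $\alpha \in (-1,0) \cup (0,\infty)$, and the statement is vacuous (equality, not strict inequality) at $\alpha = 0$ and on $\alpha \le -1$.

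First I would fix $\alpha \in (0,\infty)$. Then the Rényi order is $\rho := 1/(1+\alpha) \in (0,1)$, and Lemma \ref{lm:renyi-order} (applied with the parameter $\alpha$ of that lemma set to $\rho > 0$) gives $H_{\rho}(N) < H_{\rho}(Z)$. Multiplying both sides by $\alpha > 0$ preserves the inequality, yielding $\Lambda_N(\alpha) = \alpha H_{\rho}(N) < \alpha H_{\rho}(Z) = \Lambda_Z(\alpha)$, which is the first claimed inequality. Second, for $\alpha \in (-1,0)$ the order is $\rho = 1/(1+\alpha) \in (1,\infty)$, so again $\rho > 0$ and Lemma \ref{lm:renyi-order} gives $H_{\rho}(N) < H_{\rho}(Z)$; but now multiplication by $\alpha < 0$ reverses the inequality, giving $\Lambda_N(\alpha) = \alpha H_{\rho}(N) > \alpha H_{\rho}(Z) = \Lambda_Z(\alpha)$, which is the second claim. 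This is essentially all there is to it.

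The one point that needs a moment's care — and the only place the argument could go wrong — is the indexing convention for the Rényi order. Lemma \ref{lm:renyi-order} is stated for $H_\alpha$ with $\alpha > 0$, whereas the sCGFs involve $H_{1/(1+\alpha)}$; I need the substitution $\rho = 1/(1+\alpha)$ to land in the valid range $(0,1)\cup(1,\infty)$ for every $\alpha \in (-1,0)\cup(0,\infty)$, which it does, and to always satisfy $\rho > 0$, which it also does (indeed $\rho > 0 \iff \alpha > -1$). One should also note that Lemma \ref{lm:renyi-order} as written only covers $\alpha \in (0,1) \cup (1,\infty)$ implicitly through the Rényi entropy's domain, and that the Schur-concavity argument there in fact delivers strict inequality for all such orders because the majorization established is strict; no extra work is needed. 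Thus the corollary follows immediately, and there is no genuine obstacle — the result is a direct corollary in the literal sense.
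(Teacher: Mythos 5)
Your proposal is correct and is exactly the argument the paper intends: the paper states the corollary as "immediate" from Lemma \ref{lm:renyi-order} without writing out the substitution $\rho = 1/(1+\alpha)$ and the sign flip across $\alpha = 0$, which is precisely what you supply. The only thing you add beyond the paper is the careful check that $\rho$ always lands in $(0,1)\cup(1,\infty)$ and that strictness is preserved, both of which are correct and harmless to spell out.
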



\subsection{Rate Functions}

Given the sCGFs of \cref{thm:scgf-soft,thm:scgf-hard}, it follows that the
guesswork processes satisfy LDPs \cite[Theorem 3]{CD12}.

\begin{theorem}\label{thm:rate}
In the LRC, the soft- and hard-decision guesswork processes both satisfy LDPs
with convex, lower semicontinuous rate functions $I_N,I_Z:[0,\ln 2] \to
[0,\infty)$ given by the Legendre-Fenchel transforms of the sCGFs $\Lambda_N$
and $\Lambda_Z$,
\begin{align*}
    I_N(x) &= \sup_{\alpha\in\R} \Bqty{x\alpha - \Lambda_N(\alpha)}, \\
    I_Z(x) &= \sup_{\alpha\in\R} \Bqty{x\alpha - \Lambda_Z(\alpha)}.
\end{align*}
Furthermore, $I_N$ and $I_Z$ have the following properties, stated in terms of
$N$ but holding identically for $Z$.
\begin{enumerate}
\item $I_N(0) = H_{\up{min}}(N)$.
\item $I_N(x) = 0$ if and only if $x=H_1(N)$.
\item $I_N(x)$ is strictly convex.
\end{enumerate}
\end{theorem}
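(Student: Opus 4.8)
The plan is to read off everything from the explicit sCGFs $\Lambda_N$ (\cref{thm:scgf-soft}) and $\Lambda_Z$ (\cref{thm:scgf-hard}) together with elementary convex analysis of the Legendre--Fenchel transform. First I would record that both sCGFs are finite and convex on all of $\R$ — convexity because each is a pointwise limit of the convex maps $\alpha\mapsto n^{-1}\ln\Exp{\upe^{\alpha\ln G(\cdot)}}$, finiteness being visible from the formulas — hence continuous and in particular lower semicontinuous. Next I would verify essential smoothness: each is differentiable on $(-1,\infty)$ (part of the proof of \cref{thm:scgf-soft}, resp.\ of \cref{thm:scgf-hard} in \cref{apx:scgf-hard}), constant on $(-\infty,-1)$, and $\ca{C}^1$ at $\alpha=-1$ because the right derivative there vanishes — for $\Lambda_N$ the substitution $v=\beta x/(1+\alpha)$ shows $\Lambda_N'(\alpha)\to 0$ as $\alpha\downarrow-1$, matching the left derivative, and the same holds for $\Lambda_Z$ via \cref{apx:scgf-hard}. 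The guesswork LDP of \cite[Thm.~3]{CD12} (equivalently Gärtner--Ellis \cite{DZ10}) then yields LDPs for the two processes with rate functions $I_N=\Lambda_N^{\ast}$, $I_Z=\Lambda_Z^{\ast}$, which are automatically convex and lower semicontinuous as conjugates of convex functions; that the effective domain is exactly $[0,\ln 2]$ follows from the short computations $\Lambda_N'(-1^+)=0$, $\lim_{\alpha\to\infty}\Lambda_N'(\alpha)=\ln 2$ and finiteness of $I_N$ at the two endpoints, and likewise for $I_Z$.

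For properties (1) and (2) I would argue as follows. Since $\Lambda_N$ is convex and equals $-H_{\up{min}}(N)$ on $(-\infty,-1]$, applying convexity through a point left of $-1$ gives $\Lambda_N(\alpha)\geq -H_{\up{min}}(N)$ for every $\alpha>-1$; hence $\inf_\alpha\Lambda_N(\alpha)=-H_{\up{min}}(N)$ and $I_N(0)=\sup_\alpha\{-\Lambda_N(\alpha)\}=H_{\up{min}}(N)$, which is (1). For (2), evaluating the formula gives $\Lambda_N(0)=J(1;\beta)-J(1;\beta)=0$, so $I_N(x)\geq 0$ for all $x$, with equality iff the supremum defining $I_N(x)$ is attained at $\alpha=0$, i.e.\ (Fenchel--Young) iff $x\in\partial\Lambda_N(0)=\{\Lambda_N'(0)\}$ since $\Lambda_N$ is differentiable at the origin. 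Finally $\Lambda_N'(0)=H_1(N)$: the derivative of an sCGF at $0$ is the typical value of the process, and the typical value of $n^{-1}\ln\GT(N^n)$ is the Shannon entropy rate by the AEP; equivalently, one matches $\Lambda_N'(0)=J(1;\beta)+\int_0^1\frac{\beta x}{1+\upe^{\beta x}}\dd{x}$ against $H_1(N)=\int_0^1 h\!\pqty{\upe^{-\beta x}/(1+\upe^{-\beta x})}\dd{x}$ directly. Thus $I_N(x)=0$ iff $x=H_1(N)$, giving (2); both facts transfer verbatim to $Z$.

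Property (3) I would derive from strict convexity of each sCGF on $(-1,\infty)$: if $\Lambda_N$ is strictly convex there, then $\Lambda_N'$ is a strictly increasing bijection of $(-1,\infty)$ onto $(0,\ln 2)$, for $x\in(0,\ln 2)$ the supremum in $I_N(x)$ is uniquely attained at $\alpha^\ast(x)=(\Lambda_N')^{-1}(x)$, and standard Legendre-transform theory gives $I_N\in\ca{C}^1(0,\ln 2)$ with $I_N'=(\Lambda_N')^{-1}$ strictly increasing, hence $I_N$ strictly convex on the interior of its domain. For $\Lambda_N$, differentiating the formula of \cref{thm:scgf-soft} twice and writing $v=\beta x/(1+\alpha)$ gives
\begin{equation*}
    \Lambda_N''(\alpha)=\int_0^1\frac{\beta x}{(1+\alpha)^2}\cdot\frac{v\,\upe^{-v}}{(1+\upe^{-v})^2}\dd{x}>0\qquad(\alpha>-1),
\end{equation*}
the integrand being strictly positive on $(0,1]$. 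For $\Lambda_Z$ the conclusion is the same; the extra ingredient is the inner maximization over $t\in[0,1]$ in \cref{thm:scgf-hard}, which I would handle with a Danskin/envelope argument using that the maximizer $t^\ast(\alpha)$ is unique and varies smoothly (as shown in \cref{apx:scgf-hard}), after which strict convexity of $\Lambda_Z$ on $(-1,\infty)$ again reduces to a second-derivative sign computation.

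The main obstacle is precisely this last point: establishing the $\ca{C}^1$ regularity and strict convexity of the hard-decision sCGF $\Lambda_Z$, whose defining expression is the value function of an inner optimization, so that the clean Legendre-transform machinery applies. Everything in the soft-decision case, and properties (1)--(2), are routine once the sCGF formulas from \cref{thm:scgf-soft,thm:scgf-hard} are in hand.
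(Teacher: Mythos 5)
Your proposal takes essentially the same route as the paper: read off the LDPs from the explicit sCGFs of \cref{thm:scgf-soft,thm:scgf-hard} via the guesswork large-deviations result of \cite[Theorem~3]{CD12}. The paper, in fact, treats \cref{thm:rate} as a direct corollary of that citation together with the two sCGF theorems and does not write out an argument for properties (1)--(3) at all; your convex-analysis verifications (the bound $\inf_\alpha\Lambda_N(\alpha)=-H_{\up{min}}(N)$ from the flat left tail of the sCGF, the Fenchel--Young characterisation of the zero of $I_N$ at $\Lambda_N'(0)=H_1(N)$, the $\Lambda_N''>0$ computation) are all correct and simply spell out what the paper imports wholesale from \cite{CD12}. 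The one place you flag as an obstacle --- strict convexity and $\ca{C}^1$ regularity of $\Lambda_Z$ across its inner maximisation --- is indeed where the paper also works hardest: the $\ca{C}^1$ regularity is established at the end of the proof of \cref{thm:scgf-hard} in \cref{apx:scgf-hard} via the envelope argument and \cref{lm:hard-concave}, and strict convexity is then asserted (and used, e.g.\ in \cref{pr:rate-order}) on the strength of the cited framework rather than re-proved by a second-derivative computation. So your plan is sound; the extra work you propose on $\Lambda_Z''$ would make the appeal to \cite{CD12} self-contained but is not required for correctness, and it would go through by the same Danskin/envelope route you sketch, using \cref{lm:hard-concave} to control the sign of $\alpha h''(t)+g''_{\ast}(t)$.
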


The rate functions are more readily interpretable than the sCGFs in terms of
decoding behavior, as they describe the asymptotic decay of the probability
that the true noise effect appears at any given position in the ML guessing
function. Thus, the fact that $I_N(x)=0$ if and only if $x=H_1(N)$ implies that
the only position at which the true noise effect appears with non-decaying
probability is growing like $\upe^{nH_1(N)}$ under soft-decision guesswork (and
likewise for hard-decision guesswork and $Z$). In other words, $H_1(N)$ is the
asymptotically ``typical'' value of $n^{-1}\ln \GT(N^n)$. Note that this does
not imply that the mean of $\GT(N^n)$ is growing exponentially with asymptotic
rate $H_1(N)$. Indeed, the asymptotic exponential growth rate of the mean is
given by $\Lambda_N(1) = H_{1/2}(N) \geq H_1(N)$, which was first observed by
Arikan \cite{Ari96}. This distinction is due to the ``long tail'' of guesswork.
Intuitively, the number of possible sequences is growing rapidly in $n$, but
the bulk of the probability is limited to a set sequences which is not growing
so rapidly. This is effectively the phenomenon described by Massey, who showed
and stated that ``there is no interesting upper bound'' on the average
guesswork in terms of the Shannon entropy \cite{Mas94}.

\Cref{lm:renyi-order} implies that $H_1(N) < H_1(Z)$, which is equivalent to
stating that the hard-decision capacity is less than the soft-decision
capacity. This is reflected in the fact that the zero of the soft-decision rate
function is always less than the zero of the hard-decision rate function
(\cref{fig:rate}). The following result states the soft- and hard-decision
rate functions are strictly ordered outside of the interval $(H_1(N),H_1(Z))$.

\begin{proposition}\label{pr:rate-order}
Let $I_N$ and $I_Z$ be the rate functions for soft- and hard-decision guesswork
in the LRC. Then,
\begin{align*}
    I_N(x) &< I_Z(x), \qquad 0 < x \leq H_1(N), \\
    I_N(x) &> I_Z(x), \qquad H_1(Z) \leq x.
\end{align*}
\end{proposition}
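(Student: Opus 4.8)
The plan is to derive the ordering of the rate functions directly from the ordering of the sCGFs established in \cref{pr:scgf-order}, using the fact that the Legendre--Fenchel transform is order-reversing. Concretely, if $\Lambda_Z(\alpha) > \Lambda_N(\alpha)$ for all $\alpha$ in some set, then $-\Lambda_Z(\alpha) < -\Lambda_N(\alpha)$ there, which pushes the supremum defining $I_Z$ below that defining $I_N$ — but only if the supremum in each case is actually attained at an $\alpha$ in the region where the strict inequality holds. So the real content is a careful bookkeeping of \emph{which} $\alpha$ achieves the supremum for a given $x$.

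First I would recall the standard duality facts from \cref{thm:rate} and its proof ingredients: since $\Lambda_N$ and $\Lambda_Z$ are convex and essentially smooth on $(-1,\infty)$ with $\Lambda_N(0)=\Lambda_Z(0)=0$, the maximizer $\alpha^\ast(x)$ in $I(x)=\sup_\alpha\{x\alpha - \Lambda(\alpha)\}$ satisfies $\Lambda'(\alpha^\ast(x)) = x$ when $x$ is in the interior of the range of $\Lambda'$, and $\alpha^\ast(x) > 0 \iff x > \Lambda'(0) = H_1$ while $\alpha^\ast(x) < 0 \iff x < H_1$. For $x \in (0, H_1(N)]$, the soft-decision maximizer $\alpha^\ast_N(x)$ lies in $[0,\,?)$... actually since $x \le H_1(N) = \Lambda_N'(0)$ we get $\alpha_N^\ast(x) \le 0$; I want $\alpha \in (-1,0)$ where by \cref{pr:scgf-order} we have $\Lambda_Z < \Lambda_N$. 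The second regime $x \ge H_1(Z) = \Lambda_Z'(0)$ forces $\alpha_Z^\ast(x) \ge 0$, i.e.\ $\alpha \in (0,\infty)$ where $\Lambda_Z > \Lambda_N$. So in each regime the relevant maximizer sits in the half-line where the sCGF inequality has the right sign.

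The key steps, then, are: (i) for $0 < x \le H_1(N)$, write $I_N(x) = x\alpha_N^\ast - \Lambda_N(\alpha_N^\ast)$ with $\alpha_N^\ast = \alpha_N^\ast(x) \in (-1, 0]$ (handling the left endpoint $x\to 0$, where $\alpha_N^\ast \to -1$ and $I_N(0) = H_{\min}(N) = H_{\min}(Z) = I_Z(0)$, separately — note the inequality is non-strict there, consistent with the statement excluding $x=0$); then $I_Z(x) \ge x\alpha_N^\ast - \Lambda_Z(\alpha_N^\ast)$... wait, that gives the wrong direction. I need instead: $I_Z(x) = \sup_\alpha\{x\alpha - \Lambda_Z(\alpha)\}$, and I want to show this is \emph{larger} than $I_N(x)$ in the first regime — let me recheck the claimed inequality. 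The proposition claims $I_N(x) < I_Z(x)$ for $0 < x \le H_1(N)$. Since $\Lambda_Z < \Lambda_N$ on $(-1,0)$, we get $-\Lambda_Z > -\Lambda_N$ there, so $x\alpha - \Lambda_Z(\alpha) > x\alpha - \Lambda_N(\alpha)$ for each $\alpha \in (-1,0)$; taking sup over just $\alpha \le 0$ (which for $x \le H_1(N)$ captures the true sup of the $N$ side) yields $I_Z(x) \ge \sup_{\alpha \le 0}\{x\alpha - \Lambda_Z(\alpha)\} \ge \sup_{\alpha \le 0}\{x\alpha - \Lambda_N(\alpha)\} = I_N(x)$, with the middle inequality strict because $\Lambda_Z(\alpha^\ast_N(x)) < \Lambda_N(\alpha^\ast_N(x))$ strictly whenever $\alpha^\ast_N(x) \in (-1,0)$, i.e.\ whenever $x \in (0, H_1(N))$; and at $x = H_1(N)$ one argues directly that $I_Z(H_1(N)) > 0 = I_N(H_1(N))$ using property (2) of \cref{thm:rate} together with $H_1(N) < H_1(Z)$. (ii) For $x \ge H_1(Z)$, the symmetric argument: $\Lambda_Z > \Lambda_N$ on $(0,\infty)$, so $x\alpha - \Lambda_Z(\alpha) < x\alpha - \Lambda_N(\alpha)$ for $\alpha > 0$, and since the $Z$-maximizer $\alpha_Z^\ast(x) \ge 0$ lies in that half-line while taking $\sup$ over $\alpha \ge 0$ still recovers $I_N(x)$ (because $I_N$ is increasing on $[H_1(N),\infty) \ni x$ with maximizer $\ge 0$ as well, so enlarging to all $\alpha$ changes nothing), we conclude $I_Z(x) < I_N(x)$, with strictness at $x = H_1(Z)$ handled as before via property (2).

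The main obstacle I anticipate is the careful handling of the maximizers at the boundary points and making sure the $\sup$ over a half-line genuinely equals the unrestricted $\sup$ in each regime — this hinges on monotonicity of $\Lambda'$ (convexity) and on $\Lambda'(0) = H_1$, which must be extracted cleanly from \cref{thm:rate}(2) and the structure of the sCGFs, plus on checking that both sCGFs are finite and the transforms are attained (not escaping to $\alpha = \pm\infty$) for $x$ in the open intervals considered — the behavior as $\alpha \to -1^+$, where $\Lambda_N$ and $\Lambda_Z$ both flatten to the constant $-H_{\min}$, is what makes $x = 0$ the genuine boundary and is the place where the strictness degrades, so that endpoint needs its own short argument. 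Everything else is a routine consequence of order-reversal under Legendre--Fenchel duality.
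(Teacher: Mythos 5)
Your proposal is correct and follows essentially the same route as the paper: locate the Legendre--Fenchel maximizer via $\Lambda'(\alpha^\ast)=x$ and the fact that $\Lambda'(0)=H_1$, then invoke the strict sCGF ordering from Corollary~\ref{pr:scgf-order} on the relevant half-line, with a separate short argument at the boundary points $x=H_1(N)$ and $x=H_1(Z)$. The only cosmetic difference is that you evaluate the $Z$-supremum at the $N$-maximizer's $\alpha$ (so only one maximizer needs to lie in the strict-inequality region), whereas the paper notes both maximizers lie there; this is an equivalent bookkeeping choice.
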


\begin{proof}
Since $\Lambda_N$ and $\Lambda_Z$ are strictly convex over $[-1,\infty)$,
\begin{align*}
    I_N(x) &= x k(x) - \Lambda_N(k(x)), \\
    I_Z(x) &= x l(x) - \Lambda_Z(l(x)),
\end{align*}
where $k(x),l(x)\in[-1,\infty)$ are the unique points for which
\begin{equation*}
    \Lambda'_N(k(x))=x, \qquad \Lambda'_Z(l(x))=x.
\end{equation*}

When the sCGF is strictly convex, the duality property of the Legendre-Fenchel
transform states that the slope of the sCGF at 0 is the point at which the
slope of the rate function is 0, i.e., $\Lambda'_N(0) = H_1(N)$ and
$\Lambda'_Z(0) = H_1(Z)$. Recalling that $H_1(N) < H_1(Z)$,
\begin{align*}
    x<H_1(N) &\implies k(x),l(x) < 0, \\
    x>H_1(Z) &\implies k(x),l(x) > 0.
\end{align*}

By \cref{pr:scgf-order}, $\Lambda_Z(\alpha) < \Lambda_N(\alpha) < 0$ for
$\alpha\in(-1,0)$. This implies that
\begin{equation*}
    I_N(x) = \sup_{\alpha \in R} \Bqty{x \alpha - \Lambda_N(\alpha)} <
    \sup_{\alpha \in R} \Bqty{x \alpha - \Lambda_Z(\alpha)} = I_Z(x)
\end{equation*}
if both maximizers lie in $(-1,0)$, i.e., if $0 < x < H_1(N)$.

Similarly, $0 < \Lambda_N(\alpha) < \Lambda_Z(\alpha)$ for
$\alpha\in(0,\infty)$. Thus,
\begin{equation*}
    I_N(x) = \sup_{\alpha \in R} \Bqty{x \alpha - \Lambda_N(\alpha)} >
    \sup_{\alpha \in R} \Bqty{x \alpha - \Lambda_Z(\alpha)} = I_Z(x)
\end{equation*}
if both maximizers lie in $(0,\infty)$, i.e., if $H_1(Z) < x$.

To see that the established inequalities hold for $x = H_1(N)$ and $x=H_1(Z)$
respectively, it suffices to note that although one maximizer is 0, the other
remains in the desired range.
\end{proof}

The following proposition bounds the slope of the rate functions.

\begin{proposition}\label{pr:scgf-slope}
Let $\Lambda_N$ and $\Lambda_Z$ be the sCGFs for soft- and hard-decision
guesswork in the LRC. Then, $\Lambda'_N(\alpha)\in (0,\ln 2)$ and
$\Lambda'_Z(\alpha) \in (0, \ln2)$ for $\alpha\in(0,\infty)$.
\end{proposition}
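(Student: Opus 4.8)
The plan is to handle the two sCGFs separately, reducing each to the elementary fact that the binary entropy $h$ maps $(0,\tfrac{1}{2})\cup(\tfrac{1}{2},1)$ into $(0,\ln 2)$. For the soft-decision sCGF I would differentiate the explicit formula of \cref{thm:scgf-soft}: the proof of that theorem, through the dominated-convergence argument leading to \cref{eq:fdiv}, already shows that
\begin{equation*}
    \Lambda'_N(\alpha) = \int_0^1 \bqty{ \ln\pqty{1+\upe^{\frac{\beta x}{1+\alpha}}} - \frac{\beta x}{1+\alpha}\cdot\frac{\upe^{\frac{\beta x}{1+\alpha}}}{1+\upe^{\frac{\beta x}{1+\alpha}}} }\dd{x}.
\end{equation*}
The key observation is that, with $p(s)=1/(1+\upe^{s})$, the integrand equals $h\pqty{p(\beta x/(1+\alpha))}$; this is a one-line check using $\ln p(s)=-\ln(1+\upe^{s})$ and $\ln(1-p(s))=s-\ln(1+\upe^{s})$. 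For $\alpha>0$ and $x\in(0,1]$ we have $\beta x/(1+\alpha)>0$, so $p(\beta x/(1+\alpha))\in(0,\tfrac{1}{2})$ and the integrand lies strictly in $(0,\ln 2)$; it attains the value $\ln 2$ only at the single point $x=0$. Integrating the continuous integrand over $[0,1]$ then gives $\Lambda'_N(\alpha)\in(0,\ln 2)$.

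For the hard-decision sCGF, write $\Lambda_Z(\alpha)=\max_{t\in[0,1]}F(t,\alpha)-J(1;\beta)$ with $F(t,\alpha)=\alpha h(t)+G(t)$ and $G(t)=J(r_{t,\beta};\beta)-t\ln r_{t,\beta}$, where $r_{t,\beta}$ is as in \cref{thm:scgf-hard}. The first step is to recognize $G$ as a Legendre transform: the closed form of $r_{t,\beta}$ satisfies the integral identity $t=\int_0^1 \frac{r_{t,\beta}\upe^{-\beta x}}{1+r_{t,\beta}\upe^{-\beta x}}\dd{x}$ (a one-line substitution), equivalently $\partial_r J(r;\beta)=t/r$ when $r=r_{t,\beta}$; substituting this into the $t$-derivative of $G$ collapses it to $G'(t)=-\ln r_{t,\beta}$. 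The same identity shows $r_{t,\beta}$ is strictly increasing in $t$, so $G''(t)<0$; together with $h''(t)=-1/(t(1-t))<0$ this makes $F(\cdot,\alpha)$ strictly concave on $(0,1)$. Since $\partial_t F(t,\alpha)=\alpha\ln\tfrac{1-t}{t}-\ln r_{t,\beta}$ tends to $+\infty$ as $t\to0^+$ and to $-\infty$ as $t\to1^-$ for every $\alpha\ge0$, the maximum over $[0,1]$ is attained at the unique interior point $t^{\ast}(\alpha)$ solving the first-order condition $\pqty{\tfrac{1-t}{t}}^{\alpha}=r_{t,\beta}$. Differentiating $F(t^{\ast}(\alpha),\alpha)$ in $\alpha$ and using the first-order condition (the envelope theorem; uniqueness of the maximizer also confirms that $\Lambda_Z$ is differentiable here) yields $\Lambda'_Z(\alpha)=h(t^{\ast}(\alpha))$. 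Finally, at $t=\tfrac{1}{2}$ the left side of the first-order condition is $1$ whereas $r_{1/2,\beta}=\upe^{\beta/2}>1$; as the left side is strictly decreasing and the right side strictly increasing in $t$, the crossing point satisfies $t^{\ast}(\alpha)\in(0,\tfrac{1}{2})$, so $h(t^{\ast}(\alpha))\in(0,\ln 2)$.

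I expect the hard-decision case to be the main obstacle, for bookkeeping reasons rather than conceptual ones: one must verify the Legendre relation $G'(t)=-\ln r_{t,\beta}$ and the monotonicity of $r_{t,\beta}$, check that $G$ (hence $F$) extends continuously to $t\in\{0,1\}$ so that the maximum is genuinely interior, and justify the envelope-theorem differentiation. The soft-decision case is essentially immediate once the integrand is identified as a binary entropy. As a sanity check, both lower bounds also follow from strict convexity of $\Lambda_N,\Lambda_Z$ on $[-1,\infty)$ together with $\Lambda'_N(0)=H_1(N)>0$ and $\Lambda'_Z(0)=H_1(Z)\ge H_1(N)>0$; and if the deferred proof of \cref{thm:scgf-hard} in \cref{apx:scgf-hard} already records the relation between $r_{t,\beta}$ and the integral constraint, that part may be cited directly.
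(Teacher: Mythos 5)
Your proof is correct, but it takes a genuinely different route from the paper's. The paper disposes of the claim in a few lines using a general entropic inequality: the R\'enyi entropy of any order is bounded above by $H_0$, and since $Z^n$ is not uniform the per-symbol bound $H_{1/(1+\alpha)}(Z) < \ln 2$ is strict, which gives $0 < \Lambda_Z(\alpha) < \alpha\ln 2$ for $\alpha>0$; strict convexity then forces $\Lambda'_Z(\alpha)\in(0,\ln 2)$, and \cref{pr:scgf-order} transfers the same sandwich to $\Lambda_N$. Your argument instead computes the derivatives explicitly. On the soft side you correctly identify the integrand of $\Lambda'_N(\alpha)$ as $h\bigl(p(\beta x/(1+\alpha))\bigr)$ with $p(s)=1/(1+\upe^{s})\in(0,\tfrac12)$ for $s>0$, which is a clean and informative closed form. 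On the hard side you derive $\Lambda'_Z(\alpha)=h(t^{\ast}(\alpha))$ by the envelope theorem and locate $t^{\ast}(\alpha)\in(0,\tfrac12)$ from the first-order condition $\bigl(\tfrac{1-t}{t}\bigr)^{\alpha}=r_{t,\beta}$, using $r_{1/2,\beta}=\upe^{\beta/2}>1$ together with monotonicity of both sides; this is in fact the natural generalization to arbitrary $\alpha>0$ of the paper's own \cref{lm:hard-diff}, which only treats $\alpha=1$. Both proofs rest on the same strict-convexity facts from \cref{thm:rate} (the paper implicitly in the "thus" step, you for uniqueness of the interior maximizer). What the paper's route buys is brevity and channel-independence: the argument would go through for any guesswork sCGF whose noise is supported on all of $\{0,1\}^n$ but not uniform. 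What your route buys is structure: the explicit binary-entropy form of $\Lambda'_N$ and the relation $\Lambda'_Z(\alpha)=h(t^{\ast}(\alpha))$ are statements of independent interest, and the latter is essentially reproved in \cref{apx:d-order} anyway, so your proof would let the paper state that relation once and reuse it.
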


\begin{proof}
The R\'{e}nyi entropy (of any order) is at most $H_0$. Thus,
\begin{equation*}
    H_{\frac{1}{1+\alpha}}(Z^n) < n\ln 2,
\end{equation*}
where the inequality is strict because  $Z^n$ is not distributed uniformly.
This implies that $0 < \Lambda_Z(\alpha) < \alpha\ln 2$ for $\alpha>0$, and
thus that $\Lambda'_Z(\alpha) \in (0,\ln 2)$ over that same range. By
\cref{pr:scgf-order}, $\Lambda_Z(\alpha) > \Lambda_N(\alpha)$ for
$\alpha\in(0,\infty)$, and so the same bound can be applied to $\Lambda_N$.
\end{proof}

Because the slopes of the sCGFs never reach $\ln 2$, it follows from the
duality of the Legendre-Fenchel transform that the rate functions diverge at
$x=\ln 2$ (in bits, the rate functions diverge at $x=1$, as seen in
\cref{fig:rate}). Physically, this means that the noise effect appears near the
very end of the optimal guessing order with a probability that is decaying
incredibly fast as the block length growths, which is to be expected. Note that
this divergence does not occur at $x=0$, because the slope of the sCGF is
indeed 0 at $\alpha=-1$.

\section{Error Exponents for\texorpdfstring{\\}{ }Hard- and Soft-Decision
Decoding}\label{sec:error}

\begin{figure*}
\centering
{\hfill%
\subfloat[low $\beta$]
    {\includegraphics[width=\imgwidth]{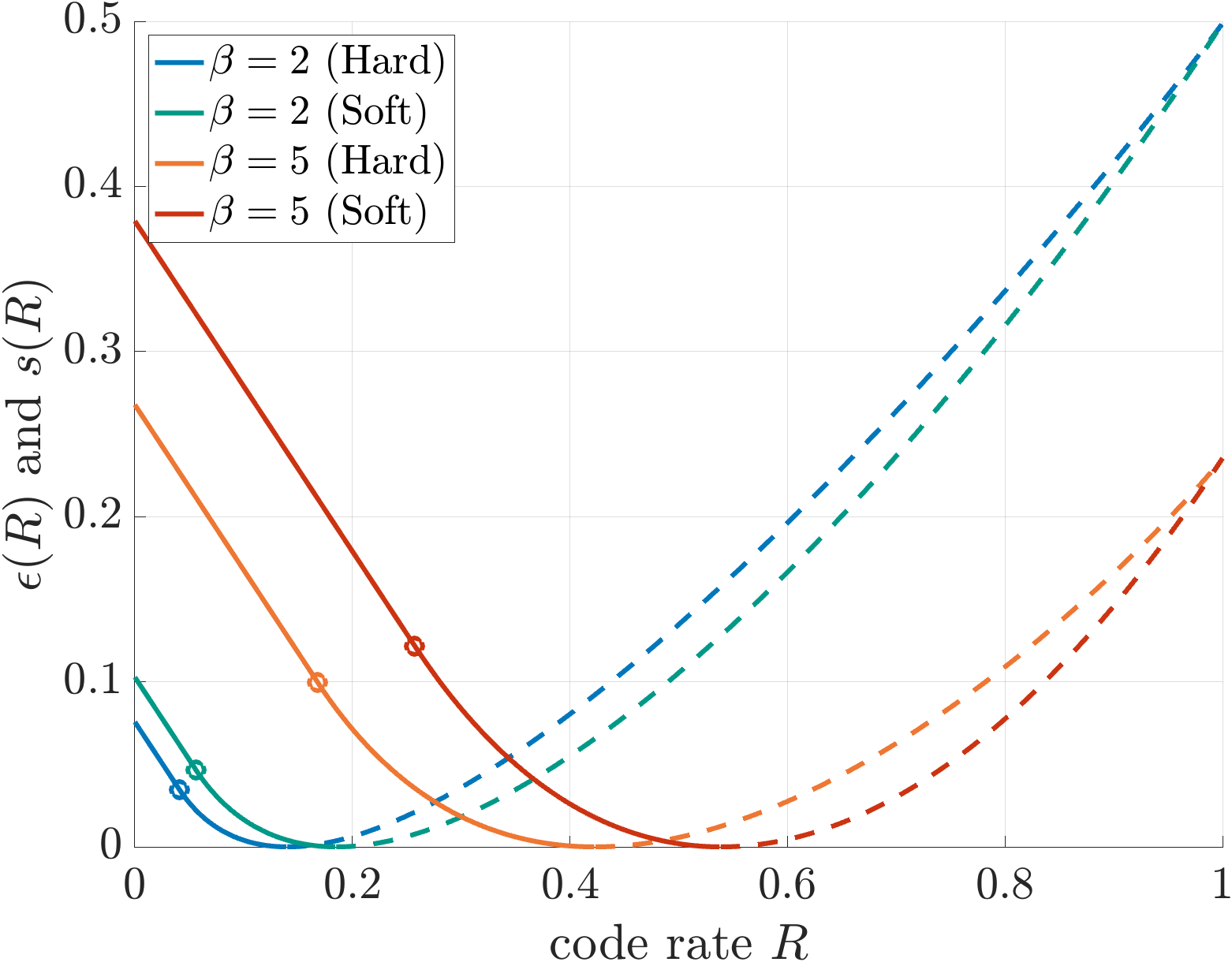}}%
\hfill%
\subfloat[high $\beta$]
    {\includegraphics[width=\imgwidth]{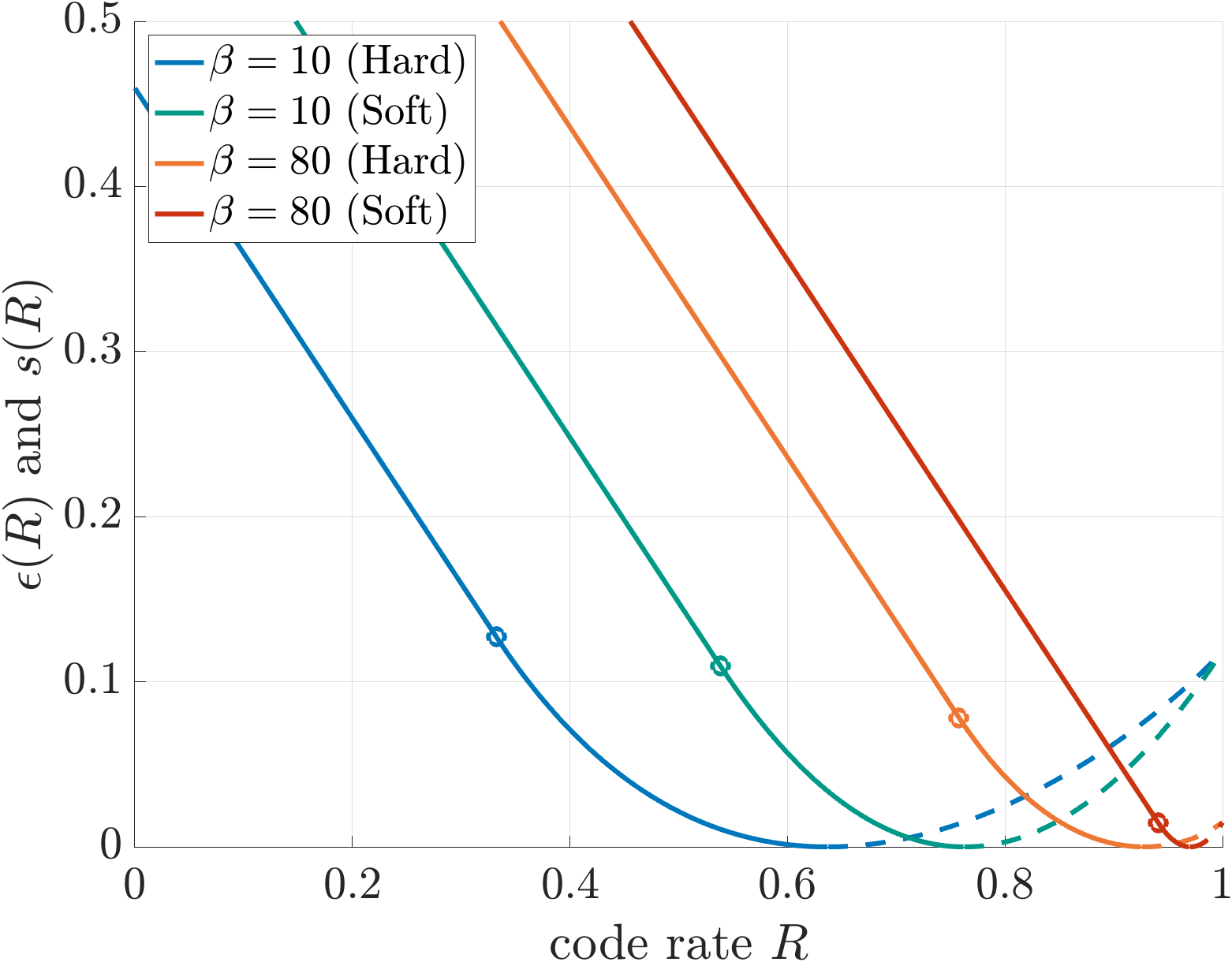}}%
\hfill}
\caption{Error exponents $\epsilon(R)$ (solid) and success exponents $s(R)$
(dashed) for soft- and hard-decision ML decoding in the LRC. Circles mark the
critical rate $R_{\up{cr}}$ at which $\epsilon(R)$ transitions from being
linear to strictly convex.}
\label{fig:expts}
\end{figure*}

As discussed in \cref{sec:ldp}, a decoding error occurs whenever the true noise
effect $N^n$ appears later in the guessing order than the first spurious noise
effect $U^n$ which also yields a code word when added to the received sequence.
The probability of a decoding error is thus
\begin{equation*}
    \Pr{G(U^n) < G(N^n) } = \Pr{n^{-1}\ln(\frac{G(U^n)}{G(N^n)}) < 0}
\end{equation*}
Given that $\Bqty{n^{-1}\ln G(N^n)}$ and $\Bqty{n^{-1}\ln G(U^n)}$ are
independent processes both satisfying LDPs with rate functions $I_N$ and $I_U$,
the joint process $\Bqty{n^{-1}\ln G(N^n),n^{-1}\ln G(U^n)}$ also satisfies an
LDP with rate function $I_{N,U}(x,y) = I_N(x) + I_U(y)$. The contraction
principle \cite{DZ10} states that applying a continuous function $f$ to a
process satisfying an LDP results in a new process also satisfying an LDP, with
the new rate function given by a transformation corresponding to the applied
function. Taking $f(x,y)=x-y$, the process $\Bqty{n^{-1}\ln(G(U^n)/G(N^n))}$
satisfies an LDP with rate function
\begin{equation*}
    I_{U/N}(x) = \inf_{a,b}\Bqty{ I_U(a) + I_N(b) : x=a-b }.
\end{equation*}

This is the approach used to prove Proposition 1 in \cite{DLM19}, which states
that GRAND is capacity-achieving while also establishing error and success
exponents, assuming that the code book is sampled uniformly at random. For such
a code, $G(U^n)$ is approximately exponentially distributed and the process
$\Bqty{n^{-1}\ln G(U^n)}$ satisfies an LDP \cite[Theorem 2]{DLM19}, where $G$
is any guessing function. We restate this result here in our notation. Note
that, when working in bits and thus with all logarithms taken to base-2, the
sCGF of a general process $A^n$ is defined to be
\begin{equation*}
    \Lambda_{A,2}(\alpha) = \lim_{n\to\infty} \frac{1}{n} \log_2
    \Exp{2^{\alpha n A^n}}.
\end{equation*}
Then, $\Lambda_{A,2}$ and the corresponding base-2 rate function $I_{A,2}$ are
given by the following transformations of the base-$\up{e}$ functions:
\begin{align}
\label{eq:b2-scgf}
    \Lambda_{A,2}(\alpha) &= \frac{1}{\ln 2}\Lambda_A(\alpha), \\
\label{eq:b2-rate}
    I_{A,2}(\alpha) &= \frac{1}{\ln 2} I_{A}(\alpha \ln 2).
\end{align}

\begin{theorem}[\cite{DLM19}]\label{thm:err-exp}
Let $N^n$ be a channel noise effect process, let $G$ be the optimal guessing
function for $N^n$, and assume that the guesswork process $\Bqty{n^{-1}\log_2
G(N^n)}$
satisfies an LDP with (base-2) rate function $I_{N,2}$. Let $R\in(0,1)$ be the
code rate and assume that the code book is sampled uniformly at random. Let
$U^n$ denote the first incorrect noise effect guessed by $G$ which also
corresponds to a code word.

If the code rate is below the channel capacity, i.e., if
\begin{equation*}
    R < C_N = 1-H_1(N),
\end{equation*}
then the probability that the GRAND algorithm using guessing function $G$ fails
to identify the transmitted code word decays exponentially in the block length
$n$. In particular,
\begin{align*}
    \epsilon(R) &= -\lim_{n\to\infty} \frac{1}{n}
        \log_2 \Pr{G(U^n) < G(N^n)} \\
    &= \begin{cases}
        1 - R - H_{1/2}(N) & R\in(0,1-x^{\ast}) \\
        I_{N,2}(1-R) & R\in [1-x^{\ast}, C_N),
    \end{cases}
\end{align*}
where $x^{\ast}\in[0,1]$, which is assumed to exist, is given by
\begin{equation*}
    I'_{N,2}(x^{\ast}) = 1.
\end{equation*}
Furthermore, the probability of a correct decoding does not decay exponentially
in $n$, i.e.,
\begin{equation*}
    s(R) = -\lim_{n\to\infty} \frac{1}{n} \log_2 \Pr{G(U^n) > G(N^n)} = 0.
\end{equation*}

Alternatively, if the code rate is above the channel capacity, the probability
of a correct decoding does decay exponentially in $n$, while the probability of
a decoding error does not. In particular, $s(R) = I_{N,2}(1-R)$ and
$\epsilon(R)=0$.
\end{theorem}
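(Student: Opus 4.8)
The statement is essentially \cite[Proposition 1]{DLM19} restated in the present notation, so the plan is to complete the argument set up in the paragraph preceding it. Three ingredients are needed. The first is the hypothesized LDP for the true-noise guesswork $\Bqty{n^{-1}\log_2 G(N^n)}$ with rate function $I_{N,2}$. The other two come from \cite[Theorem 2]{DLM19}: for a uniformly random codebook of rate $R$, conditioning on the transmitted word and on $N^n$ fixes $Y^n$, after which the spurious offsets $\Bqty{Y^n+c:c\neq c^{\ast}}$ are i.i.d.\ uniform on $\Bqty{0,1}^n$ and none equals $N^n$, so $G(U^n)$ is the minimum of $2^{nR}-1$ i.i.d.\ uniform positions in $[2^n]$; this law does not depend on $N^n$ (asymptotically), which gives the independence, and its lower tail $\Pr{G(U^n)\le 2^{nx}}\approx 2^{-n((1-R)-x)}$ for $x<1-R$ together with its super-exponentially light upper tail gives an LDP for $\Bqty{n^{-1}\log_2 G(U^n)}$ with rate function $I_{U,2}(x)=(1-R)-x$ on $[0,1-R]$ and $+\infty$ for $x>1-R$. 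Independence then yields a joint LDP with rate $I_{N,2}(b)+I_{U,2}(a)$; the contraction principle applied to $(a,b)\mapsto a-b$ gives an LDP for $n^{-1}\log_2(G(U^n)/G(N^n))$ with rate function $I_{U/N}(z)=\inf\Bqty{I_{U,2}(a)+I_{N,2}(b):a-b=z}$; and, using convexity and lower semicontinuity of $I_{U/N}$ to equate the infimum over an open half-line with that over its closure, the error and success events translate to $\epsilon(R)=\inf\Bqty{I_{U,2}(a)+I_{N,2}(b):a\le b}$ and $s(R)=\inf\Bqty{I_{U,2}(a)+I_{N,2}(b):a\ge b}$.

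The remaining work, which I expect to be the crux, is evaluating these two variational problems from the known shape of $I_{N,2}$. By \cref{thm:rate}, $I_{N,2}$ is strictly convex with $I_{N,2}(0)=H_{\up{min}}(N)$ and unique zero at $H_1(N)$, so $I'_{N,2}$ is strictly increasing, vanishes at $H_1(N)$, and equals $1$ at $x^{\ast}$; moreover, Legendre duality gives $I_{N,2}(x^{\ast})=x^{\ast}-\Lambda_{N,2}(1)=x^{\ast}-H_{1/2}(N)$, the last equality being Arikan's identity for the mean guesswork \cite{Ari96}. Since the objective $I_{U,2}(a)+I_{N,2}(b)$ decreases in $a$ and $a$ is constrained by $a\le b$ and $a\le 1-R$, one sets $a=\min(b,1-R)$ and optimizes over $b$. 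For $R<C_N$ (so $1-R>H_1(N)$): if $1-R\ge x^{\ast}$, i.e.\ $R\le 1-x^{\ast}$, the stationary point $b=a=x^{\ast}$ is feasible and gives $\epsilon(R)=(1-R)-x^{\ast}+I_{N,2}(x^{\ast})=1-R-H_{1/2}(N)$, and the alternative branch $b>1-R$, $a=1-R$ yields only the larger value $I_{N,2}(1-R)\ge 1-R-H_{1/2}(N)$ since $I'_{N,2}\ge 1$ beyond $x^{\ast}$; if instead $1-x^{\ast}\le R<C_N$ the constraint $a\le 1-R$ binds, the minimum sits at $b=a=1-R$, and $\epsilon(R)=I_{N,2}(1-R)$. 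For $R>C_N$ (so $1-R<H_1(N)$) one sends $b\to H_1(N)$ with $a=1-R$ to force $\epsilon(R)=0$. The success exponent is handled symmetrically: for $R<C_N$ take $b\to H_1(N)<1-R$ and $a\to 1-R$ so that $s(R)=0$, while for $R>C_N$ the constrained minimum is again at $b\to 1-R$, $a=1-R$, giving $s(R)=I_{N,2}(1-R)$.

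The genuinely probabilistic content---the LDP for $G(U^n)$ and its asymptotic independence from $N^n$---is the random-coding heart of the result, but it is available off the shelf from \cite{DLM19}, so the part needing care here is the variational computation: confirming feasibility of candidate minimizers against both constraints, equating the infima over the open and closed half-lines, and verifying that the two branches of $\epsilon(R)$ join continuously at the critical rate $R_{\up{cr}}=1-x^{\ast}$, which is exactly the inequality $I_{N,2}(1-R)\ge 1-R-H_{1/2}(N)$ for $1-R\ge x^{\ast}$.
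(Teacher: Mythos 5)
Your reconstruction is correct and matches the approach the paper outlines in the paragraph preceding the theorem; note, however, that the paper does not actually supply a proof here—it imports the result directly from \cite{DLM19} (Proposition~1 and Theorem~2), so there is no in-paper argument to compare against. Your fleshing-out of the variational step (minimizing $I_{U,2}(a)+I_{N,2}(b)$ over $a\le b$ or $a\ge b$, using strict convexity of $I_{N,2}$, Legendre duality $I_{N,2}(x^{\ast})=x^{\ast}-H_{1/2}(N)$, and the piecewise-linear rate function for $G(U^n)$) is accurate and captures exactly where the critical rate and the two branches come from.
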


The transition point $1-x^{\ast}$, below which the error exponent is linear and
above which it is strictly convex, was first observed by Gallager in the
context of discrete-time memoryless channels \cite{Gal65}, who called it the
\emph{critical rate}. In the sequel, we accordingly denote the point
$1-x^{\ast}$ by $R_{\up{cr}}$. The analysis via which Gallager demonstrated the
existence of this critical rate, however, does not illuminate why the error
exponent is linear in one regime and strictly convex in the other. The large
deviations approach via GRAND offers a clear interpretation. At rates below
$R_{\up{cr}}$, the most likely way for a decoding error to occur is that
$G(N^n)$ is near its average, which is why $H_{1/2}(N)$ appears, but the first
spurious noise effect $U^n$ appears atypically early. At rates above
$R_{\up{cr}}$, the code, and thus $G(U^n)$, are typical, but the noise effect
is exceptionally unlikely and far down in the guessing order, which is why this
portion of the error exponent is given by the rate function $I_N$.

To apply \cref{thm:err-exp} to the LRC, we need only show that the critical
rate exists under both hard- and soft-decision guesswork. The following
proposition does so, and further shows that the critical rate is greater for
soft-decision guesswork than it is for hard-decision.

\begin{proposition}\label{pr:rcr}
Let $I_{N,2}$ and $I_{Z,2}$ be the rate functions (in bits) for soft- and
hard-decision guesswork in the LRC. There exist unique
$x^{\ast},y^{\ast}\in(0,1)$ such that
\begin{equation*}
    I'_{N,2}(x^{\ast}) = 1, \qquad I'_{Z,2}(y^{\ast}) = 1.
\end{equation*}
Furthermore, $1-y^{\ast} < 1-x^{\ast}$.
\end{proposition}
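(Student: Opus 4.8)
The plan is to convert the defining condition of the critical rate into an evaluation of the sCGFs at $\alpha=1$ via Legendre--Fenchel duality, to read off existence, uniqueness, and membership in $(0,1)$ from \cref{thm:rate,pr:scgf-slope}, and then to reduce the ordering statement to the single inequality $\Lambda'_{N}(1)<\Lambda'_{Z}(1)$.

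First I would use duality. Both $\Lambda_{N,2}$ and $\Lambda_{Z,2}$ are convex, differentiable, and strictly convex on $[-1,\infty)$ (as used in the proof of \cref{pr:rate-order}), so their Legendre--Fenchel transforms $I_{N,2}$ and $I_{Z,2}$ are differentiable on the interiors of their effective domains, where $I'_{N,2}$ and $\Lambda'_{N,2}$ are mutually inverse (and likewise for $Z$). Since $\alpha=1$ lies in the strictly convex branch, $I'_{N,2}(x)=1$ holds precisely for $x=\Lambda'_{N,2}(1)$ and $I'_{Z,2}(y)=1$ holds precisely for $y=\Lambda'_{Z,2}(1)$. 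Thus the candidates are $x^{\ast}=\Lambda'_{N,2}(1)=\Lambda'_{N}(1)/\ln 2$ and $y^{\ast}=\Lambda'_{Z,2}(1)=\Lambda'_{Z}(1)/\ln 2$. By \cref{pr:scgf-slope}, $\Lambda'_{N}(1),\Lambda'_{Z}(1)\in(0,\ln 2)$, so $x^{\ast},y^{\ast}\in(0,1)$ as required; and since $I_{N,2}$ and $I_{Z,2}$ are strictly convex by \cref{thm:rate}, $I'_{N,2}$ and $I'_{Z,2}$ are strictly increasing, hence injective, so $x^{\ast}$ and $y^{\ast}$ are the only points at which the respective derivatives equal $1$. (Equivalently, one could invoke the intermediate value theorem, using that $I'_{N,2}$ vanishes at $x=H_1(N)$ and diverges as $x\to 1$.)

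It then remains to prove $1-y^{\ast}<1-x^{\ast}$, i.e. $x^{\ast}<y^{\ast}$, which by the above is equivalent to $\Lambda'_{N}(1)<\Lambda'_{Z}(1)$; this is the crux and the main obstacle. It is \emph{not} a consequence of the pointwise ordering $\Lambda_{Z}(\alpha)>\Lambda_{N}(\alpha)$ on $(0,\infty)$ from \cref{pr:scgf-order}, since two convex functions that agree at $0$ and satisfy a strict pointwise inequality afterwards need not be ordered by derivative at an interior point; nor, as a short computation shows, do the tangent/secant bounds available from convexity, combined with $H_{1/2}(N)<H_{1/2}(Z)$ from \cref{lm:renyi-order}, suffice on their own. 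Instead one must differentiate the explicit sCGF expressions of \cref{thm:scgf-soft,thm:scgf-hard}: on the soft-decision side, differentiate $\Lambda_{N}(\alpha)=(1+\alpha)\,J(1;\beta/(1+\alpha))-J(1;\beta)$ at $\alpha=1$ to obtain an integral expression; on the hard-decision side, differentiate the inner-maximized formula defining $\Lambda_{Z}$ (applying the envelope theorem to the maximization over $t$), and then show the resulting value strictly exceeds $\Lambda'_{N}(1)$, most plausibly by a monotonicity or majorization comparison in the spirit of \cref{lm:renyi-order}. I would isolate this derivative comparison as a separate lemma (\cref{lm:scgf-d-order}) in \cref{apx:d-order} and invoke it here to conclude.
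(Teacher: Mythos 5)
Your proof is correct and follows the paper's argument essentially verbatim: Legendre--Fenchel duality reduces the critical-rate condition to $x^{\ast}=\Lambda'_{N,2}(1)$ and $y^{\ast}=\Lambda'_{Z,2}(1)$, \cref{pr:scgf-slope} and strict convexity give existence and uniqueness in $(0,1)$, and the ordering is deferred to \cref{lm:scgf-d-order} in \cref{apx:d-order}. Your added observation that the pointwise sCGF ordering from \cref{pr:scgf-order} does not by itself yield the derivative ordering at $\alpha=1$ is accurate and correctly motivates why the separate lemma is needed.
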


\begin{proof}
Since $\Lambda_{N,2}$ is strictly convex over $[-1,\infty)$, by the duality of
the Legendre-Fenchel transform, $x^{\ast} = \Lambda'_{N,2}(1)$. By
\cref{pr:scgf-slope}, $\Lambda'_{N,2}(1)\in(0,1)$ and thus $x^{\ast}$ exists
and is unique. The same argument holds for $y^{\ast}$ and $\Lambda'_{Z,2}(1)$.

In \cref{apx:d-order}, and in \cref{lm:scgf-d-order} in particular, it is shown
that $\Lambda'_{N,2}(1) < \Lambda'_{Z,2}(1)$. It follows that $1 - y^{\ast} < 1
- x^{\ast}$.
\end{proof}

The following result gives a strict ordering on the error and success
exponents, showing that soft-decision ML decoding outperforms hard-decision ML
decoding in the LRC.

\begin{proposition}\label{pr:soft-win}
Let $\epsilon_{N}(R)$ and $\epsilon_{Z}(R)$ denote the error exponents for
soft- and hard-decision ML decoding in the LRC. Then, $\epsilon_{N}(R) >
\epsilon_{Z}(R)$ for all $R\in[0,C_Z]$, i.e., when the code rate is below the
hard-decision capacity.

Similarly, let $s_N(R)$ and $s_Z(R)$ denote the respective success exponents in
the LRC. Then, $s_N(R)<s_Z(R)$ for all $R\in[C_N,1)$, i.e., when the code rate
is above the soft-decision capacity.
\end{proposition}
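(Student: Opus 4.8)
The plan is to derive both orderings directly from the explicit forms of the exponents in \cref{thm:err-exp}, splitting each into the linear (pre-critical-rate) regime and the strictly convex (post-critical-rate) regime, and in each regime invoking the appropriate inequality already established: \cref{lm:renyi-order} for the Rényi entropy ordering and \cref{pr:rcr} for the critical-rate ordering. First I would handle the error exponent for $R\in[0,C_Z]$. Recall $C_Z = 1-H_1(Z) < 1-H_1(N) = C_N$ by \cref{lm:renyi-order}, so the whole interval $[0,C_Z]$ lies strictly below $C_N$ and both exponents are positive there. On the linear piece $R < R_{\up{cr}}$ (using that, by \cref{pr:rcr}, the hard-decision critical rate $1-y^\ast$ is the smaller one, so on $[0,1-y^\ast)$ both exponents are in their linear regime), we have $\epsilon_N(R) = 1-R-H_{1/2}(N)$ and $\epsilon_Z(R) = 1-R-H_{1/2}(Z)$; since $\alpha = 1/2$ corresponds to $\alpha>0$ in \cref{lm:renyi-order}, $H_{1/2}(N) < H_{1/2}(Z)$, giving $\epsilon_N(R) > \epsilon_Z(R)$ immediately. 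The remaining subinterval is $[1-y^\ast, C_Z]$, where I would compare $I_{N,2}(1-R)$ against either the linear piece of $\epsilon_Z$ or $I_{Z,2}(1-R)$, depending on whether $R < 1-x^\ast$ or not.

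The key step is therefore the comparison on $[1-y^\ast, C_Z]$. For the sub-subinterval $[1-x^\ast, C_Z]$ (both in the convex regime), the relevant argument is $x=1-R$, which ranges over $[1-C_Z, x^\ast] = [H_1(Z), x^\ast]$; in particular $x \geq H_1(Z)$, so by \cref{pr:rate-order} we have $I_{N,2}(x) > I_{Z,2}(x)$ on exactly this range (the base-2 transformation \cref{eq:b2-rate} preserves the inequality since it is an affine reparameterization), which yields $\epsilon_N(R) > \epsilon_Z(R)$. For the sub-subinterval $[1-y^\ast, 1-x^\ast]$ (soft still linear, hard already convex), I would compare $\epsilon_N(R) = 1-R-H_{1/2}(N)$ against $\epsilon_Z(R) = I_{Z,2}(1-R)$; here I would use the tangent-line characterization: since $\Lambda_{Z,2}$ is convex, $I_{Z,2}$ lies above none of its supporting lines but the linear piece $1-R-H_{1/2}(Z)$ of $\epsilon_Z$ is precisely the value dictated by the supporting line of $I_{Z,2}$ at slope $1$, so on $[1-y^\ast,1-x^\ast]$ we have $\epsilon_Z(R) = I_{Z,2}(1-R) \le 1-R-H_{1/2}(Z) < 1-R-H_{1/2}(N) = \epsilon_N(R)$, using $H_{1/2}(N)<H_{1/2}(Z)$ again. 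Actually the cleanest unified statement is: $\epsilon_Z(R) \le 1-R-H_{1/2}(Z)$ for all $R$ (the linear piece is an upper bound for the convex-regime value, by convexity of $I_{Z,2}$ and duality), while $\epsilon_N(R) \ge 1-R-H_{1/2}(N)$ for all $R < C_N$ with the reverse bound only where needed; combined with $H_{1/2}(N) < H_{1/2}(Z)$ this almost closes the argument, except it does not directly give strictness on $[1-x^\ast, C_Z]$ where $\epsilon_N$ is itself convex and possibly below its own linear extension — which is why the $[1-x^\ast,C_Z]$ case genuinely needs \cref{pr:rate-order} rather than the convexity bound.

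For the success exponents on $R\in[C_N,1)$: here $s_N(R) = I_{N,2}(1-R)$ and $s_Z(R) = I_{Z,2}(1-R)$ with $1-R \in (0, 1-C_N] = (0, H_1(N)]$, so the argument $x=1-R$ satisfies $0 < x \le H_1(N)$, which is exactly the range in \cref{pr:rate-order} where $I_{N,2}(x) < I_{Z,2}(x)$ — giving $s_N(R) < s_Z(R)$ directly. Note that at $R=1$ both equal $I_{\cdot,2}(0) = H_{\up{min}}/\ln 2$ and these coincide, consistent with the excerpt's remark that the success exponents agree at $R=1$, so the strict interval is $[C_N, 1)$ as stated. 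The main obstacle I anticipate is the bookkeeping in the error-exponent case: correctly matching up which of the two exponents is in which regime over each of the possibly three subintervals of $[0,C_Z]$ cut out by the two critical rates $1-y^\ast < 1-x^\ast$, and making sure the boundary points are handled (at $R=1-x^\ast$ the soft exponent transitions, at $R=1-y^\ast$ the hard one does, and one must check the inequality is still strict, not merely weak, at these rates — which follows because \cref{pr:rate-order}'s strictness and the strict inequality $H_{1/2}(N)<H_{1/2}(Z)$ both persist at the endpoints).
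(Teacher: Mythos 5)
Your handling of the intervals $[0, 1-y^\ast]$ (both exponents linear, compare via $H_{1/2}(N)<H_{1/2}(Z)$ from \cref{lm:renyi-order}) and $[1-x^\ast, C_Z]$ (both exponents in their rate-function regime, apply \cref{pr:rate-order} at $x=1-R\geq H_1(Z)$), as well as the success-exponent comparison on $[C_N,1)$, are all correct and align with the paper's proof. The problem is the intermediate interval $[1-y^\ast, 1-x^\ast]$, where your key inequality is reversed.

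You claim $I_{Z,2}(1-R)\leq 1-R-H_{1/2}(Z)$, i.e., that the rate-function portion of $\epsilon_Z$ lies below the linear extension of its linear piece. This is backward. By Legendre--Fenchel duality, $I_{Z,2}(x)\geq \alpha x-\Lambda_{Z,2}(\alpha)$ for every $\alpha$; taking $\alpha=1$ gives $I_{Z,2}(x)\geq x-\Lambda_{Z,2}(1)=x-H_{1/2}(Z)$, so $I_{Z,2}(1-R)\geq 1-R-H_{1/2}(Z)$. Equivalently, $\epsilon_Z$ is a convex function of $R$ and the line $1-R-H_{1/2}(Z)$ is exactly its supporting (tangent) line at the critical rate $1-y^\ast$, so the rate-function piece lies \emph{above} this line for $R>1-y^\ast$, not below. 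The same sign error invalidates your ``cleanest unified statement,'' which asserts the linear piece upper-bounds the convex-regime values; the opposite holds. Your chain $\epsilon_Z(R)\leq 1-R-H_{1/2}(Z)<1-R-H_{1/2}(N)=\epsilon_N(R)$ therefore fails at the first step, and knowing that both $\epsilon_Z$ and $\epsilon_N$ exceed their respective linear extensions provides no ordering between them.

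The paper closes this case by a different convexity argument. Having already established $\epsilon_Z<\epsilon_N$ at the left endpoint $R=1-y^\ast$ (via the linear comparison) and at the right endpoint $R=1-x^\ast$ (via \cref{pr:rate-order}), one observes that on $[1-y^\ast, 1-x^\ast]$ the function $\epsilon_N$ is still affine while $\epsilon_Z$ is convex as a function of $R$. A convex function lying strictly below an affine function at both endpoints of an interval lies strictly below it throughout, which yields $\epsilon_Z<\epsilon_N$ on the whole intermediate range. Replacing your tangent-line bound with this convex-below-affine argument repairs the proof and reproduces the paper's reasoning.
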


\begin{proof}
Let $R^{\ast}_Z$ and $R^{\ast}_N$ denote the critical rates for hard- and
soft-decision decoding in the LRC respectively. By \cref{pr:rcr}, $R^{\ast}_Z
< R^{\ast}_N$. By \cref{lm:renyi-order}, $C_Z < C_N$.

\Cref{lm:renyi-order} also gives $H_{1/2}(N) < H_{1/2}(Z)$ and thus that
$\epsilon_Z(R) < \epsilon_N(R)$ for $R\in[0,R^{\ast}_Z]$, the regime over which
both exponents are linear. Similarly, it follows from \cref{pr:rate-order} that
$\epsilon_Z(R) < \epsilon_N(R)$ for $R\in[R^{\ast}_N,C_Z]$, over which
$\epsilon_N(R)$ is strictly convex and $\epsilon_Z(R)$ is either strictly
convex or zero. The fact that $\epsilon_Z(R) < \epsilon_N(R)$ over the
intermediate region $R\in[R^{\ast}_Z,R^{\ast}_N]$ then follows from the
convexity of the error exponents.

Finally, \cref{pr:rate-order} implies that $s_N(R) < s_Z(R)$ for $R\in[C_N,1)$.
\end{proof}

\Cref{pr:soft-win} asserts that the error and success exponents for hard- and
soft-decision ML decoding are never identical, but the magnitude of the
difference does depend on $\beta$ (\cref{fig:expts}). Intuitively, when $\beta$
is small, the majority of the bits are unreliable and the exact reliability
ordering permutation does not offer much additional information. In that case,
guessing by Hamming weight is nearly optimal. On the other hand, when $\beta$
is high, most bits are correctly received and the correct noise effect will be
guessed early enough by both decoders such that the difference in performance
is relatively small. The difference is most noticeable in the intermediate
regime, where $\beta$ is big enough for there to be a substantial portion of
reliable bits and knowing the reliability ordering permutation is valuable, but
small enough such that the noise effect is not guessed too early.

\begin{figure*}
\centering
{\hfill%
\subfloat[rate functions (hard-decision LRC vs BSC)]
    {\includegraphics[width=\imgwidth]{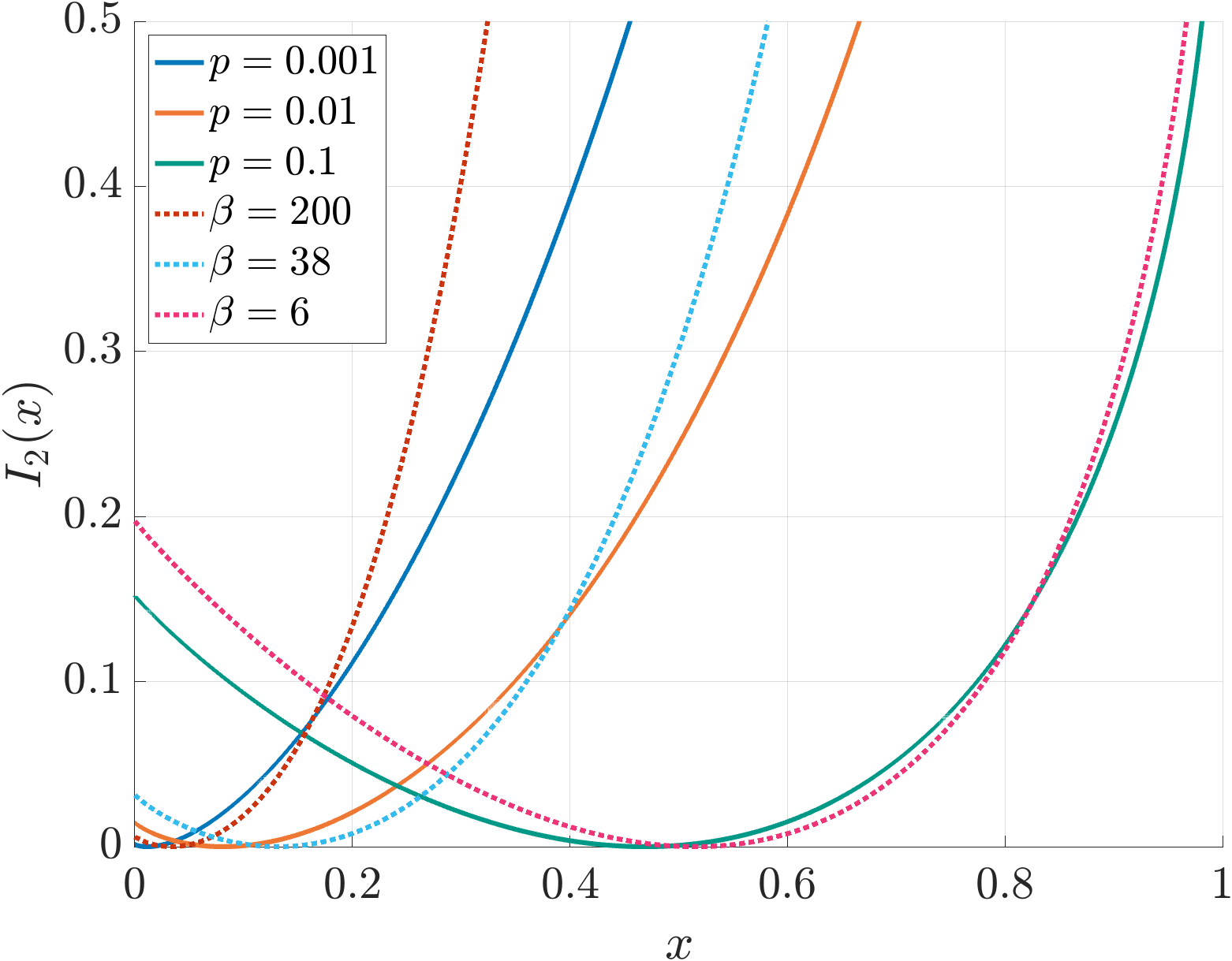}\label{fig:hr}}%
\hfill%
\subfloat[error exponents (hard-decision LRC vs BSC)]
    {\includegraphics[width=\imgwidth]{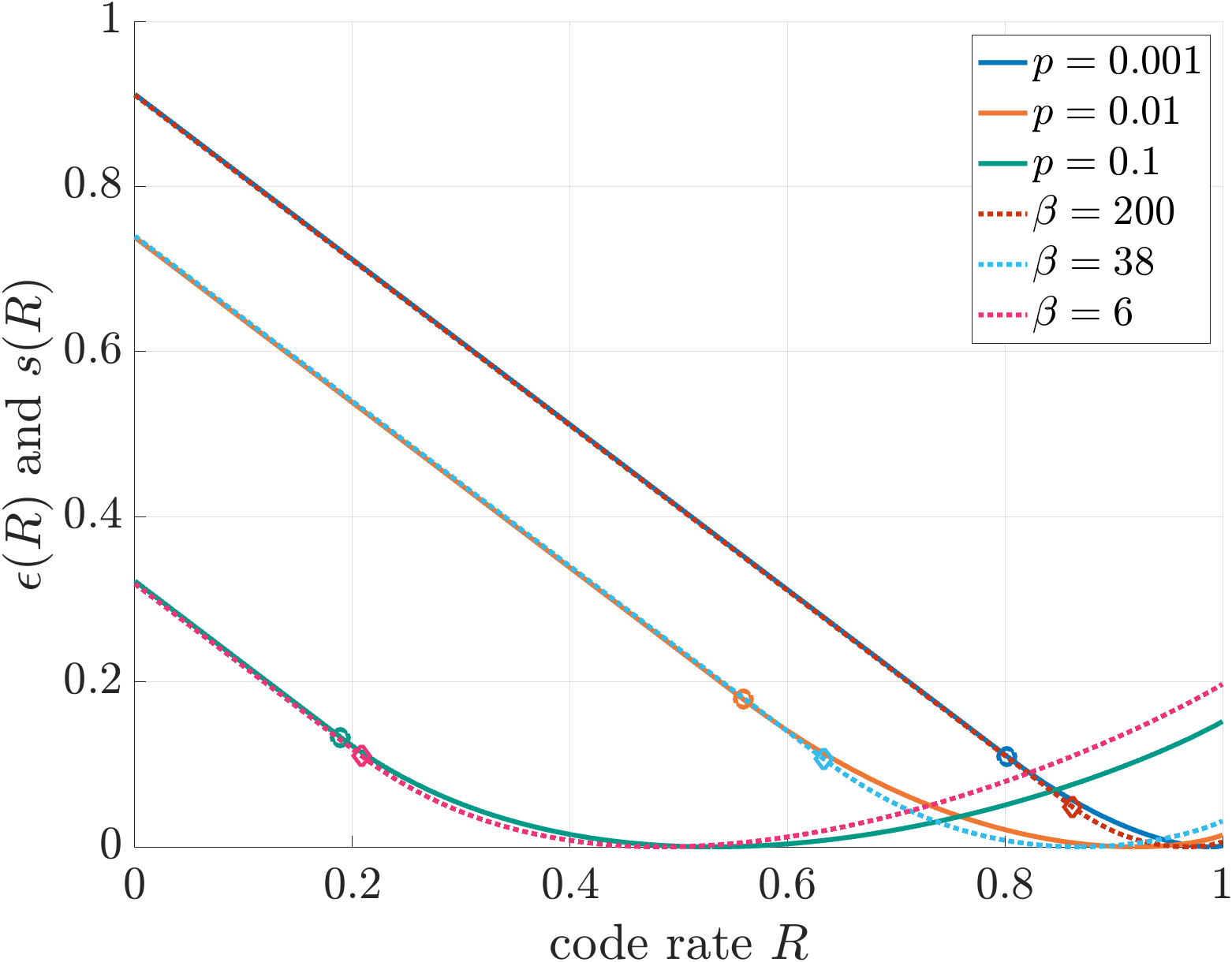}\label{fig:he}}%
\hfill}

{\hfill%
\subfloat[rate functions (soft-decision LRC vs BSC)]
    {\includegraphics[width=\imgwidth]{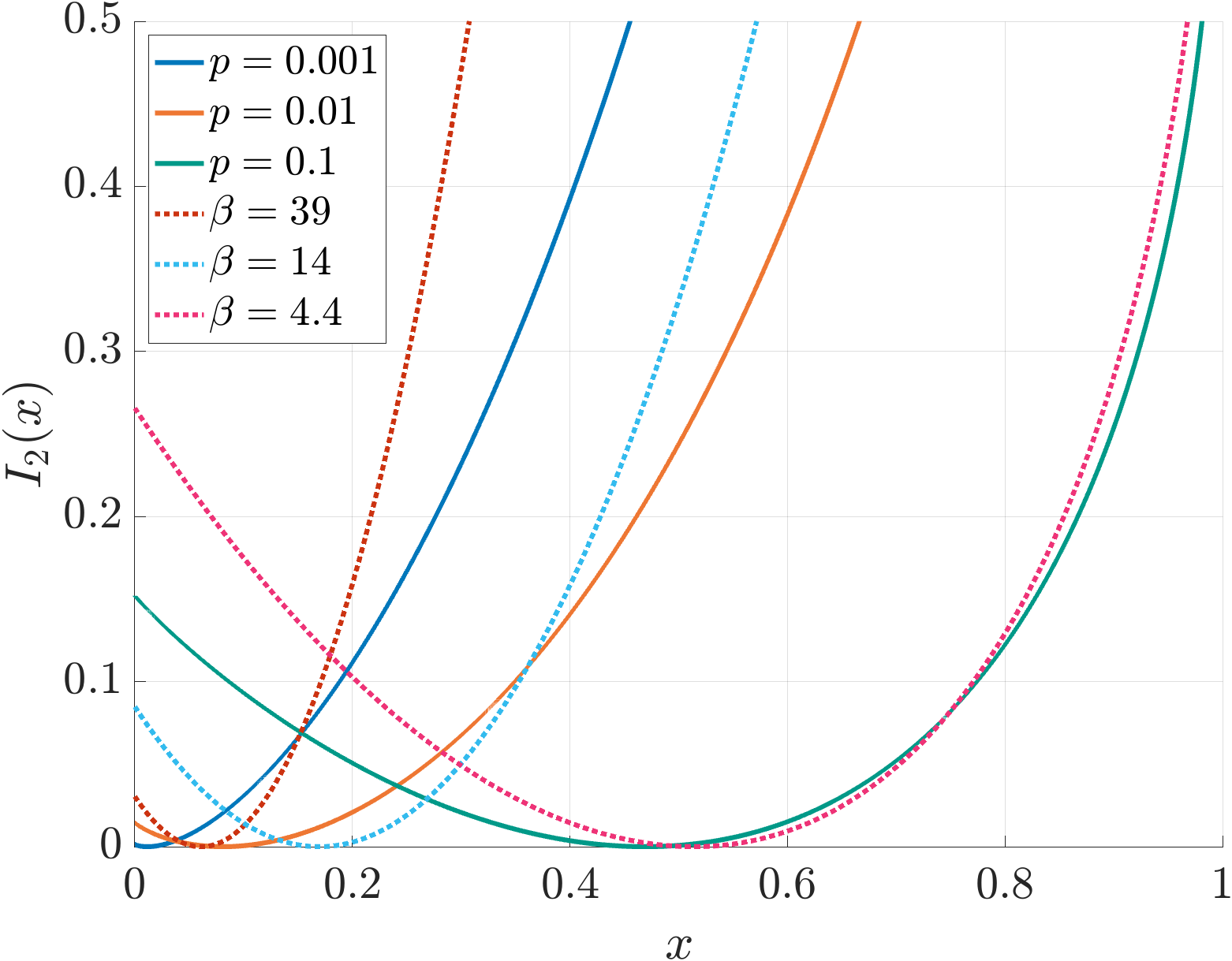}\label{fig:sr}}%
\hfill%
\subfloat[error exponents (soft-decision LRC vs BSC)]
    {\includegraphics[width=\imgwidth]{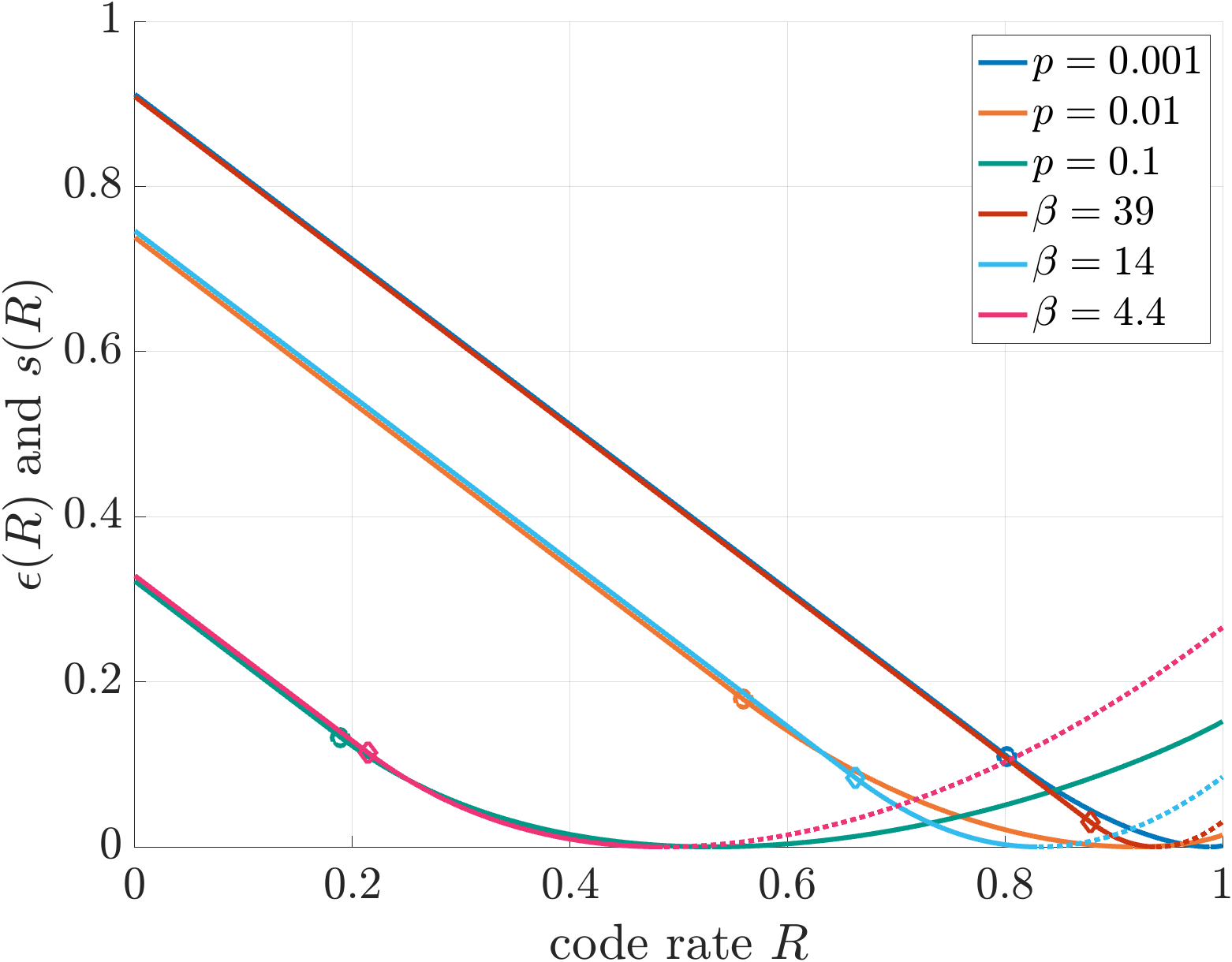}\label{fig:se}}%
\hfill}
\caption{Comparison between rate functions and error exponents for the BSC
(solid lines) and the LRC (dotted lines). For simplicity, the error and
success exponents are plotted together, rather than distinguishing between them
with different line styles, as in \cref{fig:expts}. The particular choices of
$\beta$ were made such that the two channels have the same average guesswork
(given by $H_{1/2}$).
}
\label{fig:bsc-lrc}
\end{figure*}

The error and success exponents for the LRC offer one way of interpreting the
relative noisiness of the channel at a particular value of $\beta$. Below the
critical rate, any channel with the same average guesswork, i.e., any channel
with noise of the same R\'enyi entropy rate $H_{1/2}$, will have the same error
exponents. Using this property to compare the LRC to the BSC
(\cref{fig:bsc-lrc}) gives one heuristic for mapping the LRC parameter $\beta$
to the BSC bit-flip parameter $p$. Roughly speaking, $p$ values which are an
order of magnitude apart correspond to $\beta$ values which are also an order
of magnitude apart in the LRC (\cref{fig:hr,fig:he}). Under soft-decision
decoding, this range is compressed, however, with performance degrading much
more slowly as $\beta$ decreases. (\cref{fig:sr,fig:se}). Naturally, other
ways of matching any pair of channels would lead to different parameter
relationships. Simply matching capacities is one option, but the difference in
the curvature between the rate functions of the two channels implies that the
decoding performance is also potentially very different. Matching the average
guesswork has the benefit of matching the decoding performance, at least over a
particular range of code rates.

\section{Conclusion}\label{sec:conclusion}

We introduced the linear reliability channel, a discrete channel with a
formally analyzable the soft-decision maximum likelihood decoder, and we
established explicit error exponents quantitatively demonstrating the gain in
performance from fully exploiting the channel soft information. Because the LRC
can well-approximate a wide range of continuous-noise channels, further
analysis of the LRC and quantities such as the logistic weight, which are
intimately connected with its soft-decision ML decoder, may point towards
future directions in code construction and coding theory  tailored to a
soft-decision setting.

By extending the large deviations style of analysis originally aimed solely at
hard-decision decoding with GRAND, the LRC highlights the potential of discrete
channels to offer novel insights into continuous channels. Indeed, while the
LRC may be viewed as simply an approximation of channels of real interest, it
offers a unified framework with analytical results which are simple, clean, and
readily interpretable. Because the soft-decision error exponent is computable,
the LRC can also be used as a theoretical benchmark against which the empirical
performance of soft-decision decoding algorithms can be directly evaluated.

The natural emergence of the logistic weight in the LRC has significant
implications for how code quality should be assessed in soft-decision settings.
The difference between the hard-decision and soft-decision error exponents
demonstrates that classical metrics such as the minimum Hamming distance of a
code may not imply good performance when soft information is available. Future
work investigating techniques for soft-decision-centric code construction,
e.g., on the basis of maximizing the minimum Logistic weight of a code, could
offer further fundamental insight into the problem of decoding in the presence
of soft information.

\emergencystretch=1em
\printbibliography

\appendices\crefalias{section}{appendix}
\appendices\crefalias{subsection}{appendix}
\section{Proof of the sCGF for Hard-Decision Guesswork}\label{apx:scgf-hard}

The key ingredient in the proof of \cref{thm:scgf-hard} is the asymptotic
exponential growth rate of the elementary symmetric polynomials $a^n_k(\beta)$.
Recall that $r_{t,\beta}$ is defined in the statement of \cref{thm:scgf-hard}.
The following proposition is based on the coinciding lower (\cref{lm:liminf})
and upper bounds (\cref{lm:limsup}), established in \cref{sec:hard-lower} and
\cref{sec:hard-upper} respectively. The proof of \cref{thm:scgf-hard} itself is
then given in \cref{sec:hard-final}.

\begin{proposition}\label{pr:ank}
For $t\in(0,1)$,
\begin{equation*}
    \lim_{n\to\infty} \frac{1}{n} \ln a^{n}_{\floor{tn}}(\beta) =
    J(r_{t,\beta};\beta) - t\ln r_{t,\beta}.
\end{equation*}
Furthermore, this convergence is uniform over $t\in(0,1)$.
\end{proposition}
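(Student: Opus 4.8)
The plan is to obtain matching upper and lower bounds on $n^{-1}\ln a^n_{\floor{tn}}(\beta)$ — these are \cref{lm:limsup} and \cref{lm:liminf} — and combine them. The starting observation is that $a^n_k(\beta)$ is a coefficient of a product generating function: for every $r>0$,
\begin{equation*}
    \prod_{i=1}^n\pqty{1+r\upe^{-\beta i/n}} = \sum_{k=0}^n a^n_k(\beta)\, r^k.
\end{equation*}
Isolating the $k=\floor{tn}$ term and taking logarithms gives, for every $r>0$,
\begin{equation*}
\begin{split}
    \frac1n\ln a^n_{\floor{tn}}(\beta) &= \frac1n\sum_{i=1}^n\ln\pqty{1+r\upe^{-\beta i/n}}
        - \frac{\floor{tn}}{n}\ln r \\
    &\quad + \frac1n\ln\Pr{S_n = \floor{tn}},
\end{split}
\end{equation*}
where, under the law obtained by exponential tilting with parameter $r$, $S_n=\sum_{i=1}^n B_i$ is a sum of independent Bernoulli variables with $\Pr{B_i=1}=r\upe^{-\beta i/n}/(1+r\upe^{-\beta i/n})$. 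The first sum is a Riemann sum converging to $J(r;\beta)$, so the whole problem reduces to choosing $r$ well and showing the last term is negligible on the exponential scale.

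For the upper bound (\cref{lm:limsup}) I would bound $\Pr{S_n=\floor{tn}}\le 1$ and minimise the limiting expression $J(r;\beta)-t\ln r$ over $r>0$. Writing $g(s)=J(\upe^s;\beta)-ts$ and differentiating under the integral sign, $g'(s)=\int_0^1\frac{\upe^{s-\beta x}}{1+\upe^{s-\beta x}}\dd{x}-t$ and $g''(s)=\int_0^1\frac{\upe^{s-\beta x}}{\pqty{1+\upe^{s-\beta x}}^2}\dd{x}>0$, so $g$ is strictly convex with a unique minimiser. Using the elementary evaluation $\int_0^1\frac{\upe^{-\beta x}}{1+r\upe^{-\beta x}}\dd{x}=\frac1{\beta r}\ln\frac{1+r}{1+r\upe^{-\beta}}$, the stationarity equation $g'(s)=0$ simplifies to $\ln\frac{1+r}{1+r\upe^{-\beta}}=\beta t$, whose unique solution is precisely $r=r_{t,\beta}$. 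This yields $\limsup_{n}n^{-1}\ln a^n_{\floor{tn}}(\beta)\le J(r_{t,\beta};\beta)-t\ln r_{t,\beta}$.

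For the lower bound (\cref{lm:liminf}) I would fix $r=r_{t,\beta}$ in the displayed identity. The same stationarity computation shows $n^{-1}\Exp[r]{S_n}=n^{-1}\sum_{i=1}^n\frac{r\upe^{-\beta i/n}}{1+r\upe^{-\beta i/n}}\to\int_0^1\frac{r\upe^{-\beta x}}{1+r\upe^{-\beta x}}\dd{x}=t$, so $\floor{tn}$ sits within $\ca{O}(1)$, indeed $o(n)$, of the mean of $S_n$. For $t$ in a compact subinterval of $(0,1)$ the tilted bit-probabilities are uniformly bounded away from $0$ and $1$, hence $\Var{S_n}$ grows linearly in $n$ and a local central limit theorem for triangular arrays of bounded independent summands gives $\Pr{S_n=\floor{tn}}=\Theta(n^{-1/2})$; alternatively, Chebyshev concentration combined with the log-concavity of $j\mapsto\Pr{S_n=j}$ (the probability generating function $\prod_i(1+r\upe^{-\beta i/n}z)/\prod_i(1+r\upe^{-\beta i/n})$ has only real roots) already forces $\Pr{S_n=\floor{tn}}=\upe^{-o(n)}$. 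Either way $n^{-1}\ln\Pr{S_n=\floor{tn}}\to 0$, giving $\liminf_{n}n^{-1}\ln a^n_{\floor{tn}}(\beta)\ge J(r_{t,\beta};\beta)-t\ln r_{t,\beta}$, which matches the upper bound and pins down the limit.

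All of the estimates above — convergence of the Riemann sums, of $n^{-1}\Exp[r]{S_n}$, and of the local probability estimate — are uniform when $t$ is restricted to a compact subinterval of $(0,1)$, which gives uniform convergence there. The main obstacle is uniformity as $t\to 0$ or $t\to 1$, where $r_{t,\beta}\to 0$ or $\infty$ and the tilted Bernoulli variables degenerate, so the local limit estimate breaks down. Near $t=1$ I would reduce to $t\le\tfrac12$ via the reflection identity $a^n_k(\beta)=\upe^{\beta(n+1)\pqty{\frac12-\frac kn}}a^n_{n-k}(\beta)$, which follows from $e_k(x)=e_n(x)\,e_{n-k}(x^{-1})$ applied to $x_i=\upe^{-\beta i/n}$ after reindexing $i\mapsto n+1-i$; one checks the target function obeys the same reflection, so the two estimates fit together. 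Near $t=0$ I would instead use the crude bounds $1\le a^n_{\floor{tn}}(\beta)\le\binom{n}{\floor{tn}}\le\upe^{nh(t)}$ together with the fact that $J(r_{t,\beta};\beta)-t\ln r_{t,\beta}\to 0$ as $t\to0$: both sides are then $\ca{O}\pqty{t\ln(1/t)+\beta t^2}$ uniformly in $n$, so they agree to within any prescribed accuracy once $t$ is sufficiently small. Splicing the compact-subinterval estimate with these two endpoint arguments yields the claimed uniform convergence on all of $(0,1)$.
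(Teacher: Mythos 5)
Your proposal takes the same tilting/Poisson-binomial route as the paper: both proofs exploit the generating-function identity $E_n(r)=\prod_{i=1}^n\pqty{1+r\upe^{-\beta i/n}}=\sum_k a^n_k(\beta)r^k$, obtain the upper bound as an immediate saddle-point/Chernoff bound (your $\Pr{S_n=\floor{tn}}\le 1$ is exactly the paper's \cref{lm:limsup}), and obtain the lower bound by showing the tilted mass at $\floor{tn}$ is $\upe^{-o(n)}$. The place where the details genuinely diverge is the lower bound on $\Pr{S_n=\floor{tn}}$. You propose a local CLT for the triangular array, which works on compact subintervals of $(0,1)$ but, as you correctly flag, degenerates at the endpoints. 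The paper instead argues purely on the coefficients $a^n_k(\beta)$: Newton's inequalities give strict log-concavity of $k\mapsto a^n_k(\beta)r^k$, bound the mode of $K_n$ within $\sqrt{3}\sigma_n=\ca{O}(\sqrt{n})$ of its mean (which is $\floor{tn}+\ca{O}(1)$), and — crucially — bound each consecutive log-ratio $\ln a^n_{k+1}-\ln a^n_k$ by $\ca{O}(\ln n)$ uniformly in $k$ and $t$, giving a $\delta=\ca{O}(\sqrt{n}\ln n)$ error that vanishes after scaling. That ratio bound is the step that is missing from your "alternative" log-concavity argument: log-concavity plus the mode-mean bound alone does not force $\Pr{S_n=\floor{tn}}=\upe^{-o(n)}$ without some quantitative control on how steeply the ratios can fall; Newton's inequalities supply that control. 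On uniformity, the paper argues (somewhat tersely) that the Riemann-sum error and the $\ca{O}(\ln n/\sqrt{n})$ terms in the lower bound are $t$-independent; your endpoint patch via the reflection identity $a^n_k(\beta)=\upe^{\beta(n+1)\pqty{1/2-k/n}}a^n_{n-k}(\beta)$ together with the matching identity $\phi(t)=\beta(1/2-t)+\phi(1-t)$ for the limit is correct (I verified $r_{t,\beta}r_{1-t,\beta}=\upe^{\beta}$ and $J(r;\beta)=\ln r-\beta/2+J(\upe^{\beta}/r;\beta)$) and is a nice, self-contained alternative that the paper does not use. One small slip: $a^n_{\floor{tn}}(\beta)\ge 1$ is not true in general — each summand in $a^n_k$ is at most 1, and for $k$ comparable to $n$ the whole sum drops below 1 — but the correct crude lower bound $a^n_{\floor{tn}}(\beta)\ge\upe^{-\beta\floor{tn}\pqty{\floor{tn}+1}/(2n)}\ge\upe^{-\ca{O}(\beta t^2 n)}$ (the single term $\{1,\dots,\floor{tn}\}$) gives exactly the $\ca{O}(\beta t^2)$ you state, so your splicing argument survives intact.
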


\begin{proof}
The bounds of \cref{lm:limsup} and \cref{lm:liminf} together yield the desired
limit. To show uniform convergence, it suffices to note that, since
$\ln(1+r_{t,\beta}\upe^{-\beta x})$ is uniformly continuous over $t\in[0,1]$
and $x\in[0,1]$,
\begin{equation*}
    \lim_{n\to\infty} \sup_{t\in[0,1]} \abs{\frac{1}{n} \sum_{i=1}^n \ln(1 +
    r_{t,\beta}\upe^{-\beta i/n}) - J\pqty{r_{t,\beta};\beta}} = 0.
\end{equation*}
The only other error terms occur in the lower bound, in particular, the
$\ca{O}(\ln(n)/\sqrt{n})$ and $\ca{O}(n^{-1})$ terms which are treated in the
proofs of \cref{lm:e-bound} and \cref{lm:liminf}. These do not depend on $t$
and thus vanish uniformly as $n\to\infty$.
\end{proof}

The proof of \cref{thm:scgf-hard} also makes use of the fact that the function
which is being maximized over $t$ in the expression for the R\'enyi entropy
rate is concave in $t$.

\begin{lemma}\label{lm:hard-concave}
With $\alpha>-1$ and $\beta>0$ fixed, the function
\begin{equation*}
    f_{\alpha,\beta}(t) = \alpha h(t) +
    J\pqty{r_{t,\beta};\beta} - t\ln r_{t,\beta}
\end{equation*}
is strictly concave over $t\in(0,1)$.
\end{lemma}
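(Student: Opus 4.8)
The plan is to establish strict concavity of $f_{\alpha,\beta}$ by differentiating twice and showing $f''_{\alpha,\beta}(t) < 0$ on $(0,1)$. The key simplification is to recognize that the last two terms, $g(t) = J(r_{t,\beta};\beta) - t\ln r_{t,\beta}$, arise as a saddle-point value: by \cref{pr:ank}, $g(t) = \lim_n n^{-1}\ln a^n_{\lfloor tn\rfloor}(\beta)$, and $r_{t,\beta}$ is precisely the minimizer of $r \mapsto J(r;\beta) - t\ln r$ over $r>0$ (one checks that $\partial_r[J(r;\beta) - t\ln r] = 0$ is equivalent to $\int_0^1 \frac{e^{-\beta x}}{1+re^{-\beta x}}\dd{x} = t/r$, which after computing the integral gives exactly $r = r_{t,\beta} = (e^{\beta t}-1)/(1-e^{\beta(t-1)})$). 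Because $r_{t,\beta}$ is an interior minimizer, the envelope theorem gives $g'(t) = -\ln r_{t,\beta}$, so the $\partial_t r_{t,\beta}$ contributions cancel and I avoid differentiating the messy integral $J(r_{t,\beta};\beta)$ with respect to $t$. This is the step I expect to be the main obstacle — not technically hard, but it is the structural insight that makes the computation tractable; without it one faces an unpleasant tangle of chain-rule terms.

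Given $g'(t) = -\ln r_{t,\beta}$, the second derivative is $g''(t) = -\frac{1}{r_{t,\beta}}\pdv{r_{t,\beta}}{t}$. From the explicit formula $r_{t,\beta} = (e^{\beta t}-1)/(1-e^{\beta(t-1)})$, I would compute $\pdv{r_{t,\beta}}{t}$ directly; since $r_{t,\beta}$ is manifestly positive and strictly increasing in $t$ on $(0,1)$ (both factors behave monotonically in the right directions — the numerator increases, the denominator decreases), we get $g''(t) < 0$, i.e. $g$ is itself strictly concave. Combined with the fact that $\alpha h(t)$ is concave for $\alpha \geq 0$, that would immediately finish the case $\alpha \geq 0$.

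The remaining case $\alpha \in (-1,0)$ is where the $-1 < \alpha$ hypothesis is used: here $\alpha h(t)$ is strictly \emph{convex}, so I must show the strict concavity of $g$ dominates. Concretely, I would show $f''_{\alpha,\beta}(t) = \alpha h''(t) + g''(t) < 0$, where $h''(t) = -\frac{1}{t(1-t)}$, so the requirement is $g''(t) < \frac{\alpha}{t(1-t)}$ (note $\alpha < 0$ makes the right side negative). Equivalently, I need $\frac{1}{r_{t,\beta}}\pdv{r_{t,\beta}}{t} > \frac{-\alpha}{t(1-t)} = \frac{|\alpha|}{t(1-t)}$ for all $\alpha \in (-1,0)$, which is implied by the $\alpha = -1$ bound $\pdv{}{t}\ln r_{t,\beta} \geq \frac{1}{t(1-t)}$ — a clean inequality involving only $\beta$ and $t$. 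I would verify this by substituting the closed form for $r_{t,\beta}$, reducing to an elementary inequality in $\beta t$ and $\beta(1-t)$ (of the flavor $\frac{\beta e^{\beta t}}{e^{\beta t}-1} + \frac{\beta e^{\beta(t-1)}}{1-e^{\beta(t-1)}} \geq \frac{1}{t(1-t)}$), and handling it via the standard bound $\frac{x}{e^x-1} \geq 1 - x/2$ or convexity of $x \mapsto x/(e^x-1)$; the boundary behavior as $t\to 0$ or $t\to 1$ matches up because $\ln r_{t,\beta} \to \mp\infty$ at the endpoints, consistent with the blow-up of $1/(t(1-t))$. The strictness throughout is inherited from strict monotonicity of $r_{t,\beta}$ and the strictness of the auxiliary inequality on the open interval.
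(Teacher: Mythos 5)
Your proposal is correct and follows essentially the same route as the paper: the envelope theorem gives $g'(t) = -\ln r_{t,\beta}$, and strict concavity then reduces to combining $\partial_t \ln r_{t,\beta} \ge \frac{1}{t(1-t)}$ with $h''(t) = -\frac{1}{t(1-t)}$. The paper treats all $\alpha > -1$ at once via the single bound $f''_{\alpha,\beta}(t) \le -\frac{\alpha+1}{t(1-t)} < 0$, so your split into $\alpha \ge 0$ and $\alpha \in (-1,0)$ is unnecessary, but the substance is identical.
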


\begin{proof}
Let
\begin{equation*}
    g(r,t) = J\pqty{r;\beta} - t\ln r.
\end{equation*}
In \cref{sec:hard-upper}, it is shown that $r_{t,\beta}$ is the solution to the
parameterized optimization problem $\min_{r>0} g(r,t)$. By the envelope
theorem \cite{Aki85},
\begin{equation*}
    \dv{t} \bqty{g(r_{t,\beta},t)} = g'_{\ast}(t),
\end{equation*}
where $g_{\ast}(t) = g(r_{t,\beta},t)$. Taking derivatives,
\begin{align*}
    g''_{\ast}(t) = -\beta\pqty{
    \frac{1}{1-\upe^{-\beta t}} + \frac{1}{\upe^{\beta(1-t)} - 1}
    }
    \leq -\frac{1}{t(1-t)}.
\end{align*}
Since the binary entropy function is concave with
\begin{equation*}
    h''(t) = -\frac{1}{t(1-t)} < 0,
\end{equation*}
it follows that
\begin{align*}
    f''_{\alpha,\beta}(t) = \alpha h''(t) + g''_{\ast}(t)
    \leq -\frac{(\alpha+1)}{t(1-t)}
     < 0.
\end{align*}
\end{proof}

\subsection{Lower Bound}\label{sec:hard-lower}

We define the parameterized discrete random variable $K_n(r,\beta)$ with a PMF
depending on $a^n_{k}(\beta)$ and the arbitrary positive constant $r$. An
asymptotic lower bound on $a^n_{\floor{tn}}(\beta)$ is then obtained by
analyzing the mode of $K_n(r,\beta)$ a well-chosen value of $r$.

\begin{definition}
The discrete random variable $K_n(r,\beta)$ with parameters $r,\beta>0$ has PMF
\begin{equation}\label{eq:k-pmf}
    p_{K_n}(k;r,\beta) = \frac{a^n_k(\beta)r^k}{E_n(r;\beta)}, \qquad 0\leq k
    \leq n,
\end{equation}
where $E_n(r;\beta)$ is the normalizing constant
\begin{equation*}
    E_n(r;\beta) = \sum_{k=0}^n a^n_k(\beta)r^k.
\end{equation*}
\end{definition}

We first show that $K_n$ is log-concave.

\begin{definition}
An integer-valued random variable $X$ with PMF $p_X$ is \emph{log-concave} if,
for all $x\in\Z$,
\begin{equation*}
     p_X(x+1)p_X(x-1) \leq p_X(x)^2.
\end{equation*}
If the inequality is strict, $X$ is \emph{strictly log-concave}.
\end{definition}

\begin{lemma}\label{lm:mode}
For all $r,\beta>0$, the discrete random variable $K_n(r,\beta)$ is strictly
log-concave and thus has a unique mode $\kappa_n(r,\beta)
=\max_{k}p_{K_n}(k;r,\beta)$.
\end{lemma}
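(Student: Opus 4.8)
The plan is to derive strict log-concavity of $K_n(r,\beta)$ from real-rootedness of the generating polynomial $P(x)=\sum_{k=0}^{n}a^n_k(\beta)\,x^k$, and then read off unimodality.

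First I would observe that $E_n(r;\beta)=P(r)=\prod_{i=1}^{n}\pqty{1+r\,\upe^{-\beta i/n}}$ by the standard factorization of the generating function of elementary symmetric polynomials, so $P$ has only real roots, namely the pairwise distinct negative numbers $-\upe^{\beta i/n}$, $i\in[n]$; its coefficients $a^n_k(\beta)$ are strictly positive (with $a^n_0(\beta)=1$). Newton's inequalities \cite{HLP52}, which are strict here because the roots are distinct, show that $k\mapsto a^n_k(\beta)/\tbinom{n}{k}$ is strictly log-concave on $\Bqty{0,\dots,n}$; since the binomial coefficients are themselves log-concave, this gives $a^n_k(\beta)^2>a^n_{k-1}(\beta)\,a^n_{k+1}(\beta)$ for $1\le k\le n-1$.

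Next I would transfer this to the PMF of $K_n(r,\beta)$. Replacing $a^n_k(\beta)$ by $a^n_k(\beta)r^k$ multiplies both sides of the inequality above by the common factor $r^{2k}$, and dividing by $E_n(r;\beta)^2$ does the same, so $p_{K_n}(k;r,\beta)^2>p_{K_n}(k-1;r,\beta)\,p_{K_n}(k+1;r,\beta)$ for $1\le k\le n-1$. At the endpoints $k=0$ and $k=n$ the left-hand side is strictly positive — indeed $p_{K_n}(0;r,\beta)=1/E_n(r;\beta)$ and $p_{K_n}(n;r,\beta)=a^n_n(\beta)r^n/E_n(r;\beta)$ — while one factor on the right vanishes, and outside $\Bqty{0,\dots,n}$ both sides are zero; hence $K_n(r,\beta)$ is strictly log-concave. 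It follows that the ratio $p_{K_n}(k+1;r,\beta)/p_{K_n}(k;r,\beta)$ is strictly decreasing on the support, so the PMF increases to a peak and then decreases, and $\kappa_n(r,\beta)=\max_{k}p_{K_n}(k;r,\beta)$ is a well-defined mode.

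I do not expect a real obstacle: the whole argument turns on recognizing $E_n(r;\beta)$ as a product, which at once exhibits $P$ as a real-rooted polynomial with simple roots and hence with a strictly log-concave coefficient sequence, and the rest is routine. The only mild subtlety is that strict log-concavity gives unimodality but, at the exceptional values $r=a^n_k(\beta)/a^n_{k+1}(\beta)$, the maximizing index could be a pair of adjacent values; this does not matter, since $\kappa_n(r,\beta)$ is used downstream only as the maximal probability, which is unambiguous.
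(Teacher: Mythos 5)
Your proof follows essentially the same route as the paper's: invoke Newton's inequalities with the observation that the numbers $\upe^{-\beta i/n}$ are pairwise distinct to get the strict inequality $a^n_k(\beta)^2 > a^n_{k-1}(\beta)\,a^n_{k+1}(\beta)$, multiply through by $r^{2k}$ and normalize to transfer strict log-concavity to $p_{K_n}$, and conclude unimodality. The generating-polynomial/real-rootedness framing you use is just a different way to see why Newton's inequalities apply with strict inequality, which the paper states directly; neither version buys anything substantive over the other.

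Your final caveat is in fact a sharper observation than the paper makes. Strict log-concavity of a positive sequence does \emph{not} guarantee that the maximizer is a single index: the first differences of $\ln p_{K_n}$ are strictly decreasing, but one of them can still equal zero, producing two adjacent indices attaining the maximum (e.g., $f(0)=0,\ f(1)=f(2)=1,\ f(3)=0$ is strictly concave with a tied maximum). The paper's sentence ``Because the inequality \dots is strict, $f_n(k;r,\beta)$ has a unique maximum'' therefore overstates what strict log-concavity gives. As you note, this has no downstream consequence: in \cref{lm:mode-limit} and \cref{lm:liminf}, $\kappa_n$ is used only through $a^n_{\kappa_n}(\beta)r^{\kappa_n}$ and the Johnson--Rogers mean-mode bound, both of which are insensitive to which of two tied adjacent indices one selects. (Incidentally, the lemma statement writes $\kappa_n = \max_k p_{K_n}(k)$, but the way $\kappa_n$ is used — $\kappa_n/n$, $a^n_{\kappa_n}$ — makes clear it is meant to be the $\argmax$, not the maximal probability; your closing sentence inherits this notational slip.)
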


\begin{proof}
Define for convenience
\begin{equation*}
    f_n(k;r,\beta) = \ln a^n_k(\beta) + k\ln r.
\end{equation*}
Since $a^n_k(\beta)\geq 0$ and $\upe^{-\beta i/n}\neq\upe^{-\beta j/n}$ for
$i\neq j$, Newton's inequalities \cite{HLP52}
yield
\begin{equation}\label{eq:newton}
    a^n_{k-1}(\beta)a^n_{k+1}(\beta) <
    \frac{\binom{n}{k-1}\binom{n}{k+1}}{\binom{n}{k}^2} \pqty{a^n_k(\beta)}^2
    < \pqty{a^n_k(\beta)}^2
\end{equation}
It follows that
\begin{equation*}
    f_n(k+1;r,\beta) + f_n(k-1;r,\beta) < 2f_n(k;r,\beta),
\end{equation*}
and thus $K_n(r,\beta)$ is log-concave. Because the inequality in
\cref{eq:newton} is strict, $f_n(k;r,\beta)$ has a unique maximum and thus
$K_n(r,\beta)$ has a unique mode.
\end{proof}

Seeking to show sufficient concentration around the unique mode, which is
guaranteed to exist by \cref{lm:mode}, the next lemma describes the
variance of $K_n(r,\beta)$.

\begin{lemma}\label{lm:var}
The variance $\sigma^2_n$ of $K_n(r,\beta)$ is of order $\Theta(n)$.
\end{lemma}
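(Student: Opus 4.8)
The plan is to realize $K_n(r,\beta)$ as a sum of independent Bernoulli random variables, after which the variance is an explicit finite sum that can be bounded above and below by linear functions of $n$. Recall that $a^n_k(\beta)$ is the degree-$k$ elementary symmetric polynomial in the quantities $z_i := \upe^{-\beta i/n}$ for $i\in[n]$. Hence the probability generating function of $K_n(r,\beta)$ factors:
\begin{equation*}
    \Exp{s^{K_n}} = \frac{\sum_{k=0}^n a^n_k(\beta) (rs)^k}{\sum_{k=0}^n a^n_k(\beta) r^k}
    = \prod_{i=1}^n \frac{1 + r z_i s}{1 + r z_i}.
\end{equation*}
This is exactly the PGF of $\sum_{i=1}^n B_i$, where the $B_i$ are independent Bernoulli variables with success probabilities $p_i := \frac{r z_i}{1 + r z_i} \in (0,1)$. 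Consequently $\sigma^2_n = \sum_{i=1}^n p_i(1-p_i)$.

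It then remains to show $\sum_{i=1}^n p_i(1-p_i) = \Theta(n)$, i.e., that the per-coordinate variances $p_i(1-p_i)$ are bounded away from $0$ and from above, uniformly over a positive fraction of the indices $i$ (the upper bound is immediate since $p_i(1-p_i)\le 1/4$, giving $\sigma_n^2 \le n/4$). For the lower bound, note $p_i(1-p_i) = \frac{r z_i}{(1+r z_i)^2}$ with $z_i = \upe^{-\beta i/n}$ ranging monotonically over $[\upe^{-\beta},1]$ as $i$ runs from $n$ down to $1$. Since $r,\beta>0$ are fixed, $r z_i$ lies in the fixed compact interval $[r\upe^{-\beta}, r]\subset(0,\infty)$ for every $i\in[n]$, and the function $u\mapsto u/(1+u)^2$ is continuous and strictly positive on that interval; hence it attains a positive minimum $c = c(r,\beta) > 0$ there. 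Therefore $p_i(1-p_i)\ge c$ for all $i\in[n]$, so $\sigma^2_n \ge cn$. Combining the two bounds gives $cn \le \sigma^2_n \le n/4$, i.e., $\sigma^2_n = \Theta(n)$.

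The only mild subtlety — not really an obstacle — is justifying the factorization of the PGF, which is just the standard identity $\prod_{i=1}^n(1+x_i) = \sum_{k=0}^n e_k(x_1,\dots,x_n)$ applied with $x_i = r z_i s$; once that is in hand the result is a one-line consequence of the independence structure. I would present the Bernoulli decomposition first, then the trivial upper bound, then the compactness argument for the lower bound.
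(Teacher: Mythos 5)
Your proof is correct and takes essentially the same approach as the paper: both identify $K_n(r,\beta)$ as a Poisson binomial (a sum of independent Bernoullis with success probabilities $p_i = \frac{r\upe^{-\beta i/n}}{1+r\upe^{-\beta i/n}}$) — you via the PGF factorization, the paper by matching the Poisson-binomial PMF to \cref{eq:k-pmf} — and both then read off $\sigma_n^2 = \sum_i p_i(1-p_i)$. The only cosmetic difference is the final step: the paper notes that $\sigma_n^2/n$ is a Riemann sum converging to a positive constant, whereas you bound each summand uniformly from above and below using the compactness of $[r\upe^{-\beta},r]$; both are valid one-line arguments.
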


\begin{proof}
We first show that $K_n(r,\beta)$ is equivalently given by the Poisson binomial
distribution, which describes the probability of observing $k$ successes over
$n$ trials when the $i$th trial has success probability
\begin{equation*}
    p_i = \frac{r\upe^{-\beta i/n}}{1+r\upe^{-\beta i/n}} \in (0,1).
\end{equation*}
The PMF of the Poisson binomial is given by
\begin{equation*}
    P(k) = \sum_{A\in F_k} \prod_{i\in A}p_i \prod_{i\notin A}(1-p_i),
\end{equation*}
where $F_k$ is the set of all subsets of $[n]$ of cardinality $k$. Thus,
\begin{align*}
    P(k) &= \frac{r^k}{\prod_{i=1}^n \pqty{1+r\upe^{-\beta i/n}}}
    \sum_{A\in F_k} \upe^{-\beta \sum_{i\in A} i/n} \\
    &= \frac{a^n_k(\beta) r^k}{E_n(r)},
\end{align*}
in agreement with \cref{eq:k-pmf}. The Poisson binomial has variance
\begin{equation*}
    \sigma^2_n = \sum_{i=1}^n (1-p_i)p_i
    = \sum_{i=1}^n \frac{r\upe^{-\beta i/n}}{\pqty{1+r\upe^{-\beta i/n}}^2}
\end{equation*}
Scaling $1/n$ yields a Riemann sum which converges to
\begin{equation*}
    \lim_{n\to\infty} \frac{\sigma^2_n}{n} = \int_0^1
    \frac{r\upe^{-\beta x}}{\pqty{1+r\upe^{-\beta x}}^2} \dd{x}
\end{equation*}
For fixed $r$ and $\beta$, this limit is a non-zero finite constant, and hence
$\sigma^2_n = \Theta(n)$.
\end{proof}

We now show that the mode of $K_n(r,\beta)$ is asymptotically growing linearly
in $n$ and that it approaches this linear limit at rate $\Theta(\sqrt{n})$.

\begin{lemma}\label{lm:mode-limit}
For all $r,\beta>0$,
\begin{equation*}
    \lim_{n\to\infty} \frac{\kappa_n(r,\beta)}{n} =
    \frac{1}{\beta}\ln(\frac{1+r}{1+r\upe^{-\beta}}).
\end{equation*}
Furthermore,
\begin{equation*}
    \abs{\kappa_n(r;\beta) - \frac{n}{\beta}\ln(\frac{1+r}{1+r\upe^{-\beta}})}
    = \Theta(\sqrt{n}).
\end{equation*}
\end{lemma}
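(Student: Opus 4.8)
The plan is to reduce everything to the mean of $K_n(r,\beta)$. Recall from the proof of \cref{lm:var} that $K_n(r,\beta)$ is the Poisson binomial distribution whose $i$th trial has success probability $p_i = r\upe^{-\beta i/n}/(1+r\upe^{-\beta i/n})$, so its mean is $\mu_n = \Exp{K_n(r,\beta)} = \sum_{i=1}^n p_i$. The statement then follows from two facts: (i) the mode $\kappa_n$ satisfies $\abs{\kappa_n - \mu_n} = \ca{O}(\sqrt n)$, and (ii) $\abs{\mu_n - \tfrac{n}{\beta}\ln\pqty{\tfrac{1+r}{1+r\upe^{-\beta}}}} = \ca{O}(1)$. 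Since $r$ and $\beta$ are fixed here, no uniformity in an auxiliary parameter is needed, in contrast to \cref{pr:ank}.

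For (i) I would use only the strict log-concavity of $K_n(r,\beta)$ (\cref{lm:mode}), which makes the PMF $p_{K_n}$ non-decreasing up to $\kappa_n$ and non-increasing afterwards, together with the variance bound $\sigma_n^2 = \Theta(n)$ (\cref{lm:var}). By Chebyshev's inequality the window $W=(\mu_n-2\sigma_n,\,\mu_n+2\sigma_n)$ carries probability at least $3/4$ while containing at most $4\sigma_n+1$ integers, so some integer $k_0\in W$ has $p_{K_n}(k_0)\ge c/\sigma_n$ for an absolute constant $c>0$. Suppose $\kappa_n\ge\mu_n+C\sigma_n$ for a large constant $C$; then $k_0<\kappa_n$, so monotonicity forces $p_{K_n}(k)\ge p_{K_n}(k_0)\ge c/\sigma_n$ for every integer $k$ with $\mu_n+2\sigma_n\le k\le\kappa_n$, of which there are at least $(C-2)\sigma_n-1$. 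These integers alone then carry probability at least $c(C-2)-c$, which exceeds $1$ once $C$ is large enough, a contradiction; the symmetric argument controls $\mu_n-\kappa_n$, giving $\abs{\kappa_n-\mu_n}=\ca{O}(\sigma_n)=\ca{O}(\sqrt n)$. (Alternatively one could appeal to classical estimates on the mode of a Poisson binomial, but the above stays within the tools already introduced.)

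For (ii), put $g(x)=r\upe^{-\beta x}/(1+r\upe^{-\beta x})$, so that $\mu_n=\sum_{i=1}^n g(i/n)$ is a right-endpoint Riemann sum for $n\int_0^1 g$. Since $g$ is smooth on $[0,1]$, the elementary estimate for the right-endpoint Riemann-sum error gives $\abs{\mu_n-n\int_0^1 g(x)\dd{x}}\le\tfrac12\sup_{x\in[0,1]}\abs{g'(x)}=\ca{O}(1)$. The integral is evaluated by the substitution $v=r\upe^{-\beta x}$, under which $g(x)\dd{x}=-\tfrac1\beta(1+v)^{-1}\dd{v}$ and the endpoints become $v=r$ and $v=r\upe^{-\beta}$, whence $\int_0^1 g(x)\dd{x}=\tfrac1\beta\ln\pqty{\tfrac{1+r}{1+r\upe^{-\beta}}}$. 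Combining (i) and (ii) through the triangle inequality yields $\abs{\kappa_n-\tfrac{n}{\beta}\ln\pqty{\tfrac{1+r}{1+r\upe^{-\beta}}}}=\ca{O}(\sqrt n)$, and dividing by $n$ and letting $n\to\infty$ gives the stated limit.

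The step I expect to be the main obstacle is (i): converting the purely qualitative ``log-concave'' together with the mere order of the variance into a quantitative $\ca{O}(\sqrt n)$ confinement of the mode around the mean. The Chebyshev-plus-monotone-packing argument sketched above is the cleanest self-contained route I see; step (ii), the Riemann-sum estimate and the change of variables, is routine.
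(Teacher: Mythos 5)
Your proof is correct, and it differs from the paper's in one key place. For the mean--mode distance, the paper simply invokes the classical Johnson--Rogers bound $\abs{\E[K_n]-\kappa_n}\le\sqrt{3}\,\sigma_n$ for unimodal distributions and cites it directly, whereas you derive an $\ca{O}(\sigma_n)$ bound from scratch via Chebyshev plus a pigeonhole/monotone-packing argument that exploits unimodality (which follows from the strict log-concavity of \cref{lm:mode}). Your route is longer but self-contained; the paper's is shorter but outsources the key inequality. On the mean itself both proofs use the same Riemann-sum computation and change of variables, but you are actually more careful than the paper: you quantify the Riemann-sum error as $\ca{O}(1)$ via the $C^1$ bound, which is what is genuinely needed to obtain a $\sqrt{n}$-order bound on $\abs{\kappa_n - \tfrac{n}{\beta}\ln(\tfrac{1+r}{1+r\upe^{-\beta}})}$; the paper only passes to the scaled limit, which alone gives $o(n)$ rather than $\ca{O}(1)$. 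One remark applying to both proofs: neither establishes the lower-bound half of the stated $\Theta(\sqrt n)$ --- both give only an $\ca{O}(\sqrt n)$ upper bound --- but this is all that is used downstream in \cref{lm:liminf} via \cref{lm:e-bound}, so it is harmless.
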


\begin{proof}
First, we have that
\begin{equation*}
    \Exp{K_n(r;\beta)} = \frac{1}{E_n(r;\beta)}\sum_{k=0}^n k a^n_k(\beta)r^k
    = \frac{rE'_n(r,\beta)}{E_n(r;\beta)}.
\end{equation*}
Then, using the representation
\begin{equation*}
    E_n(r;\beta) = \prod_{i=1}^n \pqty{1+r\upe^{-\beta i/n}}
\end{equation*}
taking the logarithm, and differentiating with respect to $r$,
\begin{equation*}
    \frac{E'_n(r;\beta)}{E_n(r;\beta)} =
        \sum_{i=1}^n \frac{\upe^{-\beta i/n}}{1+r\upe^{-\beta i/n}}.
\end{equation*}
Scaling by $r/n$, we obtain the Riemann sum
\begin{equation*}
    \frac{rE'_n(r;\beta)}{nE_n(r;\beta)} = \frac{\Exp{K_n(r,\beta)}}{n} =
    \frac{1}{n}\sum_{i=1}^n \frac{r\upe^{-\beta i/n}}{1+r\upe^{-\beta i/n}}.
\end{equation*}
Taking the limit,
\begin{align}
\nonumber
    \lim_{n\to\infty} \frac{\Exp{K_n(r,\beta)}}{n} &=
    \int_0^1 \frac{r \upe^{-\beta x}}{1+r\upe^{-\beta x}} \dd{x} \\
\label{eq:exp-limit}
    &= \frac{1}{\beta}\ln(\frac{1+r}{1+r\upe^{-\beta}}).
\end{align}
Now, since $K_n(r,\beta)$ is log-concave and hence unimodal, the difference
between the mean and the mode is bounded by the standard deviation \cite{JR51},
i.e.,
\begin{equation*}
    \abs{\Exp{K_n(r,\beta)}-\kappa_n(r,\beta)} \leq \sqrt{3}\sigma_n.
\end{equation*}
Since $\sigma_n = \Theta(\sqrt{n})$ by \cref{lm:var},
\begin{equation*}
    \lim_{n\to\infty} \frac{\kappa_n(r,\beta)}{n} = \lim_{n\to\infty}
    \frac{\Exp{K_n(r,\beta)}}{n} =
    \frac{1}{\beta}\ln(\frac{1+r}{1+r\upe^{-\beta}}).
\end{equation*}
\end{proof}

We later apply \cref{lm:mode-limit} with $r=r_{t}$, such that
\begin{equation*}
    \lim_{n\to\infty} \frac{\kappa_n(r_{t},\beta)}{n} = t.
\end{equation*}

Consider two sequences $\pqty{j_n}_{n\in\N}$ and $\pqty{k_n}_{n\in\N}$ for
which $0\leq j_n,k_n\leq n$. The next lemma states that if the difference
between $j_n$ and $k_n$ is growing like $\ca{O}(\sqrt{n})$, then the logarithms
of the corresponding polynomials $a^n_{j_n}(\beta)$ and $a^n_{k_n}(\beta)$ are
growing apart at rate $o(n)$.

\begin{lemma}\label{lm:e-bound}
If $\abs{j_n - k_n} = \ca{O}(\sqrt{n})$, then, for all $\beta>0$,
\begin{equation*}
    \lim_{n\to\infty} \frac{1}{n}
    \abs{\ln a^n_{j_n}(\beta)-\ln a^n_{k_n}(\beta)} = 0.
\end{equation*}
\end{lemma}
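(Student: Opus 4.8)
The plan is to control $\abs{\ln a^n_{j_n}(\beta) - \ln a^n_{k_n}(\beta)}$ through a telescoping product of the consecutive ratios $\rho_k := a^n_{k+1}(\beta)/a^n_k(\beta)$, $0 \leq k \leq n-1$, exploiting the fact that the log-concavity of the sequence $\pqty{a^n_k(\beta)}_k$ (established via Newton's inequalities, see \cref{eq:newton}) makes these ratios monotone decreasing in $k$, so that each $\rho_k$ is squeezed between the two boundary ratios $\rho_0$ and $\rho_{n-1}$, both of which can be written down in closed form.

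Concretely, I would first assume without loss of generality that $k_n \leq j_n$, so that $\ln a^n_{j_n}(\beta) - \ln a^n_{k_n}(\beta) = \sum_{k=k_n}^{j_n-1} \ln \rho_k$. Since $\rho_{k-1} > \rho_k$ for all $k$ by \cref{eq:newton}, the sequence $\pqty{\rho_k}$ is strictly decreasing, hence $\rho_{n-1} \leq \rho_k \leq \rho_0$ and therefore $\abs{\ln \rho_k} \leq \max\Bqty{\abs{\ln \rho_0}, \abs{\ln \rho_{n-1}}}$ for every $k$ in the summation range. I would then evaluate the boundary ratios explicitly: $\rho_0 = a^n_1(\beta) = \sum_{i=1}^n \upe^{-\beta i/n} \in [n\upe^{-\beta}, n]$, and, using that the size-$(n-1)$ subsets of $[n]$ are exactly the complements of singletons, $a^n_{n-1}(\beta) = \upe^{-\beta(n+1)/2}\sum_{j=1}^n \upe^{\beta j/n}$ while $a^n_n(\beta) = \upe^{-\beta(n+1)/2}$, so $\rho_{n-1} = \pqty{\sum_{j=1}^n \upe^{\beta j/n}}^{-1} \in [n^{-1}\upe^{-\beta}, n^{-1}]$. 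In both cases $\abs{\ln \rho_0}, \abs{\ln \rho_{n-1}} \leq \ln n + \beta$.

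Combining these estimates gives $\abs{\ln a^n_{j_n}(\beta) - \ln a^n_{k_n}(\beta)} \leq \abs{j_n - k_n}\pqty{\ln n + \beta}$, and since $\abs{j_n - k_n} = \ca{O}(\sqrt n)$ the right-hand side is $\ca{O}\pqty{\sqrt{n}\,\ln n} = o(n)$; dividing by $n$ and letting $n \to \infty$ finishes the argument. The only step that is not entirely mechanical is the realization that monotonicity of the ratios collapses the problem to the two extreme ratios $\rho_0$ and $\rho_{n-1}$, both of which are elementary sums of polynomial size; beyond that I anticipate only routine computation, so I do not expect a real obstacle here.
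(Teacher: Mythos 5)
Your proof is correct and follows essentially the same route as the paper's: both use Newton's inequalities to establish that the consecutive log-ratios $\ln\rho_k = \ln a^n_{k+1}(\beta) - \ln a^n_k(\beta)$ are strictly decreasing in $k$, both then reduce the problem to the two boundary ratios $\rho_0$ and $\rho_{n-1}$, and both conclude with a telescoping sum giving $\ca{O}(\sqrt{n}\ln n) = o(n)$. The only difference is cosmetic: you bound the boundary ratios by the elementary sandwiches $\rho_0 \in [n\upe^{-\beta},n]$ and $\rho_{n-1} \in [n^{-1}\upe^{-\beta},n^{-1}]$, yielding the clean per-step bound $\ln n + \beta$, whereas the paper arrives at $\ln n + \ca{O}(1)$ via Taylor expansions of $\upe^{\beta/n}-1$; your version is, if anything, a bit more direct.
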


\begin{proof}
We again leverage Newton's inequalities, which give
\begin{equation*}
    \frac{a^n_{k+1}(\beta)}{a^n_k(\beta)} \leq
    \pqty{\frac{k}{k+1}}\pqty{\frac{n-k}{n-k+1}}
    \frac{a^n_k(\beta)}{a^n_{k-1}(\beta)}.
\end{equation*}
Defining $R^n_k = \ln a^n_{k+1}(\beta)-\ln a^n_k(\beta)$ for $0\leq k\leq
n-1$, the sequence $R^n_k$ is thus strictly decreasing in $k$.
Considering the left endpoint,
\begin{equation*}
    R^n_0 = \ln(\frac{a^n_1(\beta)}{a^n_0(\beta)})
    = \ln(\sum_{i=1}^n \upe^{-\beta i/n})
    = \ln(\frac{1-\upe^{-\beta}}{\upe^{\beta/n}-1}),
\end{equation*}
and taking the usual Taylor expansion for $\upe^{x}$,
\begin{equation*}
    \upe^{\beta/n}-1 = \frac{\beta}{n}\pqty{1+\ca{O}(n^{-1})},
\end{equation*}
we may write $R^n_0$ as
\begin{align}
    \nonumber
    R^n_0 &= \ln(1-\upe^{-\beta}) -
    \ln(\frac{\beta}{n}\pqty{1+\ca{O}(n^{-1})}) \\
    \label{eq:rn1}
    &= \ln(n) + \ca{O}(1).
\end{align}
Now considering the right endpoint, we first observe that
\begin{equation*}
    a^n_{n-1}(\beta) = a^n_n(\beta)\sum_{i=1}^n \upe^{\beta i/n}
    = a^n_n(\beta) \upe^{\beta/n}
    \pqty{\frac{\upe^{\beta}-1}{\upe^{\beta/n}-1}}.
\end{equation*}
This gives
\begin{align}
    \nonumber
    R^n_{n-1} &= \ln(\frac{a^n_n(\beta)}{a^n_{n-1}(\beta)}) \\
    \nonumber
    &= -\ln(\pqty{1+\frac{\beta}{n}+\ca{O}(n^{-2})}
    \pqty{\frac{\ca{O}(1)}{\frac{\beta}{n}\pqty{1+\ca{O}(n^{-1})}}}) \\
    \label{eq:rnn-1}
    &= -\ln(n) + \ca{O}(1).
\end{align}
Since the sequence $R^n_k$ is strictly decreasing, \Cref{eq:rn1,eq:rnn-1}
together imply that, for all $0\leq k\leq n-1$,
\begin{equation*}
    \abs{\ln a^n_{k}(\beta) - \ln a^n_{k+1}(\beta)} \leq 2\ln(n) + \ca{O}(1).
\end{equation*}

By assumption, $j_n$ and $k_n$ differ by $\ca{O}(\sqrt{n})$, and thus
\begin{equation*}
    \abs{\ln a^n_{j_n}(\beta) - \ln a^n_{k_n}(\beta)}
    \leq \ca{O}\pqty{\sqrt{n}\ln(n)} 
\end{equation*}
Scaling by $1/n$ and taking the limit,
\begin{align*}
    \lim_{n\to\infty} \frac{1}{n}
    \abs{\ln a^n_{j_n}(\beta) - \ln a^n_{k_n}(\beta)}
    &\leq \lim_{n\to\infty}
    \ca{O}\pqty{\frac{\ln(n)}{\sqrt{n}}} \\
    &= 0.
\end{align*}
\end{proof}

We now prove the lower bound towards \cref{pr:ank}.

\begin{lemma}\label{lm:liminf}
For $t\in(0,1)$,
\begin{align*}
    \liminf_{n\to\infty} \frac{1}{n}\ln a_{\floor{tn}}^n(\beta) &\geq
    J(r_{t,\beta};\beta) - t\ln r_{t,\beta}.
\end{align*}
\end{lemma}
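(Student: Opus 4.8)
The plan is to lower bound $a^n_{\floor{tn}}(\beta)$ via the mode probability of the random variable $K_n(r_{t,\beta},\beta)$ introduced above, and then transfer that bound from the mode index to $\floor{tn}$ using \cref{lm:e-bound}. First I would fix $r = r_{t,\beta}$ and write $\kappa_n = \kappa_n(r,\beta)$. Since the $n+1$ values $p_{K_n}(0;r,\beta),\dots,p_{K_n}(n;r,\beta)$ are non-negative and sum to $1$, the mode satisfies $p_{K_n}(\kappa_n;r,\beta)\geq 1/(n+1)$, that is,
\[
    \frac{a^n_{\kappa_n}(\beta)\,r^{\kappa_n}}{E_n(r;\beta)} \geq \frac{1}{n+1}.
\]
Taking logarithms and rearranging gives
\[
    \ln a^n_{\kappa_n}(\beta) \geq \ln E_n(r;\beta) - \kappa_n\ln r - \ln(n+1).
\]

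Next I would evaluate each term on the right after dividing by $n$ and letting $n\to\infty$. Using the product representation $E_n(r;\beta) = \prod_{i=1}^n\pqty{1+r\upe^{-\beta i/n}}$, the term $\frac{1}{n}\ln E_n(r;\beta) = \frac{1}{n}\sum_{i=1}^n\ln\pqty{1+r\upe^{-\beta i/n}}$ is a Riemann sum converging to $J(r;\beta) = J(r_{t,\beta};\beta)$ since the integrand is continuous on $[0,1]$. By \cref{lm:mode-limit} and the defining property of $r_{t,\beta}$, namely $\frac{1}{\beta}\ln\pqty{\tfrac{1+r_{t,\beta}}{1+r_{t,\beta}\upe^{-\beta}}} = t$, one has $\kappa_n/n\to t$, so $\frac{1}{n}\kappa_n\ln r\to t\ln r_{t,\beta}$; and $\frac{1}{n}\ln(n+1)\to 0$. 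Combining these,
\[
    \liminf_{n\to\infty}\frac{1}{n}\ln a^n_{\kappa_n}(\beta) \geq J(r_{t,\beta};\beta) - t\ln r_{t,\beta}.
\]

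Finally I would transfer this estimate to the index $\floor{tn}$. The second assertion of \cref{lm:mode-limit}, combined with the identity $\frac{1}{\beta}\ln\pqty{\tfrac{1+r_{t,\beta}}{1+r_{t,\beta}\upe^{-\beta}}} = t$, gives $\abs{\kappa_n - tn} = \Theta(\sqrt n)$, hence $\abs{\kappa_n - \floor{tn}} = \ca{O}(\sqrt n)$; \cref{lm:e-bound} then yields $\frac{1}{n}\abs{\ln a^n_{\kappa_n}(\beta) - \ln a^n_{\floor{tn}}(\beta)}\to 0$, so the two $\liminf$s agree and the claimed bound follows. I do not anticipate a genuine obstacle here: the one point deserving care is that the single choice $r = r_{t,\beta}$ simultaneously forces the mode of $K_n$ to track $tn$ and produces precisely the terms $J(r_{t,\beta};\beta)$ and $t\ln r_{t,\beta}$ in the limit — this is the saddle-point matching that \cref{sec:hard-upper} will exploit in establishing the complementary upper bound.
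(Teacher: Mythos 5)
Your proof is correct and takes essentially the same approach as the paper: you lower-bound the mode probability of $K_n(r_{t,\beta},\beta)$ by $1/(n+1)$ (the paper phrases the identical fact as $E_n(r;\beta)\leq (n+1)\max_k a^n_k(\beta)r^k$), rearrange to bound $\ln a^n_{\kappa_n}(\beta)$ from below, take the Riemann-sum limit of $\frac1n\ln E_n$, and transfer from the mode index $\kappa_n$ to $\floor{tn}$ via \cref{lm:mode-limit} and \cref{lm:e-bound}. The only difference is cosmetic: you transfer indices after passing to the liminf whereas the paper inserts the $\delta$ error term before taking the limit.
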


\begin{proof}
For all $r>0$,
\begin{align*}
    E_n(r;\beta) &\leq (n+1)\max_{k} \bqty{a^n_k(\beta)r^k} \\
    &= (n+1)\exp(\ln a^n_{\kappa_n(r,\beta)}(\beta) + \kappa_n(r,\beta)\ln
    r).
\end{align*}
Taking logarithms, fixing $r=r_{t,\beta}$, and using the abbreviation $\kappa_n
= \kappa_n(r_{t,\beta},\beta)$,
\begin{align*}
    \ln E_n(r_{t,\beta};\beta)
    &\leq \ln(n+1) +
    \ln a^n_{\kappa_n}(\beta) +
    \kappa_n \ln r_{t,\beta}  \\
    &\leq \ln(n+1) + \ln a^n_{\floor{tn}}(\beta)
    + \delta + \kappa_n \ln r_{t,\beta},
\end{align*}
where
\begin{equation*}
    \delta = \abs{\ln a^n_{\kappa_n}(\beta) - \ln a^n_{\floor{tn}}(\beta)}.
\end{equation*}
At $r=r_{t,\beta}$, \cref{lm:mode-limit} implies that $\kappa_n/n\to t$ as
$n\to\infty$ and \cref{lm:e-bound} implies that $\delta/n$
goes to 0 as $n\to\infty$. Thus,
\begin{equation*}
    \lim_{n\to\infty}\frac{1}{n}\ln E_n(r_{t,\beta};\beta) - t\ln r_{t,\beta}
    \leq \liminf_{n\to\infty}\frac{1}{n}\ln a^n_{\floor{tn}}.
\end{equation*}
The limit on the left-hand side is a Riemann sum which converges to
$J(r_{t,\beta};\beta)$, yielding the desired result.
\end{proof}

\subsection{Upper Bound}\label{sec:hard-upper}

The proof of the upper bound is much simpler than the lower bound. We can
simply appeal to saddle point bounds for $(1/n)\ln a^n_{\floor{tn}}(\beta)$
and substitute the choice of $r=r_{t}$ freely.

\begin{lemma}\label{lm:limsup}
For $t\in[0,1]$,
\begin{equation*}
    \limsup_{n\to\infty} \frac{1}{n}\ln a_{\floor{tn}}^n(\beta) \leq
    J(r_{t,\beta};\beta) - t\ln r_{t,\beta}.
\end{equation*}
\end{lemma}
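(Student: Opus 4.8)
The plan is to exploit the trivial bound that a single summand of a nonnegative sum is at most the whole sum. Writing $E_n(r;\beta) = \sum_{k=0}^n a^n_k(\beta) r^k$, every term obeys $a^n_k(\beta) r^k \leq E_n(r;\beta)$ for $r>0$. Since $r_{t,\beta}>0$ for $t\in(0,1)$, combining this with the product representation $E_n(r;\beta) = \prod_{i=1}^n\pqty{1+r\upe^{-\beta i/n}}$ (used in the proof of \cref{lm:var}) gives
\begin{equation*}
    \ln a^n_{\floor{tn}}(\beta) \leq \sum_{i=1}^n \ln\pqty{1+r_{t,\beta}\upe^{-\beta i/n}} - \floor{tn}\ln r_{t,\beta}.
\end{equation*}
Dividing by $n$, the first term on the right is a right Riemann sum for $J(r_{t,\beta};\beta)$ whose integrand is continuous on $[0,1]$ (as $r_{t,\beta}$ is fixed and finite), while $\floor{tn}/n\to t$; taking $\limsup_{n\to\infty}$ then yields
\begin{equation*}
    \limsup_{n\to\infty}\frac{1}{n}\ln a^n_{\floor{tn}}(\beta) \leq J(r_{t,\beta};\beta) - t\ln r_{t,\beta},
\end{equation*}
which is exactly the lower bound of \cref{lm:liminf}, so the two together establish \cref{pr:ank} for $t\in(0,1)$.

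Since the same inequality holds with any $r>0$ in place of $r_{t,\beta}$, this actually shows $\limsup_n n^{-1}\ln a^n_{\floor{tn}}(\beta) \leq \inf_{r>0}\Bqty{J(r;\beta) - t\ln r}$; the particular value $r_{t,\beta}$ matters because it attains this infimum — the fact invoked in the proof of \cref{lm:hard-concave}. To check this I would differentiate under the integral sign, using $\partial_r J(r;\beta) = \int_0^1 \upe^{-\beta x}/\pqty{1+r\upe^{-\beta x}}\dd{x} = \frac{1}{\beta r}\ln\pqty{\frac{1+r}{1+r\upe^{-\beta}}}$, so that the stationarity condition for $g(r,t) = J(r;\beta) - t\ln r$ reads $\frac{1}{\beta}\ln\pqty{\frac{1+r}{1+r\upe^{-\beta}}} = t$; a short computation confirms $r=r_{t,\beta}$ solves it, and uniqueness follows from strict convexity of $g$ in $s=\ln r$ (its second $s$-derivative is $\int_0^1 \upe^{s-\beta x}/\pqty{1+\upe^{s-\beta x}}^2\dd{x}>0$). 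Notably this stationarity equation coincides with the limit identity of \cref{lm:mode-limit} evaluated at $r_{t,\beta}$, which is why that lemma gives $\kappa_n(r_{t,\beta},\beta)/n\to t$.

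For the endpoints $t\in\Bqty{0,1}$, where $r_{t,\beta}$ degenerates to $0$ or $\infty$, I would argue directly: $a^n_0(\beta)=1$ gives $n^{-1}\ln a^n_0(\beta)=0$, and $a^n_n(\beta)=\upe^{-\beta(n+1)/2}$ gives $n^{-1}\ln a^n_n(\beta)\to-\beta/2$, each matching the limiting value of $J(r;\beta)-t\ln r$ as $r\to 0^+$ and $r\to\infty$ respectively. I do not expect a genuine obstacle in this lemma — as the text notes, ``the proof of the upper bound is much simpler than the lower bound,'' the content being just the one-term-bounds-the-sum inequality plus Riemann-sum convergence. The only step deserving real care is the identification of $r_{t,\beta}$ as the minimizer of $g(\cdot,t)$, a two-line calculus computation, which is what links this bound both to \cref{lm:liminf} and to the envelope-theorem argument underlying \cref{lm:hard-concave}.
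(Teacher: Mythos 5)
Your proof is correct and follows essentially the same route as the paper's. What you call ``a single summand of a nonnegative sum is at most the whole sum'' is exactly the saddle-point bound $a^n_k(\beta) \leq E_n(r)/r^k$ that the paper invokes (citing Flajolet--Sedgewick), and both proofs then substitute $r=r_{t,\beta}$, $k=\floor{tn}$, and pass to the Riemann-sum limit; your remarks on the optimality of $r_{t,\beta}$ as the saddle point and the separate treatment of the endpoints $t\in\Bqty{0,1}$ are sensible additions that the paper addresses in surrounding commentary rather than within the proof itself.
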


\begin{proof}
The sequence $a^n_k(\beta)$ in $k$ has the generating function
\begin{equation*}
    E_n(r) = \sum_{k=0}^n a^n_k(\beta) r^k = \prod_{i=1}^n \pqty{1+r
    \upe^{-\beta i/n}}.
\end{equation*}
Since $E_n(r)$ is entire with positive coefficients, the saddle point bounds
\cite{FS09} yield, for all $r>0$,
\begin{equation}\label{eq:saddle-bound}
    \frac{1}{n}\ln a^n_k(\beta) \leq \frac{1}{n}\ln(
    \frac{E_n(r)}{r^k}).
\end{equation}
We substitute $k=\floor{tn}$ and $r=r_{t,\beta}$ into \cref{eq:saddle-bound}.
This yields
\begin{equation*}
    \frac{1}{n}\ln a^n_{\floor{tn}}(\beta) \leq \frac{1}{n}\sum_{i=1}^n
    \ln(1+r_{t,\beta} \upe^{-\beta i/n}) - \frac{\floor{tn}}{n}\ln r_{t,\beta}.
\end{equation*}
Taking the limit as $n\to\infty$ yields the desired result.
\end{proof}

Although the proof of \cref{lm:limsup} does not require $r_{t,\beta}$ to
actually be the optimizing saddle point, it is in fact optimal, at least
asymptotically. The tightest bound of the form of \cref{eq:saddle-bound} is
given by the value of $r$ for which the derivative of the right-hand side is
zero. To find this $r$, we examine
\begin{equation*}
    \dv{r}\pqty{\frac{1}{n}\ln E_n(r) - \frac{k}{n}\ln r} =
    \frac{1}{n}\sum_{i=1}^n \frac{\upe^{-\beta i/n}}{1+r \upe^{-\beta i/n}} -
    \frac{k}{nr}.
\end{equation*}
The optimal $r$ is thus given by the solution to
\begin{equation}\label{eq:r0-finite}
    \frac{1}{n}\sum_{i=1}^n \frac{\upe^{-\beta i/n}}{1+r \upe^{-\beta i/n}} =
    \frac{k}{nr}.
\end{equation}
In the limit, we can consider $k=tn$ for $t\in(0,1)$. The Riemann sum on the
left-hand side of \cref{eq:r0-finite} converges, and the
limiting optimal $r$ solves
\begin{equation}\label{eq:r0-limit}
    \int_0^1 \frac{r\upe^{-\beta x}}{1+r \upe^{-\beta x}} \dd{x} = t
\end{equation}
The solution to \cref{eq:r0-limit} is $r=r_{t,\beta}$, which may be seen by
comparison with \cref{eq:exp-limit}.

\subsection{Combining the Bounds}\label{sec:hard-final}

We now have all the tools necessary to complete the proof of
\cref{thm:scgf-hard}.

\begin{proof}[of \cref{thm:scgf-hard}]
As with the soft-decision sCGF (\cref{eq:lambda-renyi}), the hard-decision sCGF
is given by $\alpha H_{1/(1+\alpha)}(Z)$ for $\alpha\geq -1$. Using the
hard-decision PMF for the LRC (\cref{lm:hard-pmf}),
\begin{equation*}
\begin{split}
    \alpha H_{\frac{1}{1+\alpha}} (Z_n) &=
    -\sum_{i=1}^n \ln(1+\upe^{-\beta i/n}) \\
    &\qquad + (1+\alpha)\ln
    (\sum_{k=0}^n \binom{n}{k}^{\frac{\alpha}{1+\alpha}}
    a^n_k(\beta)^{\frac{1}{1+\alpha}}).
\end{split}
\end{equation*}
Scaling by $1/n$ and taking the limit as $n\to\infty$, the first sum is, as in
the soft-decision case, a Riemann sum with limit
\begin{equation*}
    \lim_{n\to\infty} -\sum_{i=1}^n \ln(1+\upe^{-\beta i/n}) = - J(1;\beta).
\end{equation*}

To handle the limit of the second sum,
\begin{equation}\label{eq:hard-target}
    (1+\alpha) \lim_{n\to\infty} \ln
    (\sum_{k=0}^n \binom{n}{k}^{\frac{\alpha}{1+\alpha}}
    a^n_k(\beta)^{\frac{1}{1+\alpha}}),
\end{equation}
we proceed as follows.
\begin{enumerate}[(1)]
\item We show that there exist continuous and appropriately well-behaved
functions $f,g:[0,1]\to\R$ such that, for sufficiently large $n$,
\begin{equation*}
\begin{split}
    \binom{n}{k}&^{\frac{\alpha}{1+\alpha}}
    a^n_k(\beta)^{\frac{1}{1+\alpha}} = \\
    &\exp\pqty{ n
    \bqty{\frac{\alpha}{1+\alpha} f\pqty{\frac{k}{n}} + \frac{1}{1+\alpha}
    g\pqty{\frac{k}{n}} } + o(n) }.
\end{split}
\end{equation*}

\item It follows that, in the limit, the sum in \cref{eq:hard-target}
behaves like a Riemann sum and \cref{eq:hard-target} is equal to
\begin{equation}\label{eq:v-limit}
    (1+\alpha)\lim_{n\to\infty}\frac{1}{n}\ln V(\alpha,n),
\end{equation}
where
\begin{equation*}
    V(\alpha,n) = \int_0^1
    \exp\pqty{ n \bqty{
    \frac{\alpha f(t)}{1+\alpha}+
    \frac{g(t)}{1+\alpha}
    } + o(n) } \dd{t}.
\end{equation*}

\item We apply Laplace's method to show that \cref{eq:v-limit} is equal to
\begin{equation*}
    \max_{t\in[0,1]} \bqty{\alpha f(t) + g(t)}.
\end{equation*}
\end{enumerate}

For sufficiently large $n$ and $t\in(0,1)$,
\begin{equation*}
    \binom{n}{tn} = \exp(n h(t) + o(n)).
\end{equation*}
By \cref{pr:ank}, for sufficiently large $n$ and $t\in(0,1)$,
\begin{equation*}
    a_{\floor{tn}}^n(\beta) =
    \exp\bqty{ n\pqty{ J\pqty{r_{t,\beta};\beta} - t\ln r_{t,\beta}} +
    o(n)}.
\end{equation*}
For both approximations, the error terms are uniform over compact subsets of
$t\in(0,1)$ and the linear terms in the exponents are continuous functions of
$t$. This suffices for the equality of \cref{eq:hard-target} and
\cref{eq:v-limit}. To apply Laplace's method to \cref{eq:v-limit}, the
function
\begin{equation}\label{eq:max-arg}
    f_{\alpha,\beta}(t) = \alpha h(t) + J\pqty{r_{t,\beta}; \beta} - t\ln
    r_{t,\beta}.
\end{equation}
must have a unique maximum over $t\in(0,1)$ and a negative second derivative
with respect to $t$ over that range; these properties are proven in
\cref{lm:hard-concave}.

Having established the behavior of $\Lambda_Z(\alpha)$ for
$\alpha\in(-1,\infty)$, we now confirm that $\Lambda_Z(\alpha)$ also has
a continuous derivative for $\alpha>-1$, which suffices to establish that
$\Lambda_Z(\alpha)=-H_{\up{min}}(Z)$ for $\alpha\leq -1$ \cite[Lemma 1]{CD12}.
For any fixed $t$, \cref{eq:max-arg} is linear in $\alpha$. Since the maximum
of linear functions is convex, $\Lambda_Z(\alpha)$ must be convex for
$\alpha>-1$. Since $h(t)$ is strictly concave, the maximizing $t$ is a
continuous function of $\alpha$ and is unique for each $\alpha>-1$. These
together imply that $\Lambda_Z(\alpha)$ has a continuous derivative and
thus that $\Lambda_Z(\alpha)=-H_{\up{min}}(Z)$ for $\alpha<-1$. Since
the unique most probable noise effect is the all-zero sequence, we again have,
as in the soft-decision setting, that $-H_{\up{min}}(Z) = -J(1;\beta)$.
\end{proof}

\section{Proof of the Ordering of Critical Rates}\label{apx:d-order}

We show here that $\Lambda'_N(1) < \Lambda'_Z(1)$, which is equivalent to
showing that the critical rate for soft-decision decoding is higher than that
for hard-decision decoding in the LRC. We first derive an alternate expression
for $\Lambda'_Z(1)$.

\begin{lemma}\label{lm:hard-diff}
Let $\Lambda_Z$ be the sCGF for hard-decision guesswork in the LRC. Then,
$\Lambda'_Z(1)=h(t(\beta))$, where $t(\beta)\in(0,1/2)$ is the maximizer
in the expression given in \cref{thm:scgf-hard} for $\Lambda_Z(1)$,
i.e., the solution to
\begin{equation}\label{eq:h-diff}
    \dv{t}\bqty{h(t) + J(r_{t,\beta};\beta) - t\ln r_{t,\beta}} = 0.
\end{equation}
Furthermore, $t(\beta)$ is the unique solution to
\begin{equation*}
    \upe^{\beta(1-t)} = 1+\pqty{\upe^{\beta}-1}t.
\end{equation*}
\end{lemma}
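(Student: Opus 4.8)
The plan is to obtain $\Lambda'_Z(1)$ by differentiating the variational formula $\Lambda_Z(\alpha) = \max_{t\in[0,1]} f_{\alpha,\beta}(t) - J(1;\beta)$ from \cref{thm:scgf-hard}, where $f_{\alpha,\beta}(t) = \alpha h(t) + J(r_{t,\beta};\beta) - t\ln r_{t,\beta}$, via an envelope (Danskin) argument, and then to massage the first-order condition of the inner maximization into the stated transcendental equation.

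First I would record that, for fixed $t$, the map $\alpha\mapsto f_{\alpha,\beta}(t)$ is affine with slope $\partial_\alpha f_{\alpha,\beta}(t) = h(t)$, that \cref{lm:hard-concave} makes $f_{\alpha,\beta}(\cdot)$ strictly concave, and that $\Lambda_Z$ is differentiable on $(-1,\infty)$ (established in the proof of \cref{thm:scgf-hard}). Then the envelope theorem \cite{Aki85} gives $\Lambda'_Z(1) = h(t(\beta))$, where $t(\beta)$ is the maximizer of $f_{1,\beta}$ — provided that maximizer is interior, which I would confirm from the endpoint behavior of $f'_{1,\beta}$ below. This already yields the first assertion.

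Next I would compute the first-order condition. Using that $r_{t,\beta}$ is the minimizer over $r>0$ of $g(r,t) = J(r;\beta) - t\ln r$ (recorded in \cref{sec:hard-upper} and already invoked in the proof of \cref{lm:hard-concave}), the envelope theorem applied this time to the inner minimization gives $\dv{t}\bqty{J(r_{t,\beta};\beta) - t\ln r_{t,\beta}} = -\ln r_{t,\beta}$, so that $f'_{1,\beta}(t) = \ln\frac{1-t}{t} - \ln r_{t,\beta}$ (using $h'(t) = \ln\frac{1-t}{t}$). Since $r_{t,\beta}\to 0$ as $t\to 0^+$ and $r_{t,\beta}\to\infty$ as $t\to 1^-$, $f'_{1,\beta}$ runs from $+\infty$ to $-\infty$ on $(0,1)$; combined with strict concavity this confirms a unique interior maximizer, characterized by $f'_{1,\beta}(t)=0$, i.e.\ \cref{eq:h-diff}. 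Rearranging $f'_{1,\beta}(t)=0$ to $\frac{1-t}{t} = r_{t,\beta} = \frac{\upe^{\beta t}-1}{1-\upe^{\beta(t-1)}}$, cross-multiplying, and then multiplying through by $\upe^{\beta(1-t)}$ should collapse the identity to $\upe^{\beta(1-t)} = 1 + (\upe^{\beta}-1)t$.

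Finally, for uniqueness and the range $(0,1/2)$, I would set $\psi(t) = \upe^{\beta(1-t)} - 1 - (\upe^{\beta}-1)t$, note $\psi'(t) = -\beta\upe^{\beta(1-t)} - (\upe^{\beta}-1) < 0$ so $\psi$ is strictly decreasing, and observe $\psi(0) = \upe^{\beta}-1 > 0$ while $\psi(\tfrac12) = \upe^{\beta/2} - \tfrac12(1+\upe^{\beta}) < 0$ by strict AM--GM (since $\beta>0$); hence the unique zero of $\psi$ lies in $(0,\tfrac12)$, and by the equivalence just derived it is $t(\beta)$. I expect the only real friction to be the bookkeeping around the two envelope-theorem applications — verifying that the inner and outer maximizers are interior and vary smoothly in the parameter — but all the needed inputs (strict concavity from \cref{lm:hard-concave}, differentiability of $\Lambda_Z$ from the proof of \cref{thm:scgf-hard}, and the saddle-point characterization of $r_{t,\beta}$ from \cref{sec:hard-upper}) are already in hand.
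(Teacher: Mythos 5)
Your proof is correct, and it takes a genuinely cleaner route on the central algebraic step than the paper does. The paper differentiates \cref{eq:h-diff} directly and, by what it calls ``manipulating,'' arrives at an explicit formula for $t(\beta)$ in terms of the Lambert $W$ function, which it then unwinds to reach the transcendental equation $\upe^{\beta(1-t)} = 1 + (\upe^\beta-1)t$. You instead apply the envelope theorem a second time, to the inner minimization $r\mapsto g(r,t)=J(r;\beta)-t\ln r$, to obtain the tidy expression $f'_{1,\beta}(t) = \ln\tfrac{1-t}{t} - \ln r_{t,\beta}$; the first-order condition is then simply $(1-t)/t = r_{t,\beta}$, and cross-multiplying followed by a multiplication by $\upe^{\beta(1-t)}$ collapses directly to the stated equation without ever introducing Lambert $W$. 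This buys transparency: the intermediate Lambert-$W$ formula in the paper carries no further content in the argument, whereas your form of $f'_{1,\beta}$ also delivers the endpoint limits $f'_{1,\beta}(0^+)=+\infty$ and $f'_{1,\beta}(1^-)=-\infty$ for free, making the existence of a unique interior maximizer explicit rather than implicit in \cref{lm:hard-concave}. The monotonicity-and-sign argument you give for locating the zero of $\psi(t)=\upe^{\beta(1-t)}-1-(\upe^\beta-1)t$ in $(0,1/2)$ is the same as the paper's, with your strict AM--GM observation at $t=1/2$ matching its evaluation $f_\beta(1/2)<0$.

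One small point worth noting if you write this up: the second application of the envelope theorem relies on $r_{t,\beta}$ being the interior minimizer of $g(\cdot,t)$ over $r>0$ for $t\in(0,1)$; this is recorded in \cref{sec:hard-upper} via the saddle-point discussion around \cref{eq:r0-finite} and \cref{eq:r0-limit}, so you should cite that explicitly rather than only \cref{lm:hard-concave}, whose proof invokes the same fact but does not independently establish it.
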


\begin{proof}
Overloading notation slightly, let
\begin{equation*}
    t(\alpha,\beta) =
    \argmax_{t\in[0,1]} \bqty{
    \alpha h(t) + J(r_{t,\beta};\beta) - t\ln r_{t,\beta}
    }.
\end{equation*}
By the envelope theorem \cite{Aki85},
\begin{equation*}
    \Lambda'_Z(\alpha) = h(t(\alpha,\beta)).
\end{equation*}
Thus,
\begin{equation*}
    \Lambda'_Z(1) = h(t(1,\beta)) = h(t(\beta)).
\end{equation*}

Taking the derivative in \cref{eq:h-diff} and manipulating,
\begin{equation*}
    t(\beta) =
    \frac{1}{\beta}W_0\pqty{
        \frac{\beta}{a}\upe^{\beta}\upe^{\beta/a}
    } - \frac{1}{\upe^{\beta}-1},
\end{equation*}
where $a=\upe^{\beta}-1>0$ and $W_0$ is the principal branch of the Lambert $W$
function, which solves $W_0(z)\upe^{W_0(z)}=z$ when $z$ is real and positive.
Using this property and manipulating exponentials yields
\begin{equation*}
    \upe^{\beta(1-t(\beta))} = 1+\pqty{\upe^{\beta}-1}t(\beta).
\end{equation*}

Finally, define the function
\begin{equation*}
    f_{\beta}(t) = \upe^{\beta(1-t)} - 1 - \pqty{\upe^{\beta}-1}t,
\end{equation*}
which is strictly decreasing in $t$. Since $f_{\beta}(0) > 0$ and
$f_{\beta}(1/2)<0$, it follows that $t(\beta)\in(0,1/2)$.
\end{proof}

We now establish the strict ordering $\Lambda'_N(1) < \Lambda'_Z(1)$.

\begin{lemma}\label{lm:scgf-d-order}
Let $\Lambda_N$ and $\Lambda_Z$ be the sCGFs for soft- and hard-decision
guesswork in the LRC. Then, $\Lambda'_N(1) < \Lambda'_Z(1)$.
\end{lemma}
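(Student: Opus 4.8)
The plan is to derive an explicit integral formula for $\Lambda'_N(1)$, recognize that formula as an average of the binary entropy function, and then apply Jensen's inequality together with the characterization $\Lambda'_Z(1)=h(t(\beta))$ from \cref{lm:hard-diff}.

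First I would differentiate the soft-decision sCGF. Since $\Lambda_N$ is differentiable on $(-1,\infty)$ (shown in the proof of \cref{thm:scgf-soft}), differentiating $\Lambda_N(\alpha)=(1+\alpha)J(1;\beta/(1+\alpha))-J(1;\beta)$ under the integral sign and setting $\alpha=1$ yields
\begin{equation*}
    \Lambda'_N(1) = J\pqty{1;\tfrac{\beta}{2}} + \frac{\beta}{2}\int_0^1
    \frac{x\upe^{-\beta x/2}}{1+\upe^{-\beta x/2}}\dd{x}.
\end{equation*}
The key observation is that this integrand is exactly $h(q(x))$ with $q(x):=\upe^{-\beta x/2}/(1+\upe^{-\beta x/2})=1/(1+\upe^{\beta x/2})$: using $\ln(q/(1-q))=-\beta x/2$ one checks $h(q(x))=\ln(1+\upe^{-\beta x/2})+\tfrac{\beta}{2}x\,q(x)$. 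Hence $\Lambda'_N(1)=\int_0^1 h(q(x))\dd{x}$; equivalently, $\Lambda'_N(1)$ is the Shannon entropy rate of the soft-decision noise of the LRC with parameter $\beta/2$.

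Because $h$ is strictly concave on $(0,1)$ and $q(\cdot)$ is non-constant, Jensen's inequality gives the \emph{strict} bound $\Lambda'_N(1)<h(\bar q)$, where $\bar q:=\int_0^1 q(x)\dd{x}=\tfrac{2}{\beta}\ln\tfrac{2}{1+\upe^{-\beta/2}}$; a short computation (substitute $s=\upe^{\beta/2}$ and use $(\sqrt s-1)^2>0$) shows $\bar q\in(0,1/2)$. On the other side, \cref{lm:hard-diff} gives $\Lambda'_Z(1)=h(t(\beta))$ with $t(\beta)\in(0,1/2)$ the unique root of the strictly decreasing function $f_\beta(t):=\upe^{\beta(1-t)}-1-(\upe^\beta-1)t$. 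Since $h$ is strictly increasing on $(0,1/2)$, it then suffices to prove $\bar q\le t(\beta)$, equivalently $f_\beta(\bar q)\ge 0$; chaining this with the Jensen bound gives $\Lambda'_N(1)<h(\bar q)\le h(t(\beta))=\Lambda'_Z(1)$.

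The inequality $f_\beta(\bar q)\ge 0$ is the one genuinely substantive step, though it collapses to an elementary one-variable estimate. With $s=\upe^{\beta/2}$ one computes $\upe^{\beta\bar q}=4/(1+\upe^{-\beta/2})^2$, hence $\upe^{\beta(1-\bar q)}=(s+1)^2/4$, and after simplification $f_\beta(\bar q)=(s-1)\bqty{\tfrac{s+3}{4}-(s+1)\tfrac{\ln(2s/(s+1))}{\ln s}}$. So the claim becomes $\psi(s):=\tfrac{(s+3)\ln s}{4(s+1)}-\ln\tfrac{2s}{s+1}\ge 0$ for $s\ge 1$; since $\psi(1)=0$ it is enough to show $\psi'(s)\ge 0$ on $(1,\infty)$, and after simplification this reduces to $\tfrac12(s-1/s)\ge\ln s$, which holds because both sides vanish at $s=1$ and the derivative of their difference is $(s-1)^2/(2s^2)\ge 0$. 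Thus the only real obstacle is the bookkeeping in this last reduction; the differentiation, the $h(q(x))$ identity, and the Jensen step are all routine.
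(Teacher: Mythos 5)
Your proof is correct, and it follows the same overall architecture as the paper's: express $\Lambda'_N(1)$ as an average of the binary entropy $h$ over a profile of bit-flip probabilities, apply Jensen with the strict concavity of $h$ to get $\Lambda'_N(1)<h(\bar q)$, then invoke \cref{lm:hard-diff} to write $\Lambda'_Z(1)=h(t(\beta))$ and reduce everything to showing $f_\beta(\bar q)\ge 0$. Where you diverge is in the execution of two steps, both in your favor. First, you obtain the identity $\Lambda'_N(1)=\int_0^1 h(q(x))\dd{x}$ by differentiating $\Lambda_N(\alpha)=(1+\alpha)J(1;\beta/(1+\alpha))-J(1;\beta)$ under the integral sign and recognizing the integrand as $h$ of a logistic sigmoid. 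The paper instead passes through a closed-form dilogarithm expression for $\Lambda'_N(1)$ and recovers the same integral by differentiating $3\beta\,\Lambda'_N(1)$ in $\beta$; your route is more direct and avoids introducing the dilogarithm at all. Second, and more substantially, the positivity of $f_\beta(\bar q)$ is handled in the paper by \cref{lm:zpos}, whose proof requires tracking three derivatives of an auxiliary function $n(z)$ and verifying three boundary values. Your reduction — factoring out $(s-1)$, normalizing to $\psi(s)$, differentiating once, and landing on $\tfrac12(s-1/s)\ge\ln s$ — is proving the same inequality (your $\psi(s)$ divided by $s+1$ is exactly the paper's $w(z)-r(z)$), but collapses it to a single derivative of a textbook estimate. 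Both versions are valid; yours is noticeably leaner on the final calculus.

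One small point worth flagging: your parenthetical justification that $\bar q\in(0,1/2)$ via $(\sqrt s-1)^2>0$ is fine, but it is even more immediate to observe that $\bar q$ is an average of $q(x)=1/(1+\upe^{\beta x/2})\in(0,1/2)$ over $x\in(0,1)$, so it lies strictly inside $(0,1/2)$ automatically; this is the observation the paper makes implicitly.
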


\begin{proof}
Making explicit the LRC channel parameter $\beta>0$, let
$d_N(\beta)=\Lambda'_N(1)$ and $d_Z(\beta)=\Lambda'_Z(1)$. We show that
$d_N(\beta) < d_Z(\beta)$ for all $\beta$.

It is straightforward to verify that
\begin{equation*}
    d_N(\beta) = \frac{\pi^2}{3\beta} - \ln(1+\upe^{-\beta/2}) +
    \frac{4}{\beta}\dilog{1+\upe^{-\beta/2}},
\end{equation*}
using the dilogarithm function
\begin{equation*}
    \dilog{x} = \int_1^x \frac{\ln x}{1-x} \dd{x}.
\end{equation*}
By \cref{pr:scgf-slope}, $d_N(\beta)\in(0,\ln2)$, and thus there exists
$p(\beta)\in(0,1/2)$ such that $d_N(\beta) = h(p(\beta))$. Define, for $t>0$,
\begin{equation*}
    q(t) = \frac{1}{1+\upe^{t/2}} \in \pqty{0,1/2}.
\end{equation*}
Letting $s=\upe^{-\beta/2}$,
\begin{align*}
    \dv{\beta}\bqty{3\beta d_N(\beta)} &= 3\pqty{\ln(1+s) +
        \frac{\beta s}{2\pqty{1+s}}} \\
    &= 3h(q(\beta)).
\end{align*}
Since $3\beta d_N(\beta)$ vanishes at $\beta=0$,
\begin{equation*}
    d_N(\beta) = \frac{1}{\beta}\int_0^\beta h(q(t))\dd{t}.
\end{equation*}

Now, define the average $\overline{q}(\beta)\in (0,1/2)$,
\begin{align*}
    \overline{q}(\beta) &= \frac{1}{\beta} \int_0^\beta q(t)\dd{t} \\
    &= 1 - \frac{2}{\beta} \ln(\frac{1+\upe^{\beta/2}}{2}).
\end{align*}
Treating $t$ as a random variable uniformly distributed over $(0,\beta)$,
Jensen's inequality applied to the strictly concave function $h$ gives
\begin{equation*}
    d_N(\beta) =
    \int_0^\beta \frac{1}{\beta}h(q(t)) \dd{t} <
    h\pqty{\int_0^\beta \frac{1}{\beta}q(t) \dd{t}} = h(\overline{q}(\beta)).
\end{equation*}
As $h$ is strictly increasing on $(0,1/2)$, it follows that $p(\beta) <
\overline{q}(\beta)$.

By \cref{lm:hard-diff}, $d_Z(\beta)=h(t(\beta))$ where
$t(\beta)\in(0,1/2)$ is the unique zero of the strictly decreasing function
\begin{equation*}
    f_{\beta}(t) = \upe^{\beta(1-t)} - 1 - \pqty{\upe^{\beta}-1}t,
\end{equation*}
Letting $z=\upe^{\beta/2}>1$,
\begin{equation*}
    f_{\beta}(\overline{q}(\beta)) =
        \frac{z^2-1}{\ln z}\ln(\frac{1+z}{2}) + \frac{-3z^2+2z+1}{4}.
\end{equation*}
\Cref{lm:zpos} shows that the right-hand side is positive.
Because $f_{\beta}$ is strictly decreasing, its zero $t(\beta)$ satisfies
$\overline{q}(\beta) < t(\beta)$.
Thus, $p(\beta)<t(\beta)$, and in turn $d_N(\beta) < d_Z(\beta)$.
\end{proof}

The following result is used in the proof of \cref{lm:scgf-d-order}.

\begin{lemma}\label{lm:zpos}
For all $z>1$,
\begin{equation*}
    f(z) = \frac{z^2-1}{\ln z}\ln(\frac{1+z}{2}) - \frac{3z^2-2z-1}{4} > 0.
\end{equation*}
\end{lemma}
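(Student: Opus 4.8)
The plan is to first clear the positive denominator: since $\ln z > 0$ for $z > 1$, the claimed inequality $f(z) > 0$ is equivalent to
\[
    g(z) := (z^2-1)\ln\!\left(\frac{1+z}{2}\right) - \frac{3z^2-2z-1}{4}\,\ln z > 0 .
\]
The key structural observation is that the two polynomial coefficients share the factor $z-1$: we have $z^2-1 = (z-1)(z+1)$ and $3z^2-2z-1 = (z-1)(3z+1)$. Hence $g(z) = (z-1)\,h(z)$, where
\[
    h(z) := (z+1)\ln\!\left(\frac{1+z}{2}\right) - \frac{3z+1}{4}\,\ln z ,
\]
and since $z-1 > 0$ it suffices to prove $h(z) > 0$ for all $z > 1$.

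First I would record the boundary data at $z = 1$. Directly, $h(1) = 2\ln 1 - 0 = 0$. Differentiating, $h'(z) = \ln\!\left(\frac{1+z}{2}\right) - \tfrac{3}{4}\ln z + \tfrac14 - \tfrac{1}{4z}$, so $h'(1) = 0$ as well. The crucial step is the second derivative: $h''(z) = \frac{1}{1+z} - \frac{3}{4z} + \frac{1}{4z^2}$, and placing this over the common denominator $4z^2(1+z)$ collapses the numerator, $4z^2 - 3z(1+z) + (1+z) = z^2 - 2z + 1$, so that
\[
    h''(z) = \frac{(z-1)^2}{4z^2(1+z)} > 0 \qquad \text{for } z > 1 .
\]

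The conclusion then follows by integrating twice from $z = 1$: since $h'' > 0$ on $(1,\infty)$, $h'$ is strictly increasing there, and as $h'(1) = 0$ this gives $h'(z) > 0$ for $z > 1$; hence $h$ is strictly increasing on $(1,\infty)$, and as $h(1) = 0$ we obtain $h(z) > 0$ for $z > 1$. Therefore $g(z) = (z-1)h(z) > 0$ and $f(z) = g(z)/\ln z > 0$. Equivalently, one may write $h(z) = \int_1^z\!\int_1^s h''(u)\,\dd{u}\,\dd{s}$, which is manifestly positive for $z > 1$, making the role of the vanishing boundary values transparent.

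I do not anticipate a genuine obstacle. The argument rests entirely on spotting the common factor $z-1$ — which forces the reduced function $h$ to vanish to second order at $z = 1$ — and on the clean collapse of $h''$ to $(z-1)^2/\bigl(4z^2(1+z)\bigr)$, a short but slightly delicate algebraic computation. The one point requiring care is verifying $h(1) = h'(1) = 0$ exactly, since positivity of $h''$ alone does not suffice; it is precisely the combination with these vanishing boundary values that yields the result.
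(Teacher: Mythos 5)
Your proof is correct, and it takes a genuinely cleaner route than the paper's. The paper also spots the common factor $z-1$, but uses it to write $f(z) = (z^2-1)\bigl[w(z)-r(z)\bigr]$ with $w(z) = \ln\!\left(\tfrac{1+z}{2}\right)/\ln z$ and $r(z) = (3z+1)/(4z+4)$; showing $w > r$ then requires differentiating the quotient $w$, which reintroduces $\ln z$ in a denominator. Clearing that denominator at the level of $y'=w'-r'$ produces a function $n(z)$ that must be shown positive via \emph{three} successive derivatives, each vanishing at $z=1$, before $n'''>0$ finally gives a sign. Your choice to clear the $\ln z$ denominator first — before factoring — and then peel off only a single factor of $z-1$ yields $h(z) = (z+1)\ln\!\left(\tfrac{1+z}{2}\right) - \tfrac{3z+1}{4}\ln z$, which vanishes to exactly second order at $z=1$, and $h''(z) = (z-1)^2/\bigl(4z^2(1+z)\bigr)$ collapses completely on the first try. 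Two derivatives instead of three, and the final positivity is manifest rather than being a small further calculation. The underlying strategy — reduce to a function with vanishing boundary data and a sign-definite derivative of sufficiently high order — is the same, but your ordering of the algebraic reductions (clear $\ln z$, then factor) is the better one.
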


\begin{proof}
Factoring out $z^2-1$,
\begin{align*}
    f(z) &= (z^2-1)\bqty{w(z)-r(z)}, \\
    w(z) &= \frac{\ln(\frac{1+z}{2})}{\ln z}, \\
    r(z) &= \frac{3z+1}{4z+4}.
\end{align*}
We show that $w(z)>r(z)$ for all $z>1$. Note that
\begin{equation*}
    \lim_{z\to1}w(z) = \lim_{z\to1}r(z) = \frac{1}{2}.
\end{equation*}
It thus suffices to show that $y(z) = w(z)-r(z)$ is strictly increasing for
$z>1$. We have that
\begin{equation*}
    y'(z) = \frac{ n(z) } {2(z+1)^2\pqty{\ln z}^2 z},
\end{equation*}
where the numerator is
\begin{equation*}
\begin{split}
    n(z) &= -2(z+1)^2\ln(z+1) + 2(z+1)^2\ln 2 \\
        &\qquad+ \pqty{z\ln z}\pqty{2z+2-\ln z}.
\end{split}
\end{equation*}
The denominator of $y'(z)$ is positive for $z>1$, so it suffices to show that
$n(z)$ is also positive. To do so, we repeatedly take derivatives until arrive
at an expression which is readily shown to be positive.

The first three derivatives of $n$ are
\begin{align*}
    n'(z) &= 4(z+1)\ln(\frac{2}{z+1}) + 4z\ln z - \pqty{\ln z}^2, \\
    n''(z) &= 4\ln(\frac{2z}{z+1}) - \frac{2\ln z}{z}, \\
    n'''(z) &= \frac{2}{z^2}\pqty{\ln z + \frac{z-1}{z+1}}.
\end{align*}
Since both $\ln(z)>1$ and $(z-1)/(z+1)>0$ for $z>1$, it follows that
$n'''(z)>0$ for all $z>1$. Since
\begin{equation*}
    n''(1)=n'(1)=n(1)=0,
\end{equation*}
it follows that $n''(z)$, $n'(z)$, and $n(z)$ are all positive for $z>1$. Thus,
$y'(z)>0$ for all $z>1$, as desired.
\end{proof}

\end{document}